\documentclass[12pt, onecolumn]{IEEEtran}
\IEEEoverridecommandlockouts
\usepackage[utf8]{inputenc} 
\usepackage[T1]{fontenc}
\usepackage{url}
\usepackage{ifthen}
\usepackage{cite}
\usepackage[cmex10]{amsmath} 
\usepackage{amssymb}
\usepackage{graphicx}
\usepackage{epstopdf}
\usepackage{epsfig}
\usepackage{multirow}
\usepackage{amsthm}
\usepackage{comment}
\usepackage{caption}
\usepackage{enumitem}
\usepackage{amsmath}
\usepackage{color}
\usepackage{bbm}
\usepackage{algorithm}
\usepackage{algpseudocode}
\usepackage{csquotes}
\usepackage{subfiles}
\usepackage{caption, multirow, makecell}
\usepackage{array}
\usepackage{caption}
\usepackage{subcaption}
\usepackage{afterpage}
\usepackage{dblfloatfix}
\usepackage{hyperref}
\usepackage{booktabs}
\usepackage{longtable}
\usepackage{blkarray}
\usepackage{adjustbox}
\newtheorem{thm}{Theorem}
\newtheorem{lem}{Lemma}
\newtheorem{prop}{Proposition}
\newtheorem{cor}{Corollary}
\newtheorem{example}{Example}

\newtheorem{defn}{Definition}

\newtheorem{cons}{Construction}

\newtheorem{rem}{Remark}

\allowdisplaybreaks[1]

\def\BibTeX{{\rm B\kern-.05em{\sc i\kern-.025em b}\kern-.08em
    T\kern-.1667em\lower.7ex\hbox{E}\kern-.125emX}}

\begin{document}

\title{Function-Correcting Lee-distance Codes for Symbol-Pair Read Channels}
\author{\IEEEauthorblockN{Hareesh K. and B. Sundar Rajan} \\
	\IEEEauthorblockA{\textit{Department of Electrical Communication Engineering,} \\
		\textit{Indian Institute of Science,}
		Bengaluru, India \\
		\{hareeshk, bsrajan\}@iisc.ac.in}
}

\maketitle

\begin{abstract}
\emph{Function-correcting codes} (FCCs) are designed to protect a specified function evaluation of messages at a higher level than the level of protection for messages against errors, while reducing the redundancy required for reliable communication. Among the various channel models considered for FCCs, the \emph{symbol-pair read channel} is motivated by high-density data storage systems in which symbols are read in overlapping pairs rather than individually. \emph{Phase-shift keying} (PSK) modulation is well-suited to such storage systems due to its bandwidth efficiency and robustness to noise. While FCCs for symbol-pair read channels have previously been studied under the Hamming metric, the \emph{Lee metric} provides a more appropriate error model for $q$-ary PSK modulation and is therefore a natural choice for $q$-ary symbol-pair read channels. We first generalize the symbol-pair Lee-distance, previously defined only over $\mathbb{Z}_4$, to $\mathbb{Z}_q$, where $q \ge 2$. Based on this generalized metric for $q=2^m$, where $m\ge 1$, we initiate the study of function-correcting Lee-distance codes for symbol-pair read channels and introduce the notion of \emph{function-correcting symbol-pair Lee-distance codes} (FCSPLCs). We investigate the redundancy requirements of FCSPLCs. To this end, we introduce \emph{irregular-pair Lee-distance codes} and derive lower and upper bounds on the optimal redundancy in terms of the shortest length of such codes. We derive \emph{Plotkin-type} and \emph{Gilbert--Varshamov-type} bounds on the \emph{optimal redundancy} through lower and upper bounds on the shortest length of irregular-pair Lee-distance codes. By specializing to bijective functions, these results yield corresponding \emph{Plotkin-type} and \emph{Gilbert--Varshamov-type} bounds for classical Lee metric codes for symbol-pair read channels over $\mathbb{Z}_q$. To the best of our knowledge, these are the first such bounds established for the Lee metric symbol-pair setting. We then specialize the general FCSPLC framework to specific functions, including \emph{pair-locally bounded functions}, the \emph{Pair-Lee weight function}, and the \emph{Pair-Lee weight distribution function}. For these functions, we present \emph{explicit constructions} of FCSPLCs and derive corresponding bounds on the optimal redundancy. Finally, we study FCSPLCs associated with \emph{linear functions} and derive a Plotkin-type lower bound on their optimal redundancy.
\end{abstract}

\begin{IEEEkeywords}
	Function-correcting codes, optimal redundancy, symbol-pair read channels, Lee-distance codes, Plotkin-type bound.
\end{IEEEkeywords}

\section{Introduction}

Conventional communication systems typically employ error-correcting codes (ECCs) to ensure reliable recovery of messages transmitted over noisy channels. In many scenarios, however, the objective of communication is not to recover the entire message but rather to determine the value of a specific function of the underlying data. Such scenarios arise in archival storage, distributed computing, and IoT sensor networks, where certain applications require only the function value rather than the complete message. Although recovery of the entire message enables computation of the desired function value, this approach can be inefficient when the message space is large while the function output space is relatively small. Motivated by this observation, Lenz \textit{et al.} \cite{Lenz} introduced the notion of \emph{function-correcting codes} (FCCs). FCCs ensure the reliable recovery of prescribed function values in the presence of channel errors while potentially requiring significantly less redundancy than conventional ECCs. Consequently, a central problem in the study of FCCs is to determine the minimum redundancy, referred to as the \emph{optimal redundancy}, required for reliable recovery of the desired function value. To address this problem, the authors introduced the notion of \emph{irregular-distance codes} and established a fundamental connection between the shortest lengths of such codes and the optimal redundancy of FCCs. Leveraging this connection, they derived general upper and lower bounds on the optimal redundancy. These results were subsequently specialized to several important function classes, including the Hamming weight and Hamming weight distribution functions, for which general FCC constructions were also presented.
   
   \subsection{Motivation}
    The study of FCCs under diverse channel models is important both for expanding their range of applications and for gaining a deeper understanding of their fundamental limits. One channel model of considerable practical relevance is the symbol-pair read channel \cite{Cassuto}, which arises in high-density data storage systems where physical limitations of the read process cause symbols to be read in overlapping pairs rather than individually. Since its introduction, this model has attracted significant attention in coding theory and has led to the development of symbol-pair codes capable of correcting pair-read errors. Motivated by these developments, FCCs for symbol-pair read channels were subsequently investigated under the Hamming metric \cite{XLC}. In many practical communication and storage systems, however, errors manifest as small perturbations of transmitted symbols, making the magnitude of symbol deviations an important consideration. This is particularly true for systems employing $q$-ary phase-shift keying (PSK) modulation, which is well suited to high-density storage applications due to its bandwidth efficiency and robustness to noise \cite{PAA}. In such settings, the Lee metric provides a more realistic measure of symbol discrepancy than the Hamming metric, as it captures the cyclic structure of the alphabet $\mathbb Z_q$ and more accurately reflects the error characteristics of $q$-ary PSK constellations. The Lee metric has therefore been widely used in coding theory for non-binary communication and storage systems where errors are naturally modeled as small-magnitude symbol perturbations \cite{CW,KN,AB}. For these reasons, the Lee metric is a natural choice for $q$-ary symbol-pair read channels, where $q=2^m$ and $m\ge1$. Although Lee metric codes for symbol-pair read channels have been studied over $\mathbb Z_4$ \cite{PAA}, a corresponding Lee-metric framework for general alphabets $\mathbb Z_q$, where $q \ge 2$, has not been investigated. Moreover, while FCCs have been studied separately under the Lee metric \cite{GA,HRS} and for symbol-pair read channels under the Hamming metric \cite{XLC}, function-correcting codes for symbol-pair read channels under the Lee metric remains unexplored. This motivates the development of a symbol-pair Lee metric over $\mathbb Z_q$ and the study of FCCs for $q$-ary symbol-pair read channels under this metric.
  \subsection{Related Works}
  Following the introduction of FCCs in \cite{Lenz}, the field has attracted considerable attention, leading to improved bounds on optimal redundancy, explicit code constructions, and several extensions and generalizations of the FCC framework. In \cite{PR}, the authors derived a lower bound on the optimal redundancy of FCCs using the notion of a function-dependent graph and obtained a simplified bound for linear functions. An upper bound for FCCs over binary fields, within a logarithmic factor of the known lower bound in \cite{Lenz}, was established in \cite{LS}. Improved lower and upper bounds on the optimal redundancy for the Hamming weight function were presented in \cite{ZXZG}. FCCs for a class of functions that assume only a limited number of values within a given Hamming ball, referred to as locally bounded functions, were studied in \cite{CSFH}. The FCC framework was further generalized in several directions. In \cite{CSFH2}, the authors introduced \emph{function-correcting partition codes}, which are defined directly on partitions of the message space rather than on specific target functions. Another generalization, termed \emph{FCCs with data protection}, was proposed in \cite{CSFH3}, where both the message and its function value are protected simultaneously. FCCs with data protection for the Hamming code membership function were subsequently investigated in \cite{SAARB,AAS}. Recently, data protection in function-correcting symbol-pair codes was addressed in \cite{ASAK}.
  
  While most of the aforementioned works study FCCs under the Hamming metric, the FCC framework has also been extended to several other channel models and distance measures. These include symbol-pair read channels \cite{XLC}, $b$-symbol read channels \cite{AAY,SR}, channels matched to homogeneous distance \cite{LL}, Lee-distance channels \cite{GA,HRS}, and insertion-deletion channels \cite{AA}. Among these, symbol-pair read channels have received particular attention. In \cite{XLC}, the authors initiated the study of FCCs for symbol-pair read channels by introducing function-correcting symbol-pair codes (FCSPCs). They investigated the optimal redundancy of FCSPCs through the framework of irregular-pair-distance codes and presented general constructions for such codes. Subsequently, FCCs for locally $(\lambda,\rho,b)$-functions over $b$-symbol read channels were investigated in \cite{FCBSC}. Among these extensions, FCCs for symbol-pair read channels and FCCs for Lee-distance channels have been studied independently. However, to the best of our knowledge, FCCs for symbol-pair read channels under the Lee metric have not been investigated. The present work addresses this gap by introducing and studying function-correcting Lee-distance codes for symbol-pair read channels.
  
  \subsection{Contributions}
  
  In this paper, we generalize the notion of symbol-pair Lee-distance from $\mathbb{Z}_4$ to $\mathbb{Z}_q$, where $q \ge 2$, thereby establishing a Lee-metric framework for $q$-ary symbol-pair read channels. Using  this framework for $q=2^m$, where $m\ge 1$, we introduce the notion of \emph{function-correcting symbol-pair Lee-distance codes} (FCSPLCs) and investigate their redundancy requirements. In particular, we derive lower and upper bounds on the optimal redundancy in terms of the shortest length of irregular-pair Lee-distance codes and specialize the FCSPLC framework to several important functions. We also present explicit constructions of FCSPLCs for these functions and study FCSPLCs associated with linear functions.
  
  The main contributions of this paper are summarized as follows:
  
  \begin{itemize}
  	\item We generalize the symbol-pair Lee-distance, previously defined only over $\mathbb{Z}_4$, to $\mathbb{Z}_q$,  where $q \ge 2$, prove that it defines a metric, and derive its relationship with the classical Lee-distance. This provides a Lee-metric framework for $q$-ary symbol-pair read channels.
  	
  	\item We develop a function-correcting coding framework for symbol-pair read channels over $\mathbb{Z}_q$, where $q=2^m$ and $m\ge 1$ under the Lee metric by introducing FCSPLCs and establishing their connection to irregular-pair Lee-distance codes.
  	
    \item Plotkin-type and Gilbert--Varshamov-type bounds on the optimal redundancy of FCSPLCs are derived. By specializing these results to bijective functions, we obtain corresponding Plotkin-type and Gilbert--Varshamov-type bounds for classical Lee metric codes for symbol-pair read channels over $\mathbb{Z}_q$. To the best of our knowledge, these constitute the first such bounds in the Lee metric symbol-pair setting.
  	
  	\item The proposed general FCSPLC framework is specialized to pair-locally bounded functions, the pair-Lee weight function, and the pair-Lee weight distribution function. Explicit FCSPLCs constructions are presented for these functions, together with corresponding upper bounds on the optimal redundancy.
  	
  	\item Finally, we study FCSPLCs associated with linear functions and derive a Plotkin-type lower bound on their optimal redundancy.
  \end{itemize}
 
   \subsection{Organization}
   
   The rest of the paper is organized as follows. Section~\ref{preliminaries} reviews the basic definitions and concepts related to Lee-distance codes over $\mathbb{Z}_q$ and symbol-pair Lee-distance codes over $\mathbb{Z}_4$. Section~\ref{section3} generalizes the notion of symbol-pair Lee distance from $\mathbb{Z}_4$ to $\mathbb{Z}_q$, where $q \ge 2$, and investigates its fundamental properties. Section~\ref{section4} introduces function-correcting symbol-pair Lee-distance codes (FCSPLCs), establishes general lower and upper bounds on their optimal redundancy through their connection to irregular-pair Lee-distance codes, and subsequently derives simplified bounds, a Plotkin-type lower bound, and a Gilbert--Varshamov-type upper bound. Section~\ref{section5} presents explicit constructions of FCSPLCs for pair-locally bounded functions, the pair-Lee weight function, and the pair-Lee weight distribution function, together with corresponding upper bounds on the optimal redundancy. Section~\ref{section6} derives a Plotkin-type lower bound on the optimal redundancy of FCSPLCs for linear functions. Finally, Section~\ref{conclusion} concludes the paper and outlines directions for future research.
   
   \noindent \textit{Notations:}
   $\mathbb{N}_0$ denotes the set of non-negative integers and $\mathbb{N}_{0}^{M\times M}$ represents the set of all  $M\times M$ matrices with non-negative integer entries. For a matrix $\boldsymbol{D}\in\mathbb{N}_{0}^{M\times M}$,  $[\boldsymbol{D}]_{ij}$ denotes its $(i,j)$th entry. For any two vectors $\boldsymbol{\mathbf{x},\mathbf{y}}\in \mathbb{Z}_q^n$, let $d_H(\boldsymbol{\mathbf{x},\mathbf{y}})$, $d_L(\boldsymbol{\mathbf{x},\mathbf{y}})$, and $d_p(\boldsymbol{\mathbf{x},\mathbf{y}})$ denote the Hamming distance, Lee distance, and symbol-pair Lee-distance between  $\mathbf{x}$ and $\mathbf{y}$, respectively. For any two real numbers $a$ and $b$, $\max\{a,b\}$ and $\min\{a,b\}$ denote the maximum and minimum of $a$ and $b$, respectively. For any integer $M$, \( [M]^+ \triangleq \max\{M,0\}\) and \( [M] \triangleq \{1,2,\ldots,M \}\). For positive integers $a$ and $b$, the shifted modulo operation is defined by $a \operatorname{smod} b=\big((a-1) \bmod b\big)+1$. Also $\{a\}^k \triangleq (\underbrace{a, a, \ldots, a}_{k\text{ times}})$, where the element $a$ is repeated $k$ times. 
   
\section{Preliminaries} \label{preliminaries}
In this section, we review the basic definitions and concepts related to the Lee-distance codes over $\mathbb Z_q$, and symbol-pair Lee-distance codes over $\mathbb Z_4$.
    \subsection{Lee-Distance Codes}
    We first recall the definitions of the Lee weight and Lee-distance over $\mathbb Z_q$, where $q \ge 2$.
    \begin{defn}[\!\!\cite{CW}]
    	The Lee weight of a symbol $x_{i} \in \mathbb{Z}_{q}$, i.e., $w_{L}(x_{i}) = \min(x_{i}, q - x_{i})$. The Lee weight of a vector $\boldsymbol{x} = (x_{1}, x_{2},\ldots, x_{n}) \in \mathbb{Z}_{q}^n$ is the sum of the  Lee weights of its components, i.e. $w_{L}(\boldsymbol{x}) = \sum_{i=1}^n w_{L}(x_{i})$.
    \end{defn}
    
    \begin{defn}[\!\!\cite{CW}]
    	 The Lee-distance between any two symbols  $x_{i}$ and $x_{j} \in \mathbb{Z}_{q}$, i.e., $d_{L}(x_{i},x_{j}) = \min(|x_{i} - x_{j}|, q - |x_{i} - x_{j}|)$ and the Lee-distance between two vectors $\boldsymbol{x} = (x_{1}, x_{2},\ldots, x_{n})$ and  $\boldsymbol{y} = (y_{1}, y_{2},\ldots, y_{n}) \in \mathbb{Z}_{q}^n$, i.e., $d_{L}(\boldsymbol{x},\boldsymbol{y}) = \sum_{i=1}^n \min(|x_{i} - y_{i}|, q - |x_{i} - y_{i}|)$. It can also be written as, $d_{L}(\boldsymbol{x},\boldsymbol{y}) = \sum_{i=1}^{n} w_{L} ((x_i - y_i) \bmod q)$.
    \end{defn}
    Let $\boldsymbol{x} = (x_{1}, x_{2},\ldots, x_{n}) \in \mathbb{Z}_{q}^n$, and $\boldsymbol{y} = (y_{1}, y_{2},\ldots, y_{n})$ is the result of at most \textit{t} errors from $\boldsymbol{x}$ if ${d_{L}(\boldsymbol{x}, \boldsymbol{y})} \le$ \textit{t}.
    The following lemma provides bounds on the Lee-distance in terms of the Hamming distance for vectors over $\mathbb{Z}_{q}^n$.
    \begin{lem}[\!\!\cite{CW}]
    	Let \( \boldsymbol{x}, \boldsymbol{y} \in \mathbb{Z}_q^n \), where \( q \geq 2 \). Then the following bounds hold:
    	\[
    	d_H(\boldsymbol{x}, \boldsymbol{y}) \leq d_L(\boldsymbol{x}, \boldsymbol{y}) \leq \left\lfloor \frac{q}{2} \right\rfloor d_H(\boldsymbol{x}, \boldsymbol{y}). 
    	\] 
    \end{lem}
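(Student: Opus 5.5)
The plan is to reduce the statement to a per-coordinate comparison and then sum over coordinates. By the definitions above,
\[
d_L(\boldsymbol{x},\boldsymbol{y})=\sum_{i=1}^n w_L\big((x_i-y_i)\bmod q\big),
\]
while $d_H(\boldsymbol{x},\boldsymbol{y})=\sum_{i=1}^n \mathbbm{1}[x_i\neq y_i]$. Both sides of each inequality are therefore sums of $n$ per-coordinate terms. It suffices to show that for every single symbol $z=(x_i-y_i)\bmod q\in\mathbb{Z}_q$,
\[
\mathbbm{1}[z\neq 0]\;\le\; w_L(z)\;\le\;\left\lfloor \frac{q}{2}\right\rfloor \mathbbm{1}[z\neq 0].
\]
Summing these over $i=1,\ldots,n$ gives both bounds of the lemma at once.

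For the per-symbol claim I will split into two cases.

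\emph{Case $z=0$.} Here $w_L(0)=\min(0,q)=0$, so all three quantities vanish and equality holds throughout.

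\emph{Case $z\in\{1,\ldots,q-1\}$.} Both $z$ and $q-z$ are positive integers, so their minimum $w_L(z)$ is an integer that is at least $1$. This gives the left inequality. For the right inequality, note that
\[
\min(z,q-z)\le \frac{z+(q-z)}{2}=\frac{q}{2}.
\]
Since $w_L(z)$ is an integer, this sharpens to $w_L(z)\le\lfloor q/2\rfloor$.

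I expect no real obstacle here. The only point needing care is the passage from $q/2$ to $\lfloor q/2\rfloor$ when $q$ is odd, and it follows directly from the integrality of the Lee weight. I will also remark briefly that the identification $d_L(x_i,y_i)=w_L((x_i-y_i)\bmod q)$ is the one stated in the definition of the Lee distance, so there is no dependence on how representatives are chosen.

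Optionally, I will note tightness. The lower bound is attained when every nonzero difference is $\pm1$. The upper bound is attained when every nonzero difference equals $\lfloor q/2\rfloor$.
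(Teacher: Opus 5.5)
Your proof is correct. For $z\neq 0$ you have $1\le w_L(z)\le\lfloor q/2\rfloor$, using integrality of $w_L(z)$ to turn $\min(z,q-z)\le q/2$ into $\le\lfloor q/2\rfloor$ for odd $q$, and $w_L(0)=0$; summing these per-symbol bounds over the $n$ coordinates gives both inequalities. The paper does not prove this lemma but only cites it from \cite{CW}. Your argument, which bounds each coordinate and then sums, is the standard one, and the paper uses the same pattern for its own Lemma~\ref{dpdl} comparing $d_p$ with $d_L$.
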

    The following lemma characterizes the error-correcting capability of Lee-distance codes in terms of their minimum Lee-distance.
    \begin{lem}[\!\!\cite{CW}]
    	A Lee-distance code with minimum Lee-distance \( d_L \) can correct up to \( t \) errors if and only if
    	\(
    	d_L \geq 2t + 1.
    	\)
    \end{lem}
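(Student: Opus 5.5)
The plan is to prove both directions through the standard ball-disjointness argument. We use only that $d_L$ is a metric on $\mathbb{Z}_q^n$: it is a sum of per-coordinate Lee distances, each of which is a metric on $\mathbb{Z}_q$. Fix a code $\mathcal{C}\subseteq\mathbb{Z}_q^n$ with minimum Lee-distance $d_L$. Correcting up to $t$ errors is taken to mean the following: for every received $\boldsymbol{y}$ with $d_L(\boldsymbol{c},\boldsymbol{y})\le t$ for some $\boldsymbol{c}\in\mathcal{C}$, the codeword $\boldsymbol{c}$ is uniquely determined. Equivalently, the Lee balls $B_L(\boldsymbol{c},t)=\{\boldsymbol{y}: d_L(\boldsymbol{c},\boldsymbol{y})\le t\}$ around distinct codewords are pairwise disjoint.

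For sufficiency, assume $d_L\ge 2t+1$ and suppose some $\boldsymbol{y}$ lies in $B_L(\boldsymbol{c},t)\cap B_L(\boldsymbol{c}',t)$ with $\boldsymbol{c}\neq\boldsymbol{c}'$. The triangle inequality gives $d_L(\boldsymbol{c},\boldsymbol{c}')\le 2t<d_L$, a contradiction. So the balls are disjoint, and minimum-Lee-distance decoding recovers $\boldsymbol{c}$.

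For necessity, assume the code corrects $t$ errors but has two codewords $\boldsymbol{c}\ne\boldsymbol{c}'$ with $D:=d_L(\boldsymbol{c},\boldsymbol{c}')\le 2t$. We construct a $\boldsymbol{y}$ within distance $t$ of both, which contradicts unique decodability. This construction is the main step needing care, because Lee distance is not simply a count of differing coordinates.

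We walk from $\boldsymbol{c}$ toward $\boldsymbol{c}'$ one unit step at a time. In each coordinate $i$, the Lee distance $w_L(c_i-c'_i)$ is realized by moving $c_i$ by $\pm1$ repeatedly along the shorter arc of the cycle $\mathbb{Z}_q$. Each such unit move changes $d_L(\cdot,\boldsymbol{c})$ by exactly $+1$ and $d_L(\cdot,\boldsymbol{c}')$ by exactly $-1$, since we stay on a geodesic in each coordinate. Concatenating these moves gives a path $\boldsymbol{c}=\boldsymbol{z}_0,\boldsymbol{z}_1,\ldots,\boldsymbol{z}_D=\boldsymbol{c}'$ with
\[
d_L(\boldsymbol{z}_k,\boldsymbol{c})=k,\qquad d_L(\boldsymbol{z}_k,\boldsymbol{c}')=D-k .
\]
Take $k=\lceil D/2\rceil$. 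Then $k\le t$ because $D\le 2t$, and $D-k=\lfloor D/2\rfloor\le t$. Hence $\boldsymbol{y}=\boldsymbol{z}_k$ lies in both balls, so a receiver seeing $\boldsymbol{y}$ cannot distinguish $\boldsymbol{c}$ from $\boldsymbol{c}'$. This contradiction forces $D\ge 2t+1$ for all pairs of distinct codewords, that is, $d_L\ge 2t+1$.

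The one point to verify is the geodesic property per coordinate. If $w_L(c_i-c'_i)=s$, there is a direction $\epsilon\in\{\pm1\}$ such that $c_i+j\epsilon$ satisfies $d_L(c_i+j\epsilon,\,c_i)=j$ and $d_L(c_i+j\epsilon,\,c'_i)=s-j$ for all $0\le j\le s$. This holds because $s\le\lfloor q/2\rfloor$, so neither partial arc wraps past its antipode.
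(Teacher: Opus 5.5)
The paper gives no proof of this lemma; it is quoted from \cite{CW} as a standard fact, so there is no in-paper argument to compare against. Your ball-disjointness proof is correct and complete. Sufficiency needs only the triangle inequality. For necessity, you correctly isolate and settle the one step where the Lee structure matters: $d_L$ is the path metric of the product of $q$-cycles $\mathbb{Z}_q^n$, so walking along per-coordinate shorter arcs yields the point $\boldsymbol{z}_{\lceil D/2\rceil}$, which lies at distances $\lceil D/2\rceil\le t$ and $\lfloor D/2\rfloor\le t$ from the two codewords.
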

    \subsection{Symbol-Pair Lee-Distance Codes over $\mathbb{Z}_4$}
    In this subsection, we recall the definitions of the symbol-pair Lee weight and symbol-pair Lee-distance over $\mathbb{Z}_4$, and review the relationship between the symbol-pair Lee-distance and the classical Lee-distance.
    
    Let $\mathbf{x}=(x_1, x_2, \ldots, x_n)$ be a vector in $\mathbb{Z}_4^n$. Then the corresponding symbol-pair vector of $\mathbf{x}$ is defined as $\pi(\mathbf{x})=\big((x_1, x_2), (x_2, x_3), \ldots, (x_{n-1}, x_n), (x_n, x_1)\big)$, where $x_{n+1}=x_1$.
    \begin{defn}[\!\!\cite{PAA}]
    	Let $\mathbf{x}=(x_1, x_2, \ldots, x_n) \in \mathbb{Z}_4^n$. The symbol-pair Lee weight of $\mathbf{x}$ is defined as $\mathrm{w}_p(\mathbf{x})=\mathrm{w}_L(\pi(\mathbf{x}))=\sum_{i=1}^{n} \max\{\min (x_i,4-x_i), \min (x_{i+1}, 4-x_{i+1})\}=\sum_{i=1}^{n} \max\{\mathrm{w}_L(x_i), \mathrm{w}_L(x_{i+1})\}$, where $w_L(x_i)=\min\{x_i,4-x_i\}$ and $x_{n+1}=x_1$.
    \end{defn}
    \begin{defn}[\!\!\cite{PAA}]
    	Let $\mathbf{x}=(x_1, x_2, \ldots, x_n)$ and $\mathbf{y}=(y_1, y_2, \ldots, y_n) \in \mathbb{Z}_4^n$. The symbol-pair Lee-distance between $\mathbf{x}$ and $\mathbf{y}$ is defined as $d_p(\mathbf{x}, \mathbf{y})=\mathrm{w}_p(\mathbf{x}-\mathbf{y})=\mathrm{w}_L\big(\pi(\mathbf{x})-\pi(\mathbf{y})\big)=\sum_{i=1}^{n} \max\{\mathrm{w}_L\big((x_i-y_i) \mod 4\big), \mathrm{w}_L\big((x_{i+1}-y_{i+1})\mod 4\big)\}$, or equivalently, $d_p(\mathbf{x}, \mathbf{y})=\sum_{i=1}^{n} \max\{d_L(x_i, y_i), d_L(x_{i+1}, y_{i+1})\}$, where $x_{n+1}=x_1$ and $y_{n+1}=y_1$.
    \end{defn}
    The following example illustrates the computation of the symbol-pair Lee weight and symbol-pair Lee-distance.
    \begin{example}
    	Consider the vector $\mathbf{x}=(1,2,0,3)\in\mathbb Z_4^4$. The corresponding symbol-pair vector is
    	$\pi(\mathbf{x})=\big((1,2),(2,0),(0,3),(3,1)\big)$. Since $w_L(1)=1,\, w_L(2)=2,\, w_L(0)=0,\, w_L(3)=1,$
    	the symbol-pair Lee weight of $\mathbf{x}$ is $w_p(\mathbf{x})
    		=\max\{1,2\}
    		+\max\{2,0\}
    		+\max\{0,1\}
    		+\max\{1,1\}
    		=2+2+1+1
    		=6$. Now, let $\mathbf{y}=(0,1,1,3)\in\mathbb Z_4^4$. Then $\mathbf{x}-\mathbf{y}
    	=(1,1,3,0)\pmod 4$ and $\pi(\mathbf{x})-\pi(\mathbf{y})=\big((1,1),(1,3),(3,0),(0,1)\big)$.
    	The corresponding symbol-pair Lee-distance between $\mathbf{x}$ and $\mathbf{y}$ is
    	$d_p(\mathbf{x},\mathbf{y})
    		=w_p(\mathbf{x}-\mathbf{y})
    		=\max\{1,1\}
    		+\max\{1,1\}
    		+\max\{1,0\}
    		+\max\{0,1\}
    		=1+1+1+1
    		=4$.
    \end{example} The relationship between the Lee-distance and the symbol-pair Lee-distance over $\mathbb Z_4$ is given by the following lemma.
    \begin{lem}[\!\!{\cite{PAA}}]
    Let \( \boldsymbol{x}, \boldsymbol{y} \in \mathbb{Z}_4^n \). Then the following bounds hold:
    	\[
    	d_L(\mathbf{x},\mathbf{y})
    	\le
    	d_p(\mathbf{x},\mathbf{y})
    	\le
    	2d_L(\mathbf{x},\mathbf{y}).
    	\]
    \end{lem}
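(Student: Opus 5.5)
The plan is to prove both inequalities termwise from the definition
\[
d_p(\mathbf{x},\mathbf{y})=\sum_{i=1}^{n}\max\{a_i,a_{i+1}\},\qquad a_i\triangleq d_L(x_i,y_i),\quad a_{n+1}=a_1 .
\]
The only facts needed are that every $a_i$ is non-negative and that the index shift $i\mapsto i+1$ is a bijection of $[n]$ under the cyclic convention. In particular, $\sum_{i=1}^n a_{i+1}=\sum_{i=1}^n a_i=d_L(\mathbf{x},\mathbf{y})$.

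For the lower bound, we use $\max\{a_i,a_{i+1}\}\ge a_i$ for each $i$. Summing over $i\in[n]$ gives $d_p(\mathbf{x},\mathbf{y})\ge \sum_{i=1}^n a_i=d_L(\mathbf{x},\mathbf{y})$.

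For the upper bound, we use the fact that for non-negative reals $\max\{a_i,a_{i+1}\}\le a_i+a_{i+1}$. Summing and applying the cyclic reindexing gives
\[
d_p(\mathbf{x},\mathbf{y})\le \sum_{i=1}^n a_i+\sum_{i=1}^n a_{i+1}=2\,d_L(\mathbf{x},\mathbf{y}).
\]

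There is no real obstacle here. The only point requiring care is the wrap-around term $(x_n,y_n),(x_1,y_1)$: the cyclic convention $x_{n+1}=x_1$ is exactly what makes the shifted sum equal to $d_L$ and not merely bounded by it. I would also check the degenerate case $n=1$. There the single term is $\max\{a_1,a_1\}=a_1$, so $d_p=d_L$ and both bounds hold trivially.

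The argument never uses $q=4$, so the same proof gives the bounds over any $\mathbb{Z}_q$. This anticipates the generalization announced for Section~\ref{section3}.
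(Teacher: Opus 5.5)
Your proof is correct and is the same termwise argument the paper gives for its $\mathbb{Z}_q$ generalization (Lemma~\ref{dpdl}): bound $\max\{a_i,a_{i+1}\}$ below by $a_i$ and above by $a_i+a_{i+1}$, then sum with the cyclic reindexing. The paper only cites \cite{PAA} for this $\mathbb{Z}_4$ statement and notes that the original proof there is case-by-case, so your remark that $q=4$ is never used is the same point the paper makes.
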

\section{Symbol-Pair Lee-Distance Codes over $\mathbb{Z}_q$} \label{section3}
The Lee-metric framework for symbol-pair read channels has previously been studied only over $\mathbb{Z}_4$. In this section, we generalize this notion to $\mathbb{Z}_q$, where $q \ge 2$. We define the symbol-pair Lee weight and symbol-pair Lee-distance over $\mathbb{Z}_q$, prove that the resulting distance function is a metric, and establish its relationship with the classical Lee-distance. This generalization provides a Lee-metric framework for $q$-ary symbol-pair read channels.

Let $\mathbf{x}=(x_1, x_2, \ldots, x_n)$ be a vector in $\mathbb{Z}_q^n$. Then the corresponding symbol-pair vector of $\mathbf{x}$ is defined as $\pi(\mathbf{x})=\big((x_1, x_2), (x_2, x_3), \ldots, (x_{n-1}, x_n), (x_n, x_1)\big)$, where $x_{n+1}=x_1$.
\begin{defn}
	 Let $\mathbf{x}=(x_1, x_2, \ldots, x_n) \in \mathbb{Z}_q^n$. The symbol-pair Lee weight of $\mathbf{x}$ is defined as $\mathrm{w}_p(\mathbf{x})=\mathrm{w}_L(\pi(\mathbf{x}))=\sum_{i=1}^{n} \max\{\min (x_i,q-x_i), \min (x_{i+1}, q-x_{i+1})\}=\sum_{i=1}^{n} \max\{\mathrm{w}_L(x_i), \mathrm{w}_L(x_{i+1})\}$, where $w_L(x_i)=\min\{x_i,q-x_i\}$ and $x_{n+1}=x_1$.
\end{defn}
\begin{defn}
	 Let $\mathbf{x}=(x_1, x_2, \ldots, x_n)$ and $\mathbf{y}=(y_1, y_2, \ldots, y_n) \in \mathbb{Z}_q^n$. The symbol-pair Lee-distance between $\mathbf{x}$ and $\mathbf{y}$ is defined as $d_p(\mathbf{x}, \mathbf{y})=\mathrm{w}_p(\mathbf{x}-\mathbf{y})=\mathrm{w}_L\big(\pi(\mathbf{x})-\pi(\mathbf{y})\big)=\sum_{i=1}^{n} \max\{\mathrm{w}_L\big((x_i-y_i) \mod q\big), \mathrm{w}_L\big((x_{i+1}-y_{i+1})\mod q\big)\}$, or equivalently, $d_p(\mathbf{x}, \mathbf{y})=\sum_{i=1}^{n} \max\{d_L(x_i, y_i), d_L(x_{i+1}, y_{i+1})\}$, where $x_{n+1}=x_1$ and $y_{n+1}=y_1$.
\end{defn}
The symbol-pair Lee-distance was previously shown to be a metric only over $\mathbb{Z}_4$ in \cite{PAA}. Since the present work extends the symbol-pair Lee-distance to arbitrary alphabets $\mathbb{Z}_q$, where $q \ge 2$, it is necessary to verify that the generalized distance function also satisfies the metric axioms. The following theorem establishes this result. 
\begin{thm}
	The symbol-pair Lee-distance $d_p:\mathbb Z_q^n\times\mathbb Z_q^n\to\mathbb N_0$ is a metric on $\mathbb Z_q^n$.
\end{thm}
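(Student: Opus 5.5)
We verify the three metric axioms directly from the definition $d_p(\mathbf{x},\mathbf{y})=\sum_{i=1}^{n}\max\{d_L(x_i,y_i),d_L(x_{i+1},y_{i+1})\}$ (indices mod $n$). The plan is to reduce everything to the fact that the scalar Lee distance $d_L$ is a metric on $\mathbb{Z}_q$. This is classical; one quick argument is that $d_L(a,b)$ is the graph distance between $a$ and $b$ on the $q$-cycle.

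First, non-negativity and integrality are clear, since each term is a maximum of non-negative integers, so $d_p$ does map into $\mathbb{N}_0$. For the identity of indiscernibles, if $\mathbf{x}=\mathbf{y}$ then every $d_L(x_i,y_i)=0$, hence $d_p=0$. Conversely, if $d_p(\mathbf{x},\mathbf{y})=0$, then every summand vanishes, so $\max\{d_L(x_i,y_i),d_L(x_{i+1},y_{i+1})\}=0$ for all $i$. This forces $d_L(x_i,y_i)=0$, i.e.\ $x_i=y_i$, for every $i$. Symmetry follows termwise from $d_L(x_i,y_i)=d_L(y_i,x_i)$.

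The only substantive step is the triangle inequality $d_p(\mathbf{x},\mathbf{z})\le d_p(\mathbf{x},\mathbf{y})+d_p(\mathbf{y},\mathbf{z})$, which we prove summand by summand. Write $a_i=d_L(x_i,y_i)$, $b_i=d_L(y_i,z_i)$ and $c_i=d_L(x_i,z_i)$, so that the scalar triangle inequality gives $c_i\le a_i+b_i$. Then
\[
\max\{c_i,c_{i+1}\}\le\max\{a_i+b_i,\,a_{i+1}+b_{i+1}\}\le\max\{a_i,a_{i+1}\}+\max\{b_i,b_{i+1}\},
\]
where the last step uses $a_j\le\max\{a_i,a_{i+1}\}$ and $b_j\le\max\{b_i,b_{i+1}\}$ for $j\in\{i,i+1\}$. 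Summing over $i=1,\ldots,n$ gives the claim.

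The main point needing care is the scalar triangle inequality for $d_L$ on $\mathbb{Z}_q$ for general $q$. To see it, note that $w_L(u+v)\le w_L(u)+w_L(v)$, because $w_L(u)$ is the distance from $u$ to $0$ on the cycle. Since $d_L(a,b)=w_L(a-b)$, taking $u=x_i-y_i$ and $v=y_i-z_i$ yields $c_i\le a_i+b_i$.
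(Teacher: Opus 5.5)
Your proof is correct and follows the paper's argument essentially step for step. Both check the axioms termwise, and both get the triangle inequality from $c_j\le a_j+b_j$ and $\max\{a_i+b_i,a_{i+1}+b_{i+1}\}\le\max\{a_i,a_{i+1}\}+\max\{b_i,b_{i+1}\}$ before summing over $i$; the only addition is your justification of the scalar Lee triangle inequality via $w_L(u+v)\le w_L(u)+w_L(v)$ on the $q$-cycle, which the paper takes as known.
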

\begin{IEEEproof}
Let $\mathbf{x},\mathbf{y},\mathbf{z}\in\mathbb Z_q^n$. We verify the four metric axioms.

(i) \textit{Non-negativity}:
Since the Lee-distance is non-negative, $d_L(x_i,y_i)\ge 0$ for every $i\in[n]$. Therefore,
$\max\{d_L(x_i,y_i),d_L(x_{i+1},y_{i+1})\}\ge 0$ for every $i\in[n]$. Hence,
\[
d_p(\mathbf{x},\mathbf{y})
=
\sum_{i=1}^{n}
\max\{d_L(x_i,y_i),d_L(x_{i+1},y_{i+1})\}
\ge 0.
\]

(ii) \textit{Identity of indiscernibles}:
Suppose that $d_p(\mathbf{x},\mathbf{y})=0$. Since each term in the sum defining $d_p(\mathbf{x},\mathbf{y})$ is non-negative, it follows that
$\max\{d_L(x_i,y_i),d_L(x_{i+1},y_{i+1})\}=0$ for all $i\in[n]$. Since the maximum of two non-negative numbers is zero only when both numbers are zero, we have $d_L(x_i,y_i)=0$ for every $i\in[n]$. As $d_L$ is a metric on $\mathbb Z_q$, it follows that $x_i=y_i$ for all $i\in[n]$, and hence $\mathbf{x}=\mathbf{y}$. Conversely, if $\mathbf{x}=\mathbf{y}$, then $d_L(x_i,y_i)=0$ for all $i\in[n]$, and therefore $d_p(\mathbf{x},\mathbf{y})=0$. Thus,
\[
d_p(\mathbf{x},\mathbf{y})=0
\iff
\mathbf{x}=\mathbf{y}.
\]

(iii) \textit{Symmetry}:
Since the Lee-distance is symmetric, $d_L(x_i,y_i)=d_L(y_i,x_i)$ for all $i\in[n]$. Hence,
\[
\begin{aligned}
	d_p(\mathbf{x},\mathbf{y})
	&=
	\sum_{i=1}^{n}
	\max\{d_L(x_i,y_i),d_L(x_{i+1},y_{i+1})\} \\
	&=
	\sum_{i=1}^{n}
	\max\{d_L(y_i,x_i),d_L(y_{i+1},x_{i+1})\} \\
	&=
	d_p(\mathbf{y},\mathbf{x}).
\end{aligned}
\]

(iv) \textit{Triangle inequality}:
For each $i\in[n]$, the Lee-distance satisfies $d_L(x_i,z_i)
\le
d_L(x_i,y_i)+d_L(y_i,z_i)$
and $d_L(x_{i+1},z_{i+1})
\le
d_L(x_{i+1},y_{i+1})
+
d_L(y_{i+1},z_{i+1})$. Using the elementary inequality $\max\{a+b,c+d\}
\le
\max\{a,c\}
+
\max\{b,d\}$,
which follows since $a+b \le \max\{a,c\}+\max\{b,d\}$ and $c+d \le \max\{a,c\}+\max\{b,d\}$. Hence, we have
\[
\begin{aligned}
	\max\{d_L(x_i,z_i),d_L(x_{i+1},z_{i+1})\}
	&\le
	\max\{
	d_L(x_i,y_i)+d_L(y_i,z_i),
	d_L(x_{i+1},y_{i+1})+d_L(y_{i+1},z_{i+1})
	\} \\
	&\le
	\max\{d_L(x_i,y_i),d_L(x_{i+1},y_{i+1})\}
	+
	\max\{d_L(y_i,z_i),d_L(y_{i+1},z_{i+1})\}.
\end{aligned}
\]
Summing over all $i\in[n]$ yields
\[
\begin{aligned}
	d_p(\mathbf{x},\mathbf{z})
	&=
	\sum_{i=1}^{n}
	\max\{d_L(x_i,z_i),d_L(x_{i+1},z_{i+1})\} \\
	&\le
	\sum_{i=1}^{n}
	\max\{d_L(x_i,y_i),d_L(x_{i+1},y_{i+1})\} \\
	&\quad+
	\sum_{i=1}^{n}
	\max\{d_L(y_i,z_i),d_L(y_{i+1},z_{i+1})\} \\
	&=
	d_p(\mathbf{x},\mathbf{y})
	+
	d_p(\mathbf{y},\mathbf{z}).
\end{aligned}
\]
Therefore, the triangle inequality holds. Since $d_p$ satisfies \textit{non-negativity}, \textit{identity of indiscernibles}, \textit{symmetry}, and the \textit{triangle inequality}, it follows that $d_p$ is a metric on $\mathbb Z_q^n$.
\end{IEEEproof}
While a relationship between the classical Lee-distance and the symbol-pair Lee-distance was established in \cite{PAA} for codes over $\mathbb{Z}_4$, the proof relies on a case-by-case analysis of symbol occurrences specific to the four-element alphabet. The following lemma extends this relationship to $\mathbb{Z}_q$, where $q \ge 2$, based on the generalized definition of the symbol-pair Lee-distance and admits a simple proof that is independent of the alphabet size.  
\begin{lem} \label{dpdl}
	Let $\mathbf{x},\mathbf{y}\in\mathbb Z_q^n$. Then the following bounds hold:
	\[
	d_L(\mathbf{x},\mathbf{y})
	\le
	d_p(\mathbf{x},\mathbf{y})
	\le
	2d_L(\mathbf{x},\mathbf{y}).
	\]
\end{lem}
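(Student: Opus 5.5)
Write $a_i = d_L(x_i,y_i)$ for $i\in[n]$, with the cyclic convention $a_{n+1}=a_1$. Then $d_L(\mathbf{x},\mathbf{y})=\sum_{i=1}^n a_i$ and $d_p(\mathbf{x},\mathbf{y})=\sum_{i=1}^n \max\{a_i,a_{i+1}\}$. In this form both inequalities are statements about sums of non-negative reals, and the alphabet size $q$ plays no role. The plan is to bound each summand $\max\{a_i,a_{i+1}\}$ from below and above by elementary inequalities, sum over $i$, and then reindex the cyclic sum.

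\emph{Lower bound.} Each term satisfies $\max\{a_i,a_{i+1}\}\ge a_i$. Summing over $i\in[n]$ gives
\[
d_p(\mathbf{x},\mathbf{y})\ge \sum_{i=1}^n a_i=d_L(\mathbf{x},\mathbf{y}).
\]

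\emph{Upper bound.} Since all $a_i\ge 0$, we have $\max\{a_i,a_{i+1}\}\le a_i+a_{i+1}$. Summing gives
\[
d_p(\mathbf{x},\mathbf{y})\le \sum_{i=1}^n a_i+\sum_{i=1}^n a_{i+1}.
\]
Because the indices are cyclic ($a_{n+1}=a_1$), the second sum is just a reindexing of $\sum_{i=1}^n a_i$. Hence the right-hand side equals $2d_L(\mathbf{x},\mathbf{y})$.

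There is no real obstacle. The only point needing care is the reindexing in the upper bound, which relies on the wrap-around convention $x_{n+1}=x_1$, $y_{n+1}=y_1$ in the definition of $\pi$.
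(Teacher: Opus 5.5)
Your proposal is correct and follows the paper's proof essentially verbatim. You set $a_i=d_L(x_i,y_i)$, use $\max\{a_i,a_{i+1}\}\ge a_i$ for the lower bound, and use $\max\{a_i,a_{i+1}\}\le a_i+a_{i+1}$ with the cyclic reindexing $a_{n+1}=a_1$ for the upper bound, exactly as the paper does.
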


\begin{IEEEproof}
	Let $a_i=d_L(x_i,y_i), \, i\in[n]$.
	Then $d_L(\mathbf{x},\mathbf{y})
	=
	\sum_{i=1}^{n} a_i$,
	and $d_p(\mathbf{x},\mathbf{y})
	=
	\sum_{i=1}^{n}
	\max\{a_i,a_{i+1}\}$,
	where $a_{n+1}=a_1$. To prove the lower bound, note that $\max\{a_i,a_{i+1}\}\ge a_i$
	for every $i\in[n]$. Summing over all $i$ gives
	\[
	d_p(\mathbf{x},\mathbf{y})
	=
	\sum_{i=1}^{n}
	\max\{a_i,a_{i+1}\}
	\ge
	\sum_{i=1}^{n} a_i
	=
	d_L(\mathbf{x},\mathbf{y}).
	\]
	
	For the upper bound, observe that $\max\{a_i,a_{i+1}\}
	\le
	a_i+a_{i+1}$
	for every $i\in[n]$. Therefore,
	\[
	\begin{aligned}
		d_p(\mathbf{x},\mathbf{y})
		&=
		\sum_{i=1}^{n}
		\max\{a_i,a_{i+1}\} \\
		&\le
		\sum_{i=1}^{n}
		(a_i+a_{i+1}) \\
		&=
		\sum_{i=1}^{n} a_i
		+
		\sum_{i=1}^{n} a_{i+1}.
	\end{aligned}
	\]
	Since $a_{n+1}=a_1$, we have
	\[
	\sum_{i=1}^{n} a_{i+1}
	=
	\sum_{i=1}^{n} a_i,
	\]
	and hence
	\[
	d_p(\mathbf{x},\mathbf{y})
	\le
	2\sum_{i=1}^{n} a_i
	=
	2d_L(\mathbf{x},\mathbf{y}).
	\]
	
	Combining the two inequalities yields
	\[
	d_L(\mathbf{x},\mathbf{y})
	\le
	d_p(\mathbf{x},\mathbf{y})
	\le
	2d_L(\mathbf{x},\mathbf{y}).
	\]
\end{IEEEproof}
	The lower bound in Lemma~\ref{dpdl} is attained if and only if $d_L(x_1,y_1)=d_L(x_2,y_2)=\cdots=d_L(x_n,y_n)$.
	Let $a_i=d_L(x_i,y_i)$ for $i\in[n]$. Then $d_p(\mathbf{x},\mathbf{y})
	=
	d_L(\mathbf{x},\mathbf{y})$
	if and only if $\max\{a_i,a_{i+1}\}=a_i,
	\, i\in[n],$
	where $a_{n+1}=a_1$. Equivalently, $\, a_i\ge a_{i+1},
	\, i\in[n]$.
	Since $a_{n+1}=a_1$, we obtain $a_1\ge a_2\ge \cdots \ge a_n\ge a_1,$
	which holds if and only if $a_1=a_2=\cdots=a_n$. 
	Therefore, $d_L(x_1,y_1)=d_L(x_2,y_2)=\cdots=d_L(x_n,y_n)$.
 The following definition formalizes the notion of \(t\)-pair errors under the Lee metric.
\begin{defn}
	Let $\mathbf{x}=(x_1,x_2,\ldots,x_n)\in\mathbb Z_q^n$. A symbol-pair vector $$\overset{\leftrightarrow}{\mathbf{y}}=\big((\overset{\leftarrow}{y_1}, \overset{\rightarrow}{y_1}),\ldots, (\overset{\leftarrow}{y_n}, \overset{\rightarrow}{y_n})\big)\in (\mathbb Z_q\times \mathbb Z_q)^n$$ is the result of at most \(t\) symbol-pair errors from
    $\mathbf{x}$ if $d_p\bigl(\pi(\mathbf{x}),\overset{\leftrightarrow}{\mathbf{y}}\bigr)\le t$,
where \(d_p\) denotes the symbol-pair Lee-distance, and
\(\overset{\leftarrow}{y_i}\) and \(\overset{\rightarrow}{y_i}\) denote the left and right symbols, respectively, of the \(i\)th symbol pair in \(\overset{\leftrightarrow}{\mathbf{y}}\), for \(i\in[n]\).
\end{defn} Throughout this paper, $t$ denotes the maximum number of symbol-pair Lee errors or simply pair errors.
The following lemma characterizes the error-correcting capability of Lee-distance codes for symbol-pair read channels in terms of their minimum symbol-pair Lee-distance.
\begin{lem}
	A Lee-distance code for symbol-pair read channels with minimum symbol-pair Lee-distance \( d_p \) can correct up to \( t \) symbol-pair errors if and only if
	\(
	d_p \geq 2t + 1.
	\)
\end{lem}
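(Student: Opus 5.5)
We prove the two directions separately, using that $d_p$ is a metric by the preceding theorem. The natural setting is the one fixed by the definition of $t$-pair errors. The received word $\overset{\leftrightarrow}{\mathbf{y}}$ lives in $(\mathbb Z_q\times\mathbb Z_q)^n$. We extend $d_p$ to this space by
\[
d_p(\pi(\mathbf{x}),\overset{\leftrightarrow}{\mathbf{y}})=\sum_{i=1}^n \max\{d_L(x_i,\overset{\leftarrow}{y_i}),\,d_L(x_{i+1},\overset{\rightarrow}{y_i})\}.
\]
For any two pair-vectors we use the analogous coordinatewise formula, pairing the left entries and the right entries. The proof of the previous theorem applies verbatim coordinate by coordinate, so this extended $d_p$ is a metric on $(\mathbb Z_q\times\mathbb Z_q)^n$, and on $\pi(\mathbb Z_q^n)$ it agrees with the original $d_p$. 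Correcting $t$ pair errors means the following: for every pair of distinct codewords $\mathbf{c}\neq\mathbf{c}'$, there is no $\overset{\leftrightarrow}{\mathbf{y}}$ with $d_p(\pi(\mathbf{c}),\overset{\leftrightarrow}{\mathbf{y}})\le t$ and $d_p(\pi(\mathbf{c}'),\overset{\leftrightarrow}{\mathbf{y}})\le t$. In other words, the radius-$t$ balls around the codewords are disjoint, and minimum-distance decoding then succeeds.

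\emph{Sufficiency.} Assume $d_p\ge 2t+1$. If some $\overset{\leftrightarrow}{\mathbf{y}}$ lay within distance $t$ of both $\pi(\mathbf{c})$ and $\pi(\mathbf{c}')$, the triangle inequality would give $d_p(\mathbf{c},\mathbf{c}')\le 2t$, a contradiction. So the balls are disjoint, and decoding $\overset{\leftrightarrow}{\mathbf{y}}$ to the unique codeword within distance $t$ recovers the transmitted word.

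\emph{Necessity.} This is the only step needing care. Suppose there are codewords with $D=d_p(\mathbf{c},\mathbf{c}')\le 2t$. We must build a pair-vector $\overset{\leftrightarrow}{\mathbf{y}}$ that is within $t$ of both. Let $\pi(\mathbf{c})=(u_1,\dots,u_n)$ and $\pi(\mathbf{c}')=(u'_1,\dots,u'_n)$, where each $u_i\in\mathbb Z_q^2$. Set $\delta_i=\max\{d_L(c_i,c'_i),d_L(c_{i+1},c'_{i+1})\}$, so that $\sum_i\delta_i=D$.

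The plan is to walk from $\pi(\mathbf{c})$ to $\pi(\mathbf{c}')$ along a path of unit steps that stays in the pair space. Since pair-vectors need not be consistent, we may change each pair independently. In pair $i$, we move the left and right symbols simultaneously, each by $\pm1$ along a shortest Lee path, until both reach their targets. This takes exactly $\delta_i$ steps, and each step increases the distance to $\pi(\mathbf{c})$ by exactly one and decreases the distance to $\pi(\mathbf{c}')$ by exactly one. The point is that the max of the two coordinate Lee distances changes by one per simultaneous step. Concatenating over all $i$ gives intermediate points $\overset{\leftrightarrow}{\mathbf{y}}$ at every distance $k\in\{0,\dots,D\}$ from $\pi(\mathbf{c})$, each at distance $D-k$ from $\pi(\mathbf{c}')$. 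Choosing $k=\lceil D/2\rceil\le t$ gives $D-k\le t$ as well. The received word $\overset{\leftrightarrow}{\mathbf{y}}$ can therefore arise from either codeword with at most $t$ pair errors, and no decoder can correct $t$ errors.

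The main obstacle is verifying the per-pair step claim. Let $a=d_L(c_i,c'_i)$ and $b=d_L(c_{i+1},c'_{i+1})$, with $a\le b$ say. Move both coordinates for the first $a$ steps, then only the right coordinate for the remaining $b-a$ steps. After $s$ steps the distance to $u_i$ is $s$ and the distance to $u'_i$ is $b-s$. This follows because moving along a shortest Lee path on the cycle $\mathbb Z_q$ changes the Lee distance to both endpoints by exactly one per step. Hence the contradiction, and $d_p\ge 2t+1$ is necessary.
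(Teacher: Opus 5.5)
The paper gives no proof of this lemma. It states it as the symbol-pair analogue of the classical Lee-metric fact recalled in the preliminaries. Your argument is correct and fills that gap.

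\emph{Sufficiency.} This is the usual ball-disjointness argument via the triangle inequality, and it is fine.

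\emph{Necessity.} Your construction is sound. The per-pair schedule (move both symbols for $\min\{a,b\}$ steps, then only the farther one) keeps the distances of pair $i$ to $u_i$ and $u'_i$ equal to $s$ and $\delta_i-s$. This holds because a point on a shortest Lee path in $\mathbb{Z}_q$ at step $s$ is at distances $s$ and $a-s$ from the endpoints. Concatenating over the pairs then gives a received word at distances $\lceil D/2\rceil$ and $\lfloor D/2\rfloor$ from the two codewords, both at most $t$.

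Your proof makes two points explicit that the paper leaves implicit, and both are worth keeping.

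\emph{The metric on pair space.} The error model measures $d_p(\pi(\mathbf{x}),\overset{\leftrightarrow}{\mathbf{y}})$ with $\overset{\leftrightarrow}{\mathbf{y}}\in(\mathbb{Z}_q\times\mathbb{Z}_q)^n$. The paper, however, only proves that $d_p$ is a metric on $\mathbb{Z}_q^n$. Your coordinatewise extension and its triangle inequality are what the sufficiency direction actually needs.

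\emph{Inconsistent pair reads.} Necessity genuinely depends on allowing received words that are not of the form $\pi(\mathbf{z})$. If received words were restricted to $\pi(\mathbb{Z}_q^n)$, then for $n\ge 2$ no nonzero error pattern has pair-Lee weight $1$. For example, the code $\{\mathbf{0},(1,0,\ldots,0)\}$ has $d_p=2$, yet it would trivially correct one pair error, and the ``only if'' direction would fail. Your remark that pairs may be changed independently is exactly the needed point.

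One small wording issue: your first description, moving both symbols ``until both reach their targets,'' is slightly loose, since one symbol stops early. The explicit schedule in your last paragraph resolves this.
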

\section{General Redundancy Bounds for FCSPLCs} \label{section4}
The generalization of the symbol-pair Lee-distance from $\mathbb{Z}_4$ to $\mathbb{Z}_q$, where $q \ge 2$, provides a Lee-metric framework for $q$-ary symbol-pair read channels. Using this framework for $q=2^m$, where $m\ge 1$, we introduce function-correcting symbol-pair Lee-distance codes (FCSPLCs) and investigate their redundancy requirements through their connection to irregular-pair Lee-distance codes. We derive general lower and upper bounds on the optimal redundancy of FCSPLCs in terms of the shortest length of irregular-pair Lee-distance codes, and subsequently obtain simplified bounds that are easier to evaluate. Finally, by establishing lower and upper bounds on the achievable length of irregular-pair Lee-distance codes, we derive Plotkin-type and Gilbert--Varshamov-type bounds on the optimal redundancy of FCSPLCs. For bijective functions, these results yield corresponding Plotkin-type and Gilbert--Varshamov-type bounds for classical Lee-distance codes over symbol-pair read channels.
    \subsection{FCSPLCs and Irregular-Pair Lee-distance Codes}
    Let $\mathbf{u}\in\mathbb Z_q^k$ be a message and let
    $f:\mathbb Z_q^k\rightarrow \operatorname{Im}(f) \triangleq \{ f(\mathbf{u}): \mathbf{u} \in \mathbb Z_q^k\}$ be a function computed on $\mathbf{u}$, where
    $E=|\operatorname{Im}(f)|\le q^k$ is the expressiveness of $f$.
    The message is encoded via an encoding function
    \[
    \mathrm{Enc}:\mathbb Z_q^k\rightarrow \mathbb Z_q^{k+r},
    \qquad
    \mathrm{Enc}(\mathbf{u})
    =
    \big(\mathbf{u},\mathbf{p}(\mathbf{u})\big),
    \]
    where $\mathbf{p}(\mathbf{u})\in\mathbb Z_q^r$ is the redundancy
    vector. The resulting codeword $\mathrm{Enc}(\mathbf{u})$ is transmitted over a symbol-pair
    read channel, resulting in a received vector 
    $\overset{\leftrightarrow}{\mathbf{y}}\in(\mathbb Z_q\times\mathbb Z_q)^{k+r}$ satisfying
    $d_p(\pi(\mathrm{Enc}(\mathbf{u})),\overset{\leftrightarrow}{\mathbf{y}})\le t$.
    The formal definition of an FCSPLC is given below.
    \begin{defn} \label{FCSPLC}
    	An encoding function
    	\(\mathrm{Enc}:\mathbb Z_q^k\to \mathbb Z_q^{k+r}\), $\mathrm{Enc}(\mathbf{u})
    	=
    	\big(\mathbf{u},\mathbf{p}(\mathbf{u})\big)$ defines a function-correcting symbol-pair Lee-distance code (FCSPLC) for the function \(f:\mathbb Z_q^k\to \operatorname{Im}(f)\) if for all
    	\(\mathbf u,\mathbf v\in\mathbb Z_q^k\) satisfying
    	\(f(\mathbf u)\neq f(\mathbf v)\), we have 
    	\[
    	d_p\big(\mathrm{Enc}(\mathbf u),
    	\mathrm{Enc}(\mathbf v)\big)
    	\ge 2t+1.
    	\]
    \end{defn}
    By definition~\ref{FCSPLC}, for any received symbol-pair vector $\overset{\leftrightarrow}{\mathbf{y}}$ obtained from $\mathrm{Enc}(\mathbf{u})$ through at most $t$-pair errors, the receiver can uniquely recover $f(\mathbf{u})$, provided that both the function $f$ and the encoding function  $\mathrm{Enc}$ are known.
    Given an information vector $\mathbf{u} \in \mathbb{Z}_q^k$, the encoder computes the parity vector $\mathbf{p}(\mathbf{u})\in\mathbb Z_q^r$ according to the prescribed construction and outputs the systematic codeword  $\mathrm{Enc}(\mathbf{u})=\big(\mathbf{u},\mathbf{p}(\mathbf{u})\big) \in \mathbb Z_q^{k+r}$. The parity vector is constructed such that for any pair $(\mathbf{u}, \mathbf{v})$ with \(f(\mathbf u)\neq f(\mathbf v)\), the corresponding codewords satisfy $d_p\big(\mathrm{Enc}(\mathbf u),
    \mathrm{Enc}(\mathbf v)\big)\ge 2t+1$.
    
    Let $\overset{\leftrightarrow}{\mathbf{y}}$ be the received symbol-pair vector over a $q$-ary symmetric channel under the Lee metric. The decoder performs minimum symbol-pair Lee-distance decoding, i.e., $\hat{\mathbf{u}} = \displaystyle \arg\min_{\mathbf{u} \in \mathbb{Z}_q^k}
    d_p\big(\overset{\leftrightarrow}{\mathbf{y}}, \mathrm{Enc}(\mathbf{u})\big)$. The decoder then outputs the estimated function value $f(\hat{\mathbf{u}})$. Since the code satisfies $d_p\big(\mathrm{Enc}(\mathbf{u}), \mathrm{Enc}(\mathbf{v})\big) \geq 2t + 1$ whenever  $f(\mathbf{u}) \neq f(\mathbf{v})$, it follows that for any pair error vector of Lee weight at most $t$, the minimum symbol-pair Lee-distance decoder uniquely determines the correct function value. Next, we define the term \emph{optimal redundancy} of an FCSPLC for the function $f$.
    \begin{defn}
    	The optimal redundancy \( r_{p}^f(q, k, t) \) is defined as the smallest integer \( r \) for which there exists an FCSPLC with an encoding function \( \mathrm{Enc} : \mathbb{Z}_q^k \to \mathbb{Z}_q^{k+r} \) that enables recovery of \( f(\mathbf{u}) \) from a received symbol-pair vector $\overset{\leftrightarrow}{\mathbf{y}}$ satisfying $d_p\big(\overset{\leftrightarrow}{\mathbf{y}}, \mathrm{Enc}(\mathbf{u})\big) \le t$.
    \end{defn}
    We establish a connection between FCSPLCs and irregular-pair Lee-distance codes, which forms the foundation for deriving bounds on the optimal redundancy. Towards this, we introduce two pair Lee-distance matrices associated with a function $f$.
    \begin{defn}
    	Let $\mathbf{u}_1, \dots, \mathbf{u}_M \in \mathbb{Z}_{q}^{k}$. We define two pair Lee-distance matrices (DRMs) $\mathbf{D}_f^{(1)}(t, \mathbf{u}_1, \dots, \mathbf{u}_M)$ and $\mathbf{D}_f^{(2)}(t, \mathbf{u}_1, \dots, \mathbf{u}_M)$, each of size $M \times M$, whose entries are given by 
    	\begin{align*} \left [{{\boldsymbol {D}_{f}^{(1)}\left ({{t, \boldsymbol {u}_{1}, \ldots, \boldsymbol {u}_{M}}}\right)}}\right ]_{i j}= \begin{cases}{\left [{{\big(2t+1- \frac q2 \big)-d_{p}\left ({{\boldsymbol {u}_{i}, \boldsymbol {u}_{j}}}\right)}}\right ]^{+},} & \text {if}~ f\left ({{\boldsymbol {u}_{i}}}\right) \neq f\left ({{\boldsymbol {u}_{j}}}\right), \\ 0, & \text { otherwise.}\end{cases}\end{align*} and 
    	\begin{align*} \ \left [{{\boldsymbol {D}_{f}^{(2)}\left ({{t, \boldsymbol {u}_{1}, \ldots, \boldsymbol {u}_{M}}}\right)}}\right ]_{i j}\!\!=\!\! \begin{cases}{\left [{{\big(2t+1+ \frac q2\big)-d_{p}\left ({{\boldsymbol {u}_{i}, \boldsymbol {u}_{j}}}\right)}}\right ]^{+},} & \text {if}~ f\left ({{\boldsymbol {u}_{i}}}\right) \neq f\left ({{\boldsymbol {u}_{j}}}\right), \\ 0, & \text { otherwise.}~\end{cases}\end{align*}
    \end{defn}
    Let $\mathcal{P}=\{\mathbf{p}_1, \dots, \mathbf{p}_M\} \subseteq \mathbb{Z}_q^r$ be a code of length $r$ and cardinality $M$. Given a matrix $\mathbf{D} \in \mathbb{N}_0^{M\times M}$, irregular-pair Lee-distance codes are defined as follows.
    \begin{defn} \label{D_p-code}
    	Let $\mathbf{D} \in \mathbb{N}_0^{M\times M}$. Then $\mathcal{P}=\{\mathbf{p}_1, \dots, \mathbf{p}_M\} \subseteq \mathbb{Z}_q^r$ is a $\mathbf{D}$-irregular-pair Lee-distance code $(\mathbf{D}_p$-code$)$, if there exists an ordering of the codewords of  $\mathcal{P}$ such that $d_p(\mathbf{p}_i, \mathbf{p}_j) \ge [\mathbf{D}]_{ij}$ for all $i, j \in [M]$.
    \end{defn}
    The smallest integer $r$ such that there exists a $\mathbf{D}_{p}$-code of length $r$ is denoted by $N_{p}(\mathbf{D})$. If $[\mathbf{D}]_{ij} = d$ for all $i \neq j$, we simply write $N_{p}(M,d)$.
    By definition, a $\mathbf{D}_{p}$-code imposes individual pair Lee-distance constraints between every pair of codewords. The following example illustrates the definition of DRMs and $\mathbf{D}_p$-code.
    \begin{example}
    	Let $\mathbf{u}_i = (\mathbf{u}_{i1}, \mathbf{u}_{i2}) \in \mathbb{Z}_8^2, \;\forall i \in [5]$. Consider the symbol-pair Lee weight function $f:\mathbb Z_8^2 \rightarrow \mathbb N_0$,
    	defined by $f(\mathbf u_i)=w_p(\mathbf u_i)$ for $\mathbf u_1=(0,0),\,
    	\mathbf u_2=(1,0),\,
    	\mathbf u_3=(0,1),\,
    	\mathbf u_4=(2,0),\,
    	\mathbf u_5=(1,1)$. Since $w_L(0)=0,\, w_L(1)=1,\, w_L(2)=2$,
    	the corresponding symbol-pair Lee weights are $f(\mathbf u_1)=w_p(\mathbf u_1)=0$,
    	$f(\mathbf u_2)=w_p(\mathbf u_2)
    	=
    	2\max\{1,0\}
    	=2$, $f(\mathbf u_3)=w_p(\mathbf u_3)
    	=
    	2\max\{0,1\}
    	=2$, $f(\mathbf u_4)=w_p(\mathbf u_4)
    	=
    	2\max\{2,0\}
    	=4$, 
    	and $f(\mathbf u_5)=w_p(\mathbf u_5)
    	=
    	2\max\{1,1\}
    	=2$. Thus, $f(\mathbf u_2)=f(\mathbf u_3)=f(\mathbf u_5)=2$,
    	while the remaining function values are distinct.
    	
    	The pair Lee-distances between the vectors are given by
    	\[
    	[d_p(\mathbf u_i,\mathbf u_j)]_{i,j}
    	=
    	\begin{blockarray}{c@{\hskip 4pt}*{5}{c}}
    		& \mathbf{u}_1 & \mathbf{u}_2 & \mathbf{u}_3 & \mathbf{u}_4 & \mathbf{u}_5 \\
    		\begin{block}{c@{\hskip 4pt}[*{5}{c}]}
    			\mathbf{u}_1 \text{ } & 0 & 2 & 2 & 4 & 2 \\
    			\mathbf{u}_2 \text{ } & 2 & 0 & 2 & 2 & 2 \\
    			\mathbf{u}_3 \text{ } & 2 & 2 & 0 & 4 & 2 \\
    			\mathbf{u}_4 \text{ } & 4 & 2 & 4 & 0 & 4 \\
    			\mathbf{u}_5 \text{ } & 2 & 2 & 2 & 4 & 0 \\
    		\end{block}
    	\end{blockarray}.
    	\]
    	For $t=3$,
    	the corresponding DRMs are
    	\[
    	\mathbf D_f^{(1)}(t=3,\mathbf{u}_1,\ldots,\mathbf{u}_5)
    	=
    	\begin{blockarray}{c@{\hskip 4pt}*{5}{c}}
    		& \mathbf{u}_1 & \mathbf{u}_2 & \mathbf{u}_3 & \mathbf{u}_4 & \mathbf{u}_5 \\
    		\begin{block}{c@{\hskip 4pt}[*{5}{c}]}
    			\mathbf{u}_1 \text{ } & 0 & 1 & 1 & 0 & 1 \\
    			\mathbf{u}_2 \text{ } & 1 & 0 & 0 & 1 & 0 \\
    			\mathbf{u}_3 \text{ } & 1 & 0 & 0 & 0 & 0 \\
    			\mathbf{u}_4 \text{ } & 0 & 1 & 0 & 0 & 1 \\
    			\mathbf{u}_5 \text{ } & 1 & 0 & 0 & 1 & 0 \\
    		\end{block}
    	\end{blockarray},
    	\] and 
    	\[
    	\mathbf D_f^{(2)}(t=3,\mathbf{u}_1,\ldots,\mathbf{u}_5)
    	=
    	\begin{blockarray}{c@{\hskip 4pt}*{5}{c}}
    		& \mathbf{u}_1 & \mathbf{u}_2 & \mathbf{u}_3 & \mathbf{u}_4 & \mathbf{u}_5 \\
    		\begin{block}{c@{\hskip 4pt}[*{5}{c}]}
    			\mathbf{u}_1 \text{ } & 0 & 9 & 9 & 7 & 9 \\
    			\mathbf{u}_2 \text{ } & 9 & 0 & 0 & 9 & 0 \\
    			\mathbf{u}_3 \text{ } & 9 & 0 & 0 & 0 & 0 \\
    			\mathbf{u}_4 \text{ } & 7 & 9 & 0 & 0 & 7 \\
    			\mathbf{u}_5 \text{ } & 9 & 0 & 0 & 7 & 0 \\
    		\end{block}
    	\end{blockarray}.
    	\] Observe that the entries corresponding to the pairs
    	$(\mathbf u_2,\mathbf u_3)$, $(\mathbf u_2,\mathbf u_5)$, and
    	$(\mathbf u_3,\mathbf u_5)$ are zero, since these vectors have the same
    	function value.
    	
    	Now consider the code $\mathcal P
    	=
    	\{\mathbf p_1,\mathbf p_2,\mathbf p_3,\mathbf p_4,\mathbf p_5\}
    	=
    	\{0,1,2,3,4\}
    	\subseteq \mathbb Z_8$. Since the symbol-pair Lee-distance of length-$1$ vectors coincides with the
    	Lee-distance, we have
    	\[
    	[d_p(\mathbf p_i,\mathbf p_j)]_{i,j}
    	=
    	\begin{blockarray}{c@{\hskip 4pt}*{5}{c}}
    		& \mathbf{p}_1 & \mathbf{p}_2 & \mathbf{p}_3 & \mathbf{p}_4 & \mathbf{p}_5 \\
    		\begin{block}{c@{\hskip 4pt}[*{5}{c}]}
    			\mathbf{p}_1 \text{ } & 0 & 1 & 2 & 3 & 4 \\
    			\mathbf{p}_2 \text{ } & 1 & 0 & 1 & 2 & 3 \\
    			\mathbf{p}_3 \text{ } & 2 & 1 & 0 & 1 & 2 \\
    			\mathbf{p}_4 \text{ } & 3 & 2 & 1 & 0 & 1 \\
    			\mathbf{p}_5 \text{ } & 4 & 3 & 2 & 1 & 0 \\
    		\end{block}
    	\end{blockarray}.
    	\]Comparing this matrix with $\mathbf D_f^{(1)}$, we obtain $d_p(\mathbf p_i,\mathbf p_j)
    	\ge
    	[\mathbf D_f^{(1)}]_{ij},
    	\,
    	\forall\, i,j\in[5]$. Therefore, $\mathcal P$ is a
    	$\mathbf D_p^{(1)}$-irregular-pair Lee-distance code.
    \end{example}
    With these definitions, we now proceed to establish a link between FCSLCs and irregular-pair Lee-distance codes. The following lemma plays an important role in the proof of Theorem~\ref{Red_Bounds}.
    \begin{lem} \label{lem_delta}
    	Let $\mathbf{u}
    	=
    	\bigl(\mathbf{u}^{(1)},\mathbf{u}^{(2)}\bigr)
    	\in \mathbb{Z}_q^{m+r}
    	\, \text{and} \,
    	\mathbf{v}
    	=
    	\bigl(\mathbf{v}^{(1)},\mathbf{v}^{(2)}\bigr)
    	\in \mathbb{Z}_q^{m+r}$, 
    	where $\mathbf{u}^{(1)}
    	=
    	(u_1,\ldots,u_m),
    	\,
    	\mathbf{u}^{(2)}
    	=
    	(u_{m+1},\ldots,u_{m+r})$, 
    	and $\mathbf{v}^{(1)}
    	=
    	(v_1,\ldots,v_m),
    	\,
    	\mathbf{v}^{(2)}
    	=
    	(v_{m+1},\ldots,v_{m+r})$.
    	Then
    	\[
    	d_p\bigl(\mathbf{u}^{(1)},\mathbf{v}^{(1)}\bigr)
    	+
    	d_p\bigl(\mathbf{u}^{(2)},\mathbf{v}^{(2)}\bigr)
    	-\frac{q}{2} \,
    	\le \,
    	d_p(\mathbf{u},\mathbf{v}) \,
    	\le \,
    	d_p\bigl(\mathbf{u}^{(1)},\mathbf{v}^{(1)}\bigr)
    	+
    	d_p\bigl(\mathbf{u}^{(2)},\mathbf{v}^{(2)}\bigr)
    	+\frac{q}{2}.
    	\]
    \end{lem}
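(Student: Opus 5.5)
We will prove the lemma by writing all three pair distances in terms of the per-coordinate Lee distances and seeing which terms change when the cyclic vector is split. Set $a_i=d_L(u_i,v_i)$ for $i\in[m+r]$; each $a_i$ lies in $\{0,1,\ldots,\lfloor q/2\rfloor\}$. The three quantities are then:
\[
\begin{aligned}
d_p(\mathbf u,\mathbf v)&=\sum_{i=1}^{m+r-1}\max\{a_i,a_{i+1}\}+\max\{a_{m+r},a_1\},\\
d_p\bigl(\mathbf u^{(1)},\mathbf v^{(1)}\bigr)&=\sum_{i=1}^{m-1}\max\{a_i,a_{i+1}\}+\max\{a_m,a_1\},\\
d_p\bigl(\mathbf u^{(2)},\mathbf v^{(2)}\bigr)&=\sum_{i=m+1}^{m+r-1}\max\{a_i,a_{i+1}\}+\max\{a_{m+r},a_{m+1}\}.
\end{aligned}
\]
Every internal pair term of the two blocks also appears in $d_p(\mathbf u,\mathbf v)$, so these cancel. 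This gives
\[
d_p(\mathbf u,\mathbf v)-d_p\bigl(\mathbf u^{(1)},\mathbf v^{(1)}\bigr)-d_p\bigl(\mathbf u^{(2)},\mathbf v^{(2)}\bigr)=\Delta,
\]
where
\[
\Delta=\max\{a_m,a_{m+1}\}+\max\{a_{m+r},a_1\}-\max\{a_m,a_1\}-\max\{a_{m+r},a_{m+1}\}.
\]
The lemma is therefore equivalent to $|\Delta|\le q/2$, and the rest of the argument is a bound on this four-term difference. (Here $q=2^m$ is even, so $\lfloor q/2\rfloor=q/2$; for odd $q$ the same argument gives $\lfloor q/2\rfloor\le q/2$.)

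To bound $\Delta$, rename $a=a_m$, $b=a_{m+1}$, $c=a_{m+r}$, $d=a_1$, so that
\[
\Delta=\max\{a,b\}+\max\{c,d\}-\max\{a,d\}-\max\{c,b\}.
\]
For the upper bound we use two facts valid for nonnegative numbers:
\[
\max\{a,b\}\le\max\{a,d\}+b,\qquad \max\{c,d\}\le\max\{c,b\}+d.
\]
Adding them gives $\Delta\le b+d$, and symmetrically $\Delta\le a+c$. Hence $\Delta\le\min\{a+c,\,b+d\}$.

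That bound alone only gives $q$, not $q/2$, which is where the real work lies. We sharpen it with a case split on which entry of $\{a,b,c,d\}$ is largest.
\begin{itemize}
\item If the maximum is $a$ (the case of $c$ is symmetric), then $\max\{a,b\}=\max\{a,d\}=a$, so $\Delta=\max\{c,d\}-\max\{c,b\}\le d\le q/2$.
\item If the maximum is $b$ (the case of $d$ is symmetric), then $\max\{a,b\}=\max\{c,b\}=b$, so $\Delta=\max\{c,d\}-\max\{a,d\}\le c\le q/2$.
\end{itemize}
This yields $\Delta\le\lfloor q/2\rfloor\le q/2$ and hence the upper bound of the lemma.

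For the lower bound we argue the same way, or by symmetry: swapping the roles of $b$ and $d$ sends $\Delta$ to $-\Delta$. The same case analysis then gives $-\Delta\le q/2$, which is the lower bound.

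Some of the four pairs may coincide when $m=1$ or $r=1$. For a length-$1$ block the wrap-around term is $\max\{a_1,a_1\}=a_1$. The formula for $\Delta$ still holds in that case, with $a=d$ or $b=c$, so the argument covers these cases unchanged and we will only remark on them. The main obstacle is the sharpening from $q$ to $q/2$ via the "largest entry" case split, since the crude triangle-type bounds lose a factor of two.
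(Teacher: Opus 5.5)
Your proof is correct. The reduction to the boundary term $\Delta=\max\{a,b\}+\max\{c,d\}-\max\{a,d\}-\max\{c,b\}$ is exactly the paper's; you differ only in how you bound $\Delta$.

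The paper rewrites each maximum via $\max\{x,y\}=\tfrac12(x+y+|x-y|)$, obtaining $\Delta=\tfrac12\bigl(|a-b|+|c-d|-|a-d|-|c-b|\bigr)$. Two applications of the triangle inequality then give both directions at once, $-|b-d|\le\Delta\le|b-d|$, and $|b-d|\le q/2$ because $b$ and $d$ are Lee distances.

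You instead split on which of $a,b,c,d$ is largest. In each case two of the maxima cancel, and the remaining difference is bounded by a single $a_i\le\lfloor q/2\rfloor$. You then get the lower bound from the sign-reversing swap $b\leftrightarrow d$. One small precision: the cases you call ``symmetric'' ($c$ or $d$ maximal) follow from the simultaneous swap $a\leftrightarrow c$, $b\leftrightarrow d$, which preserves $\Delta$. A single swap $a\leftrightarrow c$ would negate it. They are also immediate by direct inspection.

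Your route is more hands-on and shows that the correction never exceeds one coordinate's Lee distance. The paper's route is shorter, treats both directions uniformly, and gives the sharper case-free estimate $|\Delta|\le|a_{m+1}-a_1|$; for instance, $\Delta=0$ whenever the first coordinates of the two blocks have equal Lee distance.

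Your preliminary bound $\Delta\le\min\{a+c,\,b+d\}$ is never used and can be dropped. Your remark on the degenerate cases $m=1$ or $r=1$ covers a point the paper leaves implicit.
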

    \begin{IEEEproof}
    		Let $a_i=d_L(u_i,v_i), \, i=1,\ldots,m+r$.
    		By the definition of the symbol-pair Lee-distance,
    		\[
    		\begin{aligned}
    			d_p(\mathbf{u},\mathbf{v})
    			&=
    			\sum_{i=1}^{m-1}\max\{a_i,a_{i+1}\}
    			+\max\{a_m,a_{m+1}\} 
    			&\,+
    			\sum_{i=m+1}^{m+r-1}\max\{a_i,a_{i+1}\}
    			+\max\{a_{m+r},a_1\},
    		\end{aligned}
    		\]
    		while
    		\[
    		d_p\bigl(\mathbf{u}^{(1)},\mathbf{v}^{(1)}\bigr)
    		=
    		\sum_{i=1}^{m-1}\max\{a_i,a_{i+1}\}
    		+\max\{a_m,a_1\},
    		\]
    		and
    		\[
    		d_p\bigl(\mathbf{u}^{(2)},\mathbf{v}^{(2)}\bigr)
    		=
    		\sum_{i=m+1}^{m+r-1}\max\{a_i,a_{i+1}\}
    		+\max\{a_{m+r},a_{m+1}\}.
    		\] Therefore,
    		\begin{equation} \label{delta}
    			\begin{aligned}
    				\Delta
    				&\triangleq
    				d_p(\mathbf{u},\mathbf{v})
    				-d_p\bigl(\mathbf{u}^{(1)},\mathbf{v}^{(1)}\bigr)
    				-d_p\bigl(\mathbf{u}^{(2)},\mathbf{v}^{(2)}\bigr)\\
    				&=
    				\max\{a_m,a_{m+1}\}
    				+\max\{a_{m+r},a_1\}
    				-\max\{a_m,a_1\}
    				-\max\{a_{m+r},a_{m+1}\}.
    			\end{aligned}
    		\end{equation}
    		
    		Let $a=a_m,\,
    		b=a_{m+1},\,
    		c=a_{m+r},\,
    		d=a_1$.
    		Then $\Delta
    		=
    		\max\{a,b\}
    		+\max\{c,d\}
    		-\max\{a,d\}
    		-\max\{c,b\}$. Using the identity $\max\{x,y\}
    		=
    		\frac{x+y+|x-y|}{2}$,
    		we obtain
    		\[
    		\begin{aligned}
    			\Delta
    			&=
    			\frac{
    				(a+b+|a-b|)
    				+(c+d+|c-d|)
    			}{2}-
    			\frac{
    				(a+d+|a-d|)
    				+(c+b+|c-b|)
    			}{2}\\
    			&=
    			\frac{
    				|a-b|
    				+|c-d|
    				-|a-d|
    				-|c-b|
    			}{2}.
    		\end{aligned}
    		\] By the triangle inequality, $|a-b|
    		\le
    		|a-d|+|d-b|$,
    		and $|c-d|
    		\le
    		|c-b|+|b-d|$.
    		Adding the above inequalities yields $|a-b|+|c-d|
    		\le
    		|a-d|+|c-b|+2|b-d|$.
    		Hence, $\Delta
    		\le
    		|b-d|$. Similarly, $|a-d|
    		\le
    		|a-b|+|b-d|$,
    		and $|c-b|
    		\le
    		|c-d|+|d-b|$.
    		Adding these inequalities gives $|a-d|+|c-b|
    		\le
    		|a-b|+|c-d|+2|b-d|$, 
    		which implies $\Delta
    		\ge
    		-|b-d|$. Therefore, $-|b-d|
    		\le
    		\Delta
    		\le
    		|b-d|$. Since \(a_i=d_L(u_i,v_i)\) for all \(i\), we have $0\le a_i\le \frac{q}{2}$.
    		Consequently, $|b-d|
    		\le
    		\frac{q}{2}$,
    		and hence $-\frac{q}{2}
    		\le
    		\Delta
    		\le
    		\frac{q}{2}$. Thus (\ref{delta}) yields \[
    		d_p\bigl(\mathbf{u}^{(1)},\mathbf{v}^{(1)}\bigr)
    		+
    		d_p\bigl(\mathbf{u}^{(2)},\mathbf{v}^{(2)}\bigr)
    		-\frac{q}{2} \,
    		\le \,
    		d_p(\mathbf{u},\mathbf{v}) \,
    		\le \,
    		d_p\bigl(\mathbf{u}^{(1)},\mathbf{v}^{(1)}\bigr)
    		+
    		d_p\bigl(\mathbf{u}^{(2)},\mathbf{v}^{(2)}\bigr)
    		+\frac{q}{2}.
    		\]
    \end{IEEEproof} 
   For $q=2$, the Lee-distance coincides with the Hamming distance, and hence the symbol-pair Lee-distance reduces to the symbol-pair Hamming distance. Consequently, Lemma~\ref{lem_delta} specializes to
   \[
   d_p\bigl(\mathbf{u}^{(1)},\mathbf{v}^{(1)}\bigr)
   +
   d_p\bigl(\mathbf{u}^{(2)},\mathbf{v}^{(2)}\bigr)
   -1
   \le
   d_p(\mathbf{u},\mathbf{v})
   \le
   d_p\bigl(\mathbf{u}^{(1)},\mathbf{v}^{(1)}\bigr)
   +
   d_p\bigl(\mathbf{u}^{(2)},\mathbf{v}^{(2)}\bigr)
   +1,
   \]
   which is precisely the corresponding result for binary symbol-pair codes in \cite{XLC}. The following corollary considers the case where the parity vectors are the $r$-fold repetition vectors $(0,\ldots,0)$ and $\left(\frac{q}{2},\ldots,\frac{q}{2}\right)$, respectively.
    \begin{cor}\label{cor_delta2}
    	Let $\operatorname{Enc}(\mathbf{u})
    	=
    	(\mathbf{u},\{0\}^{r})
    	\quad\text{and}\quad
    	\operatorname{Enc}(\mathbf{v})
    	=
    	\left(\mathbf{v},\left\{\frac q2\right\}^{r}\right)$ for $\mathbf u, \mathbf v \in \mathbb{Z}_q^k.$
    	Then
    	\[
    	d_p(\operatorname{Enc}(\mathbf{u}),\operatorname{Enc}(\mathbf{v}))
    	\ge
    	d_p(\mathbf{u},\mathbf{v})
    	+
    	d_p\!\left(\{0\}^{r},
    	\left\{\frac q2\right\}^{r}\right).
    	\]
    \end{cor}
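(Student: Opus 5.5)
The plan is to reuse the exact expression for the boundary correction $\Delta$ from the proof of Lemma~\ref{lem_delta} and show that $\Delta\ge 0$ for these particular parity vectors. Here $m=k$, and the second block has $a_i=d_L(0,q/2)=q/2$ for every $i\in\{k+1,\ldots,k+r\}$. Write $\mathbf{x}=\operatorname{Enc}(\mathbf{u})$ and $\mathbf{y}=\operatorname{Enc}(\mathbf{v})$. By \eqref{delta},
\[
d_p(\mathbf{x},\mathbf{y})=d_p(\mathbf{u},\mathbf{v})+d_p\!\left(\{0\}^{r},\left\{\tfrac q2\right\}^{r}\right)+\Delta,
\]
where
\[
\Delta=\max\{a_k,a_{k+1}\}+\max\{a_{k+r},a_1\}-\max\{a_k,a_1\}-\max\{a_{k+r},a_{k+1}\}.
\]

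The key observation is that $a_{k+1}=a_{k+r}=q/2$, which is the largest possible Lee distance in $\mathbb{Z}_q$. It follows that every maximum involving one of these coordinates equals $q/2$:
\[
\max\{a_k,a_{k+1}\}=\max\{a_{k+r},a_1\}=\max\{a_{k+r},a_{k+1}\}=\tfrac q2 .
\]
Substituting these values gives
\[
\Delta=\tfrac q2-\max\{a_k,a_1\}.
\]
Since $a_k=d_L(u_k,v_k)\le q/2$ and $a_1=d_L(u_1,v_1)\le q/2$, we have $\Delta\ge 0$, and the inequality follows. Note that $d_p(\{0\}^r,\{q/2\}^r)=rq/2$, which can be stated explicitly if desired.

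The only point requiring care is the degenerate length cases, where the decomposition in \eqref{delta} may not apply verbatim. When $r=1$, we have $a_{k+1}=a_{k+r}$, and the formula still holds with the cyclic conventions. When $k=1$, the same holds with $a_k=a_1$. I would check these cases directly from the definition of $d_p$; the argument goes through unchanged because only the values $a_{k+1}=a_{k+r}=q/2$ are used. I do not expect a genuine obstacle. The main point is recognizing that the constant parity block attains the maximal Lee distance, which makes the negative boundary correction vanish.
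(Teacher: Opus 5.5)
Your proposal is correct and is essentially the paper's proof. Both substitute $a_{k+1}=a_{k+r}=\frac q2$ into the boundary term $\Delta$ from Lemma~\ref{lem_delta}, obtain $\Delta=\frac q2-\max\{a_k,a_1\}\ge 0$, and conclude; your check of the degenerate cases $r=1$ and $k=1$ is a harmless extra that the paper omits.
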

    
    \begin{IEEEproof}
    	Applying Lemma~\ref{lem_delta}, we have $d_p(\operatorname{Enc}(\mathbf{u}),\operatorname{Enc}(\mathbf{v}))
    	=
    	d_p(\mathbf{u},\mathbf{v})
    	+
    	d_p\!\left(\{0\}^{r},
    	\left\{\frac q2\right\}^{r}\right)
    	+\Delta,$
    	where $\Delta
    	=
    	\max\{a,b\}
    	+
    	\max\{b,d\}
    	-
    	\max\{a,d\}
    	-
    	b,$
    	with $a=d_L(u_k,v_k),\,
    	b=d_L\!\left(0,\frac q2\right)=\frac q2,\,
    	d=d_L(u_1,v_1).$
    	Since $a,d\le\frac q2=b,$
    	it follows that $\max\{a,b\}
    	=
    	\max\{b,d\}
    	=
    	b.$
    	Hence, $\Delta
    	=
    	b+b-\max\{a,d\}-b
    	=
    	b-\max\{a,d\}
    	\ge0.$
    	Therefore, $d_p(\operatorname{Enc}(\mathbf{u}),\operatorname{Enc}(\mathbf{v}))
    	\ge
    	d_p(\mathbf{u},\mathbf{v})
    	+
    	d_p\!\left(\{0\}^{r},
    	\left\{\frac q2\right\}^{r}\right).$
    \end{IEEEproof}
    \begin{thm} \label{Red_Bounds}
    	For any function $f: \mathbb{Z}_q^k \to Im(f)$ and $\{\mathbf{u}_1, \dots, \mathbf{u}_{q^k}\}=\mathbb{Z}_q^k$, we have \begin{equation*} N_{p}\big (\boldsymbol {D}_{f}^{(1)}(t,\boldsymbol {u}_{1},\ldots,\boldsymbol {u}_{q^{k}})\big) \,\leq \, r_{p}^{f}(k,t) \, \leq \, N_{p}\big (\boldsymbol {D}_{f}^{(2)}(t,\boldsymbol {u}_{1},\ldots,\boldsymbol {u}_{q^{k}})\big).\end{equation*}
    \end{thm}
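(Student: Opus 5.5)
The plan is to prove the two inequalities separately. Each one comes from applying Lemma~\ref{lem_delta} to the systematic codewords $\mathrm{Enc}(\mathbf u_i)=(\mathbf u_i,\mathbf p_i)$, with the message block of length $k$ in the role of $\mathbf u^{(1)}$ and the parity block of length $r$ in the role of $\mathbf u^{(2)}$. The correction term $\pm\frac q2$ in that lemma is exactly what produces the shifts $2t+1\mp\frac q2$ in $\mathbf D_f^{(1)}$ and $\mathbf D_f^{(2)}$.

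\emph{Lower bound.} Take an optimal FCSPLC with redundancy $r=r_p^f(k,t)$, and set $\mathbf p_i=\mathbf p(\mathbf u_i)$ for $i\in[q^k]$, ordered as the $\mathbf u_i$ are.
\begin{itemize}
\item Fix $i,j$ with $f(\mathbf u_i)\ne f(\mathbf u_j)$. By Definition~\ref{FCSPLC} and the upper inequality of Lemma~\ref{lem_delta},
\[
2t+1\le d_p(\mathrm{Enc}(\mathbf u_i),\mathrm{Enc}(\mathbf u_j))\le d_p(\mathbf u_i,\mathbf u_j)+d_p(\mathbf p_i,\mathbf p_j)+\tfrac q2 .
\]
Rearranging gives $d_p(\mathbf p_i,\mathbf p_j)\ge 2t+1-\frac q2-d_p(\mathbf u_i,\mathbf u_j)$. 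Since distances are nonnegative, this is the same as $d_p(\mathbf p_i,\mathbf p_j)\ge[\mathbf D_f^{(1)}]_{ij}$.
\item For pairs with equal function values the required entry is $0$, so there is nothing to check.
\item Hence the $\mathbf p_i$ form a $\mathbf D^{(1)}_p$-code of length $r$. By Definition~\ref{D_p-code} this gives $N_p(\mathbf D_f^{(1)})\le r_p^f(k,t)$.
\end{itemize}
One subtlety: the parity vectors $\mathbf p_i$ need not be distinct. I will treat the code as a multiset indexed by $[q^k]$, as Definition~\ref{D_p-code} implicitly allows. Any two indices with identical parity vectors must have required distance $0$, which is consistent.

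\emph{Upper bound.} Let $r=N_p(\mathbf D_f^{(2)})$ and let $\mathcal P=\{\mathbf p_1,\dots,\mathbf p_{q^k}\}$ be a $\mathbf D^{(2)}_p$-code of length $r$ under the ordering that matches the $\mathbf u_i$. Define $\mathrm{Enc}(\mathbf u_i)=(\mathbf u_i,\mathbf p_i)$.
\begin{itemize}
\item Fix $i,j$ with $f(\mathbf u_i)\ne f(\mathbf u_j)$. The lower inequality of Lemma~\ref{lem_delta} gives
\[
d_p(\mathrm{Enc}(\mathbf u_i),\mathrm{Enc}(\mathbf u_j))\ge d_p(\mathbf u_i,\mathbf u_j)+d_p(\mathbf p_i,\mathbf p_j)-\tfrac q2 .
\]
\item If $[\mathbf D_f^{(2)}]_{ij}>0$, then $d_p(\mathbf p_i,\mathbf p_j)\ge 2t+1+\frac q2-d_p(\mathbf u_i,\mathbf u_j)$. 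Substituting gives a total of at least $2t+1$.
\item If the entry is $0$, then $d_p(\mathbf u_i,\mathbf u_j)\ge 2t+1+\frac q2$. Since $d_p(\mathbf p_i,\mathbf p_j)\ge0$, the bound above again gives at least $2t+1$.
\item Therefore $\mathrm{Enc}$ is an FCSPLC, and $r_p^f(k,t)\le r$.
\end{itemize}

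The argument is essentially bookkeeping once Lemma~\ref{lem_delta} is available. The step that needs the most care is the zero-entry case of the $[\cdot]^+$ clipping, because there the parity distance contributes nothing and the message distance alone must carry the bound. Another point to watch is that $\frac q2$ must be an integer so that the matrix entries lie in $\mathbb N_0$; this holds because $q=2^m$.

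A boundary issue arises when $r=0$ or $r=1$. For $r=1$, the quantity $d_p$ of a length-$1$ parity block is just the Lee distance. For $r=0$, the parity block is empty and its contribution is $0$. In both cases I will check that Lemma~\ref{lem_delta} still holds in the form used, or that the inequality is trivially true.
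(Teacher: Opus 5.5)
Your proposal is correct and follows the paper's proof. Both bounds come from applying the two sides of Lemma~\ref{lem_delta} to the systematic codewords $(\mathbf u_i,\mathbf p_i)$; your direct argument for the lower bound is the contrapositive of the paper's proof by contradiction, and your explicit handling of the zero entries from the $[\cdot]^+$ clipping and of repeated parity vectors (reading a $\mathbf D_p$-code as an indexed family) fills in details the paper leaves implicit.
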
 
    \begin{IEEEproof}
    		Let $\mathbf{D}^{(1)}
    		\triangleq
    		\mathbf{D}_f^{(1)}(t,\mathbf{u}_1,\ldots,\mathbf{u}_{q^k})
    		\text{ and }
    		\mathbf{D}^{(2)}
    		\triangleq
    		\mathbf{D}_f^{(2)}(t,\mathbf{u}_1,\ldots,\mathbf{u}_{q^k})$. 
    		We first show that $r_p^f(k,t)\ge N_p(\mathbf{D}^{(1)})$. 
    		Suppose, to the contrary, that $r_p^f(k,t)<N_p(\mathbf{D}^{(1)})$. 
    		Let $\mathrm{Enc}:\mathbb{Z}_q^k \to \mathbb{Z}_q^{k+r},
    		\,
    		\mathbf{u}_i \mapsto (\mathbf{u}_i,\mathbf{p}_i)$, 
    		with \(r=r_p^f(k,t)\) that defines an FCSPLC for the function \(f\). Then, there exist \(i,j\in[q^k]\) with \(f(\mathbf{u}_i)\neq f(\mathbf{u}_j)\) such that $d_p(\mathbf{p}_i,\mathbf{p}_j)
    		<
    		[\mathbf{D}^{(1)}]_{ij}
    		=
    		\left[
    		\left(2t+1-\frac{q}{2}\right)
    		-d_p(\mathbf{u}_i,\mathbf{u}_j)
    		\right]^+$.
    		Since \(f(\mathbf{u}_i)\neq f(\mathbf{u}_j)\), it follows that $d_p(\mathbf{p}_i,\mathbf{p}_j)
    		<
    		\left(2t+1-\frac{q}{2}\right)
    		-d_p(\mathbf{u}_i,\mathbf{u}_j)$.
    		By Lemma~\ref{lem_delta},
    		\[
    		\begin{aligned}
    			d_p\bigl(\mathrm{Enc}(\mathbf{u}_i),
    			\mathrm{Enc}(\mathbf{u}_j)\bigr)
    			&\le
    			d_p(\mathbf{u}_i,\mathbf{u}_j)
    			+d_p(\mathbf{p}_i,\mathbf{p}_j)
    			+\frac{q}{2}\\
    			&<
    			d_p(\mathbf{u}_i,\mathbf{u}_j)
    			+\left(2t+1-\frac{q}{2}\right)
    			-d_p(\mathbf{u}_i,\mathbf{u}_j)
    			+\frac{q}{2}\\
    			&=
    			2t+1,
    		\end{aligned}
    		\]
    		which contradicts the definition of an FCSPLC. Therefore, $r_p^f(k,t)\ge N_p(\mathbf{D}^{(1)})$.
    		
    		Next, we show that $r_p^f(k,t)\le N_p(\mathbf{D}^{(2)})$.
    		Let $\mathcal{P}
    		=
    		\{\mathbf{p}_1,\ldots,\mathbf{p}_{q^k}\}$ 
    		be a \(\mathbf{D}^{(2)}_p\)-code of length \(N_p(\mathbf{D}^{(2)})\), and define $\mathrm{Enc}:\mathbb{Z}_q^k \to \mathbb{Z}_q^{k+r},
    		\,
    		\mathbf{u}_i \mapsto (\mathbf{u}_i,\mathbf{p}_i)$. For any \(i,j\in[q^k]\) satisfying
    		\(f(\mathbf{u}_i)\neq f(\mathbf{u}_j)\), we have $d_p(\mathbf{p}_i,\mathbf{p}_j)
    		\ge
    		[\mathbf{D}^{(2)}]_{ij}
    		=
    		\left(2t+1+\frac{q}{2}\right)
    		-d_p(\mathbf{u}_i,\mathbf{u}_j)$. 
    		Therefore, by Lemma~\ref{lem_delta},
    		\[
    		\begin{aligned}
    			d_p\bigl(\mathrm{Enc}(\mathbf{u}_i),
    			\mathrm{Enc}(\mathbf{u}_j)\bigr)
    			&\ge
    			d_p(\mathbf{u}_i,\mathbf{u}_j)
    			+d_p(\mathbf{p}_i,\mathbf{p}_j)
    			-\frac{q}{2}\\
    			&\ge
    			d_p(\mathbf{u}_i,\mathbf{u}_j)
    			+\left(2t+1+\frac{q}{2}\right)
    			-d_p(\mathbf{u}_i,\mathbf{u}_j)
    			-\frac{q}{2}\\
    			&=
    			2t+1.
    		\end{aligned}
    		\]
    		Hence, \(\mathrm{Enc}\) defines an FCSPLC for the function \(f\). Consequently, $r_p^f(k,t)\le N_p(\mathbf{D}^{(2)})$.
    \end{IEEEproof}
    \subsection{Simplified Redundancy Bounds}
    Theorem~\ref{Red_Bounds} provides general lower and upper bounds on the optimal redundancy \( r_{p}^f(q, k, t) \) in terms of $N_{p}\big (\boldsymbol {D}_{f}^{(1)}(t,\boldsymbol {u}_{1},\ldots,\boldsymbol {u}_{q^{k}})\big)$ and $N_{p}\big (\boldsymbol {D}_{f}^{(2)}(t,\boldsymbol {u}_{1},\ldots,\boldsymbol {u}_{q^{k}})\big)$. However, determining these minimum lengths exactly is generally difficult. For ease of evaluation, we derive simplified and possibly sub-optimal bounds by considering subsets of $\mathbb{Z}_q^k$. The following Corollary presents a simplified lower bound on \( r_{p}^f(q, k, t) \).
    \begin{cor} \label{cor1_LB}
    	Let $\{\mathbf{u}_1, \ldots, \mathbf{u}_M\}$ be an arbitrary subset of $\mathbb{Z}_q^k$. Then the optimal redundancy of an FCSPLC for $f$ satisfies $r_{p}^f(q, k, t) \ge N_{p}\big (\boldsymbol {D}_{f}^{(1)}(t,\boldsymbol {u}_{1},\ldots,\boldsymbol {u}_{M})\big).$ For any function $f$ with $|Im(f) \ge 2$, we have $r_{p}^f(q, k, t) \geq \left\lceil \frac{2t-1-\frac{q}{2}}{\frac{q}{2}} \right\rceil$.
    \end{cor}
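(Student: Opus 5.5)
The first claim reruns the lower-bound half of the proof of Theorem~\ref{Red_Bounds}, restricted to the chosen subset. Fix an FCSPLC $\mathrm{Enc}(\mathbf u)=(\mathbf u,\mathbf p(\mathbf u))$ with redundancy $r=r_p^f(q,k,t)$, and set $\mathbf p_i=\mathbf p(\mathbf u_i)$ for $i\in[M]$. Take any $i,j\in[M]$ with $f(\mathbf u_i)\neq f(\mathbf u_j)$. The FCSPLC property gives $d_p(\mathrm{Enc}(\mathbf u_i),\mathrm{Enc}(\mathbf u_j))\ge 2t+1$, and the upper bound of Lemma~\ref{lem_delta} then yields
\[
d_p(\mathbf p_i,\mathbf p_j)\ge 2t+1-\tfrac q2-d_p(\mathbf u_i,\mathbf u_j).
\]
Since $d_p\ge 0$, this strengthens to $d_p(\mathbf p_i,\mathbf p_j)\ge[\mathbf D_f^{(1)}]_{ij}$. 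Pairs with equal function values have entry $0$ and impose nothing. The ordering $\mathbf p_1,\dots,\mathbf p_M$ therefore witnesses that $\{\mathbf p_1,\dots,\mathbf p_M\}$ is a $\mathbf D_f^{(1)}$-code of length $r$.

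One technicality is that distinct $\mathbf u_i$ may share a parity vector. If Definition~\ref{D_p-code} requires distinct codewords, note that a repeated $\mathbf p_i=\mathbf p_j$ forces $[\mathbf D_f^{(1)}]_{ij}=0$, so the repetition costs nothing. I will remark that the definition is applied to the indexed family $(\mathbf p_i)$. Hence $r\ge N_p(\mathbf D_f^{(1)}(t,\mathbf u_1,\ldots,\mathbf u_M))$.

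For the explicit bound, I take $M=2$ with $f(\mathbf u_1)\neq f(\mathbf u_2)$, which is possible since $|\operatorname{Im}(f)|\ge 2$. I also want $d_p(\mathbf u_1,\mathbf u_2)$ as small as possible. The step I expect to be the main obstacle is guaranteeing that this distance can be made small. Walking from any preimage of one value to any preimage of the other by unit steps, i.e.\ changing one coordinate by $\pm1$ at a time, some consecutive pair along the walk must have different function values. Such a pair differs by Lee distance $1$ in a single coordinate. I expect its pair distance to be $2$ for $k\ge 2$, since one changed coordinate enters two cyclic pairs, and $1$ for $k=1$. In either case $d_p(\mathbf u_1,\mathbf u_2)\le 2$, so
\[
[\mathbf D]_{12}\ge\Bigl[2t+1-\tfrac q2-2\Bigr]^+=\Bigl[2t-1-\tfrac q2\Bigr]^+ .
\]

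It remains to bound the length of any $2\times2$ $\mathbf D_p$-code with off-diagonal entry $d$. Any two vectors of length $r$ satisfy $d_p(\mathbf p_1,\mathbf p_2)=\sum_i\max\{a_i,a_{i+1}\}\le r\cdot\frac q2$, because every Lee distance is at most $q/2$. Hence $N_p(2,d)\ge\lceil d/(q/2)\rceil$. Combining this with the first part gives $r_p^f(q,k,t)\ge\bigl\lceil (2t-1-\frac q2)/\frac q2\bigr\rceil$. When the numerator is nonpositive the bound is trivial, because $r\ge 0$.
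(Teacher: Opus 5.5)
Your proof is correct and follows essentially the same route as the paper. For the first claim, you apply the lower-bound half of Theorem~\ref{Red_Bounds} directly to the chosen subset; the paper instead invokes Theorem~\ref{Red_Bounds} for all of $\mathbb{Z}_q^k$ and then restricts the resulting $\mathbf{D}_p$-code to the subset. For the explicit bound, both arguments reduce to two messages with distinct function values and a two-codeword code.

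Your version is slightly more careful than the paper's in two places. You justify the existence of such a pair at pair distance at most $2$ via the unit-step walk; the paper only asserts distance exactly $2$, which can fail for $k=1$ but is harmless there. You also prove the direction actually needed, $N_p(2,d)\ge\lceil 2d/q\rceil$, from $d_p\le r\frac q2$; the paper's appeal to the repetition code only gives the achievability direction.
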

    \begin{IEEEproof}
    	Since $\{\mathbf{u}_1, \ldots, \mathbf{u}_M\}$ is a subset of $\mathbb{Z}_q^k$, it follows that $\{\mathbf{p}_1, \ldots, \mathbf{p}_M\}$ is a ${D}_{f}^{(1)}(t,\boldsymbol {u}_{1},\ldots,\boldsymbol {u}_{M})$-code, and  $$N_{p}\big (\boldsymbol {D}_{f}^{(1)}(t,\boldsymbol {u}_{1},\ldots,\boldsymbol {u}_{q^k})\big) \ge N_{p}\big (\boldsymbol {D}_{f}^{(1)}(t,\boldsymbol {u}_{1},\ldots,\boldsymbol {u}_{M})\big).$$ By Theorem~\ref{Red_Bounds}, $r_{p}^f(q, k, t) \ge N_{p}\big (\boldsymbol {D}_{f}^{(1)}(t,\boldsymbol {u}_{1},\ldots,\boldsymbol {u}_{q^k})\big)$. Hence, $r_{p}^f(q, k, t) \ge N_{p}\big (\boldsymbol {D}_{f}^{(1)}(t,\boldsymbol {u}_{1},\ldots,\boldsymbol {u}_{M})\big).$ For $|Im(f) \ge 2$, there exists $\mathbf{u}, \mathbf{v} \in \mathbb Z_q^k$ such that $f(\mathbf{u}) \ne f(\mathbf{v})$ and $d_p(\mathbf{u}, \mathbf{v})=2$. Thus, $$r_{p}^f(q, k, t) \ge N_{p}\big (\boldsymbol {D}_{f}^{(1)}(t,\boldsymbol {u}_{1},\ldots,\boldsymbol {u}_{M})\big)=N_{p}(2,2t-1-\frac q2)=\left\lceil \frac{2t-1-\frac{q}{2}}{\frac{q}{2}} \right\rceil,$$ where the last equality is obtained by the repetition code $\{(0, \ldots, 0), (\frac q2, \ldots, \frac q2)\}$ of length $\left\lceil \frac{2t-1-\frac{q}{2}}{\frac{q}{2}} \right\rceil$.
    \end{IEEEproof}
    We next derive a simplified upper bound on the optimal redundancy \( r_{p}^f(q, k, t) \) by considering a representative subset of information vectors with distinct function values. To this end, we consider the following definitions.
    \begin{defn}
    	The pair Lee-distance between two function values $f_1, f_2 \in Im(f) $ is defined as the smallest pair Lee-distance between two information vectors that evaluate to $f_1$ and $f_2$, i.e., 
        \begin{equation*} d_{p}\left ({{f_{1}, f_{2}}}\right) \!\triangleq \!\min _{\boldsymbol {u}_{1}, \boldsymbol {u}_{2} \in \mathbb {Z}_{q}^{k}} d_{p}\left ({{\boldsymbol {u}_{1}, \boldsymbol {u}_{2}}}\right) \text { s.t.}~ f\left ({{\boldsymbol {u}_{1}}}\right)\!=\!f_{1}~\text {and}~f\left ({{\boldsymbol {u}_{2}}}\right)\!=\!f_{2}.\end{equation*}
    \end{defn}
    \begin{defn}
    	The function pair Lee-distance matrices of a function $f$ are denoted by $\mathbf{D}_f^{(1)}(t,f_1,\ldots,f_E)
    	 \\ \text{and } 
    	\mathbf{D}_f^{(2)}(t,f_1,\ldots,f_E)$,
    	where $E=|\operatorname{Im}(f)|$ and
    	$\operatorname{Im}(f)=\{f_1,\ldots,f_E\}$.
    	The entries of these $E\times E$ matrices are given by
    	\[
    	\bigl[\mathbf{D}_f^{(1)}(t,f_1,\ldots,f_E)\bigr]_{ij}
    	=
    	\begin{cases}
    		\left[
    		\left(2t+1-\dfrac{q}{2}\right)
    		-d_p(f_i,f_j)
    		\right]^+,
    		& \text{if } i\neq j,\\[1ex]
    		0,
    		& \text{otherwise},
    	\end{cases}
    	\]
    	and
    	\[
    	\bigl[\mathbf{D}_f^{(2)}(t,f_1,\ldots,f_E)\bigr]_{ij}
    	=
    	\begin{cases}
    		\left[
    		\left(2t+1+\dfrac{q}{2}\right)
    		-d_p(f_i,f_j)
    		\right]^+,
    		& \text{if } i\neq j,\\[1ex]
    		0,
    		& \text{otherwise}.
    	\end{cases}
    	\]
    \end{defn} With these definitions, we derive a simplified upper bound on the optimal redundancy.
    \begin{thm} \label{red_fdm}
         For any function $f:\mathbb{Z}_q^k \to \operatorname{Im}(f)$, we have $r_p^f(k,t)
    		\le
    		N_p\bigl(\mathbf{D}_f^{(2)}(t,f_1,\ldots,f_E)\bigr)$, 
    		where $\operatorname{Im}(f)=\{f_1,\ldots,f_E\}$.
    	\end{thm}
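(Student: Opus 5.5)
The plan is to build an FCSPLC directly from a code for the function-value distance matrix. Let $\mathcal{P}'=\{\mathbf{p}'_1,\ldots,\mathbf{p}'_E\}\subseteq\mathbb{Z}_q^{r}$ be a $\mathbf{D}_f^{(2)}(t,f_1,\ldots,f_E)$-irregular-pair Lee-distance code of length $r=N_p\bigl(\mathbf{D}_f^{(2)}(t,f_1,\ldots,f_E)\bigr)$, indexed so that $d_p(\mathbf{p}'_i,\mathbf{p}'_j)\ge[\mathbf{D}_f^{(2)}(t,f_1,\ldots,f_E)]_{ij}$ for all $i,j\in[E]$. The redundancy assigned to a message then depends only on its function value:
\[
\mathrm{Enc}(\mathbf{u})=\bigl(\mathbf{u},\mathbf{p}'_{i}\bigr)\quad\text{whenever } f(\mathbf{u})=f_i .
\]
This encoder is systematic and has redundancy $r$, so it remains only to check the FCSPLC condition of Definition~\ref{FCSPLC}.

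Take $\mathbf{u},\mathbf{v}\in\mathbb{Z}_q^k$ with $f(\mathbf{u})=f_i\neq f_j=f(\mathbf{v})$, so that $i\neq j$. By the definition of the pair Lee-distance between function values, $d_p(\mathbf{u},\mathbf{v})\ge d_p(f_i,f_j)$. The code property then gives
\[
d_p(\mathbf{p}'_i,\mathbf{p}'_j)\ge\Bigl[\bigl(2t+1+\tfrac q2\bigr)-d_p(f_i,f_j)\Bigr]^+\ge \bigl(2t+1+\tfrac q2\bigr)-d_p(\mathbf{u},\mathbf{v}).
\]
Applying the lower bound of Lemma~\ref{lem_delta} with $m=k$ yields
\[
d_p\bigl(\mathrm{Enc}(\mathbf{u}),\mathrm{Enc}(\mathbf{v})\bigr)\ge d_p(\mathbf{u},\mathbf{v})+d_p(\mathbf{p}'_i,\mathbf{p}'_j)-\tfrac q2\ge 2t+1 .
\]
Hence $\mathrm{Enc}$ is an FCSPLC for $f$, and therefore $r_p^f(k,t)\le N_p\bigl(\mathbf{D}_f^{(2)}(t,f_1,\ldots,f_E)\bigr)$.

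An alternative route is to observe that the lifted code $\{\mathbf{p}_{\ell}\}$, with $\mathbf{p}_\ell=\mathbf{p}'_i$ for $f(\mathbf{u}_\ell)=f_i$, is a $\mathbf{D}_f^{(2)}(t,\mathbf{u}_1,\ldots,\mathbf{u}_{q^k})$-code, since each of its entries is dominated entrywise by the corresponding function-level entry. The result then follows from Theorem~\ref{Red_Bounds}.

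No step is expected to be a serious obstacle. The only points needing care are that the $[\cdot]^+$ truncation is harmless, because a maximum with $0$ only strengthens the inequality, and that codewords with equal function values are allowed to share a parity vector. This causes no conflict, because the corresponding entries of $\mathbf{D}_f^{(2)}$ are zero and impose no constraint.
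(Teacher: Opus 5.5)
Your proof is correct and matches the paper's argument: you assign the parity vector of $f_i$ to every message in $f^{-1}(f_i)$, use $d_p(\mathbf{u},\mathbf{v})\ge d_p(f_i,f_j)$ to absorb the $[\cdot]^+$ entry, and then apply the lower bound of Lemma~\ref{lem_delta} to reach $2t+1$. Your alternative route, lifting the code to a $\mathbf{D}_f^{(2)}(t,\mathbf{u}_1,\ldots,\mathbf{u}_{q^k})$-code and invoking Theorem~\ref{Red_Bounds}, is also valid but not needed.
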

    	
    	\begin{IEEEproof}
    	Let $\mathcal{P}
    		=
    		\{\mathbf{p}_1,\ldots,\mathbf{p}_E\}$ 
    		be a
    		$\mathbf{D}_f^{(2)}(t,f_1,\ldots,f_E)_p$-code
    		of length
    		$N_p(\mathbf{D}_f^{(2)}(t,f_1,\ldots,f_E))$. Assign the redundancy vector $\mathbf{p}_i$ to every information vector
    		$\mathbf{u}\in\mathbb{Z}_q^k$ satisfying
    		$f(\mathbf{u})=f_i$.
    		Thus, all information vectors having the same function value are assigned the same redundancy vector. Define the encoding function $\mathrm{Enc}:\mathbb{Z}_q^k
    		\to
    		\mathbb{Z}_q^{k+r},
    		\,
    		\mathrm{Enc}(\mathbf{u})
    		=
    		(\mathbf{u},\mathbf{p}_i)$, 
    		where $f(\mathbf{u})=f_i$. Since $\mathcal{P}$ is a
    		$\mathbf{D}_f^{(2)}(t,f_1,\ldots,f_E)_p$-code,
    		for any distinct function values $f_i,f_j$, $d_p(\mathbf{p}_i,\mathbf{p}_j)
    		\ge
    		\left[
    		\left(2t+1+\frac{q}{2}\right)
    		-
    		d_p(f_i,f_j)
    		\right]^+$. Now let $\mathbf{u},\mathbf{v}\in\mathbb{Z}_q^k$ satisfy
    		$f(\mathbf{u})=f_i$,
    		$f(\mathbf{v})=f_j$,
    		and $f_i\neq f_j$.
    		By the definition of $d_p(f_i,f_j)$, $d_p(\mathbf{u},\mathbf{v})
    		\ge
    		d_p(f_i,f_j)$.
    		Combining this with the previous inequality gives $d_p(\mathbf{u},\mathbf{v})
    		+
    		d_p(\mathbf{p}_i,\mathbf{p}_j)
    		\ge
    		2t+1+\frac{q}{2}$. Using Lemma~\ref{lem_delta}, we obtain
    		$d_p\bigl(\mathrm{Enc}(\mathbf{u}),
    			\mathrm{Enc}(\mathbf{v})\bigr)
    			\ge
    			d_p(\mathbf{u},\mathbf{v})
    			+
    			d_p(\mathbf{p}_i,\mathbf{p}_j)
    			-
    			\frac{q}{2}
    			\ge
    			2t+1.$
    		Hence, the encoding function $\mathrm{Enc}$ defines an FCSPLC for the function $f$. Therefore, an FCSPLC exists with redundancy
    		$N_p(\mathbf{D}_f^{(2)}(t,f_1,\ldots,f_E))$, implying that
    		$r_p^f(k,t)
    		\le
    		N_p\bigl(\mathbf{D}_f^{(2)}(t,f_1,\ldots,f_E)\bigr)$.
    	\end{IEEEproof}
   
    	Since the sizes of $\boldsymbol {D}_{f}^{(1)}(t,\boldsymbol {u}_{1},\ldots,\boldsymbol {u}_{M})$ and $\mathbf{D}_f^{(2)}(t,f_1,\ldots,f_E)$ are are typically much smaller than $\boldsymbol {D}_{f}^{(1)}(t,\boldsymbol {u}_{1},\ldots,\boldsymbol {u}_{q^k})$ and $\boldsymbol {D}_{f}^{(2)}(t,\boldsymbol {u}_{1},\ldots,\boldsymbol {u}_{q^k})$, respectively, the corresponding minimum lengths $N_{p}\big (\boldsymbol {D}_{f}^{(1)}\\(t,\boldsymbol {u}_{1},\ldots,\boldsymbol {u}_{M})\big)$ and $N_p\bigl(\mathbf{D}_f^{(2)}(t,f_1,\ldots,f_E)\bigr)$ are generally easier to evaluate.
    \subsection{Bounds on the Shortest Length of Irregular-Pair Lee-distance Codes}
    To determine the optimal redundancy, it is necessary to evaluate $N_p(\mathbf{D})$. While the exact value of $N_p(\mathbf{D})$ is generally difficult to obtain, useful lower and upper bounds can be derived. 
    The following lemmas are used in the derivation of a Plotkin-type lower bound on the shortest length of irregular-pair Lee-distance  codes over $\mathbb{Z}_q$ in Theorem~\ref{Plotkin_bound}. 
    \begin{lem} \label{Lem1_PB}
    	Let $S_k=\frac12\sum_{a,b \in (\mathbb{Z}_q \times \mathbb{Z}_q)} n_a n_b d_p(a,b)$, where $d_p(a,b)\le \frac q2 \text{ for all }  a,b\in (\mathbb{Z}_q \times \mathbb{Z}_q)$, and $\sum_{a\in (\mathbb{Z}_q \times \mathbb{Z}_q)} n_a=M$. Then $S_k$ is maximized when $\mathcal{A}=\{a\in (\mathbb{Z}_q \times \mathbb{Z}_q):n_a>0 \}$ is a largest subset satisfying $d_p(a, b) =\frac{q}{2}, \forall \, a \neq b \in \mathcal{A}$, and distributing $M$ as uniformly as possible among the elements of $\mathcal{A}$, i.e., $|n_a-n_b| \leq 1 \, \forall \, a, b \in \mathcal{A}$. Equivalently, if $L=|\mathcal{A}|$, then $n_a\in \{\left\lfloor \frac{M}{L}\right\rfloor, \left\lceil \frac{M}{L}\right\rceil\}$.
    \end{lem}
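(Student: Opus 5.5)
The plan is to separate the two ingredients of $S_k$: the \emph{size} of the pairwise distances, and the \emph{spread} of the multiplicities $n_a$. Since $d_p(a,b)\le \frac q2$ for all $a,b$, write $d_p(a,b)=\frac q2-e(a,b)$ for $a\neq b$, with $e(a,b)\ge 0$. Using $d_p(a,a)=0$ and $\sum_a n_a=M$, this gives
\[
S_k=\frac{q}{4}\Bigl(M^2-\sum_{a} n_a^2\Bigr)-\frac12\sum_{a\neq b} n_a n_b\, e(a,b).
\]
The first term depends only on the multiset $\{n_a\}$. The second term is a nonnegative penalty that vanishes exactly when every pair of distinct elements of the support $\mathcal A$ is at distance $\frac q2$.

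First I treat a fixed equidistant support. If $\mathcal A$ is equidistant with $|\mathcal A|=L'$, then $S_k=\frac q4\bigl(M^2-\sum n_a^2\bigr)$. Under $\sum n_a=M$, the quantity $\sum n_a^2$ is minimized by the balancing argument: if $n_a\ge n_b+2$, then replacing $(n_a,n_b)$ by $(n_a-1,n_b+1)$ decreases $\sum n_a^2$ by $2(n_a-n_b-1)>0$. Hence the optimum has $n_a\in\{\lfloor M/L'\rfloor,\lceil M/L'\rceil\}$. The resulting minimum $\sum n_a^2$ is nonincreasing in $L'$, and strictly decreasing while $L'\le M$. Therefore, among equidistant supports, a largest one with size $L$ and a balanced distribution is optimal.

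The main obstacle is to rule out non-equidistant supports. A support with more elements could lower $\sum n_a^2$ at the cost of a positive penalty. I would handle this with a Motzkin--Straus type argument.

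Let $G$ be the graph on $\mathbb Z_q\times\mathbb Z_q$ whose edges join pairs at distance $\frac q2$. Its clique number is $L$. Since all distances are integers, every non-edge pair has $d_p(a,b)\le \frac q2-1$. Take an optimal $(n_a)$ whose support contains a non-adjacent pair $a,b$. Consider the transfer of mass from $a$ to $b$ (or from $b$ to $a$). Along this line $S_k$ is a quadratic in the transferred amount $s$, with leading coefficient $\frac12\bigl(2d_p(a,b)-d_p(a,a)-d_p(b,b)\bigr)=d_p(a,b)\ge 0$.

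If $d_p(a,b)=0$, then $a=b$, which is excluded. So the quadratic is convex in $s$, and its maximum over the feasible segment is attained at an endpoint, where $n_a=0$ or $n_b=0$. Such a transfer therefore never decreases $S_k$ and strictly shrinks the support.

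Iterating this step, I reach an optimal configuration whose support is a clique of $G$, i.e.\ an equidistant set, with no loss in $S_k$. By the first step, the best equidistant configuration is a balanced distribution over a largest clique. This proves the maximization claim.

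The step I expect to need the most care is the integer version of this transfer. Moving single units keeps the $n_a$ integral, and convexity is still enough: the restriction of a convex quadratic to the integer points of a segment is maximized at an endpoint. A second point to check is that iterating the transfer terminates. It does, because the support size strictly decreases at each step.
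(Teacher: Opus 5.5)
Your decomposition $S_k=\frac q4\bigl(M^2-\sum_a n_a^2\bigr)-\frac12\sum_{a\neq b}n_an_b\,e(a,b)$ is correct. So is the balancing argument over a fixed equidistant support, and together these two steps are essentially all of the paper's proof, which also never rules out non-equidistant supports of size larger than $4$. You are right that this is the real obstacle. However, the transfer step you add to close it fails because of a sign error. Along $\mathbf n+s(\mathbf e_b-\mathbf e_a)$, the direction vector has entry $+1$ at $b$ and $-1$ at $a$. The two off-diagonal terms therefore each contribute $-d_p(a,b)$, and the $s^2$-coefficient of $S_k$ is
\[
\tfrac12\bigl(d_p(a,a)+d_p(b,b)-2d_p(a,b)\bigr)=-d_p(a,b),
\]
not $+d_p(a,b)$. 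You noted yourself that $d_p(a,b)>0$ for $a\neq b$, so $S_k$ is strictly \emph{concave} along every transfer direction. Moving all the mass to one endpoint can then strictly decrease $S_k$. The simplest instance is $n_a=n_b=1$: this gives $S_k=d_p(a,b)>0$, while either endpoint gives $S_k=0$. The Motzkin--Straus trick works for the $0/1$ adjacency form because a non-adjacent pair contributes $0$ to the $s^2$-coefficient there. For the distance kernel, every distinct pair contributes $-d_p(a,b)\neq 0$.

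This cannot be patched within the same framework. Your argument uses only that distances are positive integers bounded by the maximum value and that the clique number is $L$, and under those hypotheses alone the conclusion is false. Take a four-point metric space with one pair at distance $3$ and the other five pairs at distance $2$; this is a valid metric with maximum $3$ and clique number $2$. With $M=4$, placing $n=(2,2)$ on the clique gives $S_k=12$, whereas one unit on each point gives $S_k=3+5\cdot 2=13$. So a correct proof must use the specific form $d_p(a,b)=\max\{d_L(a_1,b_1),d_L(a_2,b_2)\}$ on $\mathbb Z_q\times\mathbb Z_q$.

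One route that works, at least when $4\mid M$, combines two facts:
\begin{itemize}
\item the pointwise bound $\max\{x,y\}\le x+y-\frac2q xy$ for $x,y\in[0,\frac q2]$;
\item for even $q$, every nonzero DFT coefficient of the Lee weight on $\mathbb Z_q$ is nonpositive.
\end{itemize}
Apply the second fact to the two coordinate marginals and to the product kernel $d_L(a_1,b_1)\,d_L(a_2,b_2)$. This yields $S_k\le\frac{3q}{16}M^2$, which is attained by the balanced $4$-point clique. The other residues of $M$ modulo $4$ would still need an additional integrality argument.
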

    \begin{IEEEproof}
    	Since $d_p(a,a)=0$ for all $a\in (\mathbb{Z}_q\times \mathbb{Z}_q)$, we write
    	\begin{equation}
    		S_k=\frac12\sum_{\substack{a\neq b\in (\mathbb{Z}_q\times \mathbb{Z}_q) }} n_a n_b d_p(a,b).
    	\end{equation} 
    	Let $\mathcal A=\{a\in(\mathbb Z_q\times\mathbb Z_q): n_a>0\}$. Since $d_p(a,b)\le \frac q2$ for all $a,b\in(\mathbb Z_q\times\mathbb Z_q)$, it follows that
    	\begin{equation}\label{eq2}
    		S_k
    		\le
    		\frac12
    		\sum_{a\neq b}
    		n_a n_b
    		\frac q2
    		=
    		\frac q4
    		\sum_{a\neq b}
    		n_a n_b.
    	\end{equation}
    	Equality holds in (\ref{eq2}) if and only if $d_p(a,b)=\frac q2$ for every pair $a\neq b$ satisfying $n_a n_b>0$.
    	Since $\mathcal A=\{a\in(\mathbb Z_q\times\mathbb Z_q):n_a>0\}$, the above condition can be rewritten as $d_p(a,b)=\frac q2,\,\forall\, a\neq b\in\mathcal A$. Therefore, for fixed values of $n_a$, the maximum possible value of $S_k$ is attained precisely when all elements of $\mathcal A$ are mutually at distance $\frac q2$.
    	
    	Assume now that $\mathcal{A}$ is such an equidistant subset and let $L=|\mathcal{A}|$. Then
    	\begin{equation}
    		S_k=\frac{q}{4}\sum_{a \neq b} n_a n_b.
    	\end{equation}
    	Using $\sum_{a\neq b} n_a n_b=\left(\sum_{a\in\mathcal{A}} n_a\right)^2 -\sum_{a\in\mathcal{A}} n_a^2$, together with $\sum_{a\in\mathcal{A}} n_a=M$, we obtain 
    	\begin{equation} \label{eq4}
    		S_k=\frac{q}{4}
    		\left(
    		M^2-\sum_{a\in\mathcal{A}} n_a^2
    		\right).
    	\end{equation}
    	Thus maximizing $S_k$ is equivalent to minimizing $\sum_{a\in\mathcal{A}} n_a^2$. By the Cauchy--Schwarz inequality,
    	\begin{equation} \label{eq5}
    		\sum_{a\in\mathcal{A}} n_a^2
    		\ge
    		\frac{1}{L}
    		\left(
    		\sum_{a\in\mathcal{A}} n_a
    		\right)^2=\frac{M^2}{L}.
    	\end{equation}
    	Since the $n_a$ are integers, the minimum of $\sum_{a\in\mathcal{A}} n_a^2$ subject to $\sum_{a\in\mathcal{A}} n_a=M$
    	is attained when the $n_a$ are as equal as possible. Indeed, suppose there exist $a,b\in\mathcal{A}$ such that $n_a-n_b\ge 2$. Replacing $n_a$ by $n_a-1$ and $n_b$ by $n_b+1$ preserves the constraint $\sum_{a\in\mathcal{A}} n_a=M$ and changes the sum of squares by $(n_a-1)^2+(n_b+1)^2-(n_a^2+n_b^2)=-2(n_a-n_b-1)<0$. Thus, the sum of squares is strictly decreased, contradicting its minimality. Hence, at the minimum, $|n_a-n_b|\le 1$ for all $a,b\in\mathcal{A}$. Equivalently, $n_a\in \left\{\left\lfloor\frac{M}{L}\right\rfloor, \left\lceil\frac{M}{L}\right\rceil \right\},\, \forall\, a\in\mathcal{A}$. Substituting (\ref{eq5}) into (\ref{eq4}) yields
    	\begin{equation}
    		S_k
    		\le
    		\frac{q}{4}M^2
    		\left(
    		1-\frac{1}{L}
    		\right).
    	\end{equation}
    	
    	Finally, observe that the function $1-\frac{1}{L}$ is strictly increasing in $L$. Hence the upper bound on $S_k$ increases with the size of the equidistant subset $\mathcal A$. Therefore, $S_k$ is maximized when $\mathcal A$ is a largest subset of $(\mathbb Z_q\times\mathbb Z_q)$ satisfying $d_p(a,b)=\frac q2$ for all $a\neq b\in\mathcal A$, and the values $n_a$ are distributed as uniformly as possible over $\mathcal A$, i.e., $|n_a-n_b|\le 1$ for all $a,b\in\mathcal A$. 
    \end{IEEEproof}
    \begin{lem}\label{Lem2_PB}
    	Let $\mathcal{A}=\{x, x+\frac q 2\}\times\{y, y+\frac q 2\}, x,y\in\mathbb Z_q$. Then $\mathcal{A}$ is a maximum-cardinality subset of $(\mathbb Z_q \times \mathbb Z_q)$ such that $d_p(a,b)=\frac q 2, \forall \, a\neq b\in A$ with $|\mathcal{A}|=4$.
    \end{lem}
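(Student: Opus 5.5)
Here $a=(a_1,a_2)$ and $b=(b_1,b_2)$ are symbol pairs, i.e.\ length-$2$ vectors over $\mathbb Z_q$. By the definition of the symbol-pair Lee-distance with cyclic wrap-around,
\[
d_p(a,b)=\max\{d_L(a_1,b_1),d_L(a_2,b_2)\}+\max\{d_L(a_2,b_2),d_L(a_1,b_1)\}=2\max\{d_L(a_1,b_1),d_L(a_2,b_2)\}.
\]
Lemma~\ref{Lem1_PB} assumes $d_p(a,b)\le \frac q2$ for all pairs, so the quantity relevant there is presumably the per-pair contribution $\max\{d_L(a_1,b_1),d_L(a_2,b_2)\}$, which is at most $\frac q2$. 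I will use this interpretation: two pairs are at distance $\frac q2$ iff at least one coordinate differs by exactly $\frac q2$ in Lee distance, i.e.\ $b_1=a_1+\frac q2$ or $b_2=a_2+\frac q2$. Under the literal length-$2$ $d_p$ the same argument applies with every distance doubled.

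\emph{Step 1: $\mathcal A$ is equidistant.} Take $a\ne b$ in $\mathcal A=\{x,x+\frac q2\}\times\{y,y+\frac q2\}$. They differ in at least one coordinate. In that coordinate both entries lie in $\{x,x+\frac q2\}$ (or in $\{y,y+\frac q2\}$), so the Lee distance there is exactly $\frac q2$. The other coordinate has Lee distance $0$ or $\frac q2$. Hence the maximum equals $\frac q2$, and $|\mathcal A|=4$.

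\emph{Step 2: no equidistant set has more than $4$ elements.} Let $\mathcal B$ satisfy the property and suppose $|\mathcal B|\ge 5$. Define $\phi:\mathbb Z_q\times\mathbb Z_q\to \mathbb Z_{q/2}\times\mathbb Z_{q/2}$ by reducing each coordinate modulo $\frac q2$; this uses that $q$ is even, as in the setting $q=2^m$.

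The key claim is that the elements of $\mathcal B$ take at most one value of $\phi$ per ``line''. The clean way to see this is to look at the contrapositive of the equidistance condition. Two pairs $a,b$ are \emph{not} at distance $\frac q2$ exactly when $d_L(a_1,b_1)<\frac q2$ and $d_L(a_2,b_2)<\frac q2$, i.e.\ $b_1\ne a_1+\frac q2$ and $b_2\ne a_2+\frac q2$. So equidistance requires that every two distinct elements of $\mathcal B$ are antipodal in some coordinate.

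Fix $a\in\mathcal B$. Every other element of $\mathcal B$ has first coordinate $a_1+\frac q2$ or second coordinate $a_2+\frac q2$. Split $\mathcal B\setminus\{a\}$ into
\[
R_1=\{b: b_1=a_1+\tfrac q2\},\qquad R_2=\{b:b_1\ne a_1+\tfrac q2,\ b_2=a_2+\tfrac q2\}.
\]

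Consider two elements $b,c\in R_1$. Their first coordinates coincide, so they must be antipodal in the second coordinate: $c_2=b_2+\frac q2$. A third element of $R_1$ would have to be antipodal to both $b_2$ and $c_2=b_2+\frac q2$, which is impossible. Hence $|R_1|\le 2$.

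Similarly, any two elements of $R_2$ share their second coordinate, so they must be antipodal in the first coordinate, and $|R_2|\le 2$.

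This only gives $|\mathcal B|\le 5$, so a further step is needed. Suppose $|\mathcal B|=5$, so $|R_1|=|R_2|=2$. Write
\[
R_1=\{(a_1+\tfrac q2,\beta),(a_1+\tfrac q2,\beta+\tfrac q2)\},\qquad R_2=\{(\gamma,a_2+\tfrac q2),(\gamma+\tfrac q2,a_2+\tfrac q2)\},
\]
where $\gamma,\gamma+\frac q2\ne a_1+\frac q2$. Now compare an element of $R_2$ with an element of $R_1$. Their first coordinates are $\gamma$ (or $\gamma+\frac q2$) and $a_1+\frac q2$, which are not antipodal unless $\gamma=a_1$ (respectively $\gamma+\frac q2=a_1$). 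Exactly one of the two elements of $R_2$ can satisfy this. The other element of $R_2$ must therefore be antipodal in the second coordinate to both $\beta$ and $\beta+\frac q2$, which is impossible. This contradiction gives $|\mathcal B|\le 4$.

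\emph{Main obstacle.} The hard part is this final counting step: I expect to have to track carefully which first coordinates are antipodal between elements of $R_1$ and $R_2$. There is also a smaller issue to settle first, namely the normalization of $d_p$ on pairs, where the factor of $2$ arising from the cyclic wrap-around must be reconciled with the $\frac q2$ threshold used in Lemma~\ref{Lem1_PB}.
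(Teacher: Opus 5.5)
Your proof is correct and follows essentially the paper's route: fix one point, observe that every other point is antipodal to it in some coordinate, bound each of the two resulting classes by $2$ via uniqueness of the antipode, and rule out both classes being full. You also adopt the same per-pair distance $\max\{d_L(a_1,b_1),d_L(a_2,b_2)\}$ that the paper silently uses. Your disjoint split $R_1,R_2$ is slightly cleaner than the paper's sets $V,H$, which overlap at $(x+\frac q2,y+\frac q2)$; the paper's claim $|H\cap\mathcal S|\le 1$ mishandles this overlap point, although its final bound survives.

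One wording fix: in the last step, \emph{at most} one element of $R_2$ has first coordinate antipodal to $a_1+\frac q2$, not ``exactly one''. In fact none does, given $\gamma,\gamma+\frac q2\neq a_1+\frac q2$. The contradiction is unaffected. The map $\phi$ is never used and can be dropped.
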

    \begin{IEEEproof}
    	Let $\mathcal A=\{x,x+\frac q 2\}\times\{y,y+\frac q 2\}$. Explicitly, $\mathcal A= \{(x,y),(x,y+\frac q 2), (x+\frac q 2,y),(x+\frac q 2,y+\frac q 2)\}$.
    	Hence, $|\mathcal A|=2\cdot 2=4$. Now let $a=(a_1,a_2)$ and $b=(b_1,b_2)$ be distinct elements of $\mathcal A$. For each coordinate $i\in\{1,2\}$, either $a_i=b_i$ or $a_i-b_i\equiv \frac q 2 \pmod q$. Therefore, $d_L(a_i,b_i)\in\{0,\frac q 2\}$. Since $a\neq b$, at least one coordinate differs, and hence $d_L(a_i,b_i)=\frac q 2$
    	for some $i$. Consequently, $d_p(a,b)=\max\{d_L(a_1,b_1),d_L(a_2,b_2)\}=\frac q2$. Thus every distinct pair of elements of $\mathcal A$ is at symbol-pair Lee-distance $\frac q2$.
    	
    	We already know that $\mathcal A=\{x,x+\tfrac q2\}\times\{y,y+\tfrac q2\}$ has cardinality \(4\) and satisfies $d_p(a,b)=\frac q2,\, \forall\, a\neq b\in\mathcal A$. It remains to show that no larger subset of \(\mathbb Z_q\times\mathbb Z_q\) can have this property. Let \(\mathcal S\subseteq \mathbb Z_q\times\mathbb Z_q\) be such that $d_p(u,v)=\frac q2,\, \forall\, u\neq v\in\mathcal S$. Choose an arbitrary point \((x,y)\in\mathcal S\). Since the maximum Lee-distance in \(\mathbb Z_q\) is \(\tfrac q2\), every other point of \(\mathcal S\) must differ from \((x,y)\) by \(\tfrac q2\) in at least one coordinate. Hence $\mathcal S\setminus\{(x,y)\} \subseteq V\cup H$, where $V=\{(x+\tfrac q2,t):t\in\mathbb Z_q\},\, H=\{(s,y+\tfrac q2):s\in\mathbb Z_q\}$. 
    	
    	First consider \(V\). If \((x+\tfrac q2,t_1)\) and \((x+\tfrac q2,t_2)\) are distinct points of \(\mathcal S\), then their first coordinates coincide, so $d_p\big((x+\tfrac q2,t_1),(x+\tfrac q2,t_2)\big)=d_L(t_1,t_2)$. Therefore \(d_L(t_1,t_2)=\tfrac q2\). Since $q$ is even, the unique element of $\mathbb{Z}_q$ at Lee-distance $\tfrac q2$ from $t_1$ is $t_1+\tfrac q2$, it follows that \(t_2=t_1+\tfrac q2\). Hence \(V\cap\mathcal S\) contains at most two points, i.e., $|V\cap\mathcal S|\le 2$ . Similarly, $|H\cap\mathcal S|\le 2$. If $|V\cap\mathcal S|\le 1$ then $|S| \le 1+1+2=4$. Therefore, it remains to consider the case \(|V\cap\mathcal S|=2\). 
    	
    	Suppose \(|V\cap\mathcal S|=2\). Write $V\cap\mathcal S=\Big\{\big(x+\tfrac q2,t\big),\big(x+\tfrac q2,t+\tfrac q2\big)
    	\Big\}$. Let \((s,y+\tfrac q2)\in H\cap\mathcal S\). Since it must be at a pair-distance of \(\tfrac q2\) from both the points above, we have $d_p\!\left(\big(x+\tfrac q2,t\big),(s,y+\tfrac q2)\right)=\frac q2$ and $d_p\!\left(\big(x+\tfrac q2,t+\tfrac q2\big),(s,y+\tfrac q2)\right)=\frac q2$. For the first equality, either \(s=x\) or \(t=y\). For the second equality, either \(s=x\) or \(t=y+\tfrac q2\). If \(s\neq x\), then both \(t=y\) and \(t=y+\tfrac q2\) must hold,
    	which is impossible. Therefore \(s=x\). Hence every point of \(H\cap\mathcal S\) must equal $(x,y+\tfrac q2)$, so
    	$|H\cap\mathcal S|\le 1$. Consequently, $|\mathcal S| \le 1+2+1=4$. Therefore every subset of \(\mathbb Z_q\times\mathbb Z_q\) whose distinct elements are pairwise at distance \(\frac q2\) has cardinality at most \(4\). Since \(\mathcal A\) has cardinality \(4\), it is a maximum-cardinality subset. 
    \end{IEEEproof} 
    The following theorem provides a Plotkin-type lower bound on the shortest length of irregular-pair Lee-distance codes over $\mathbb{Z}_q$.
    \begin{thm} \label{Plotkin_bound}
    For any distance matrix $\mathbf{D} \in \mathbb{N}_{0}^{M \times M}$,
    \begin{align} N_{p}(\boldsymbol {D})\geq 
    	\begin{cases}
    		\frac {16}{3M^{2}q}\sum \limits _{i,j,i< j}[\boldsymbol {D}]_{ij}, & M\equiv 0\pmod{4}, \\ \frac {16}{(3M^{2}-4)q}\sum \limits _{i,j,i< j}[\boldsymbol {D}]_{ij}, & M\equiv 2\pmod{4}, \\ \frac {16}{3(M^{2}-1)q}\sum \limits _{i,j,i< j}[\boldsymbol {D}]_{ij}, & M \text { is odd}.
    	\end{cases}
    \end{align}	
    \end{thm}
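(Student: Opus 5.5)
The plan is the classical Plotkin double-counting argument, adapted to the symbol-pair Lee metric. Let $\mathcal{P}=\{\mathbf{p}_1,\ldots,\mathbf{p}_M\}\subseteq\mathbb{Z}_q^r$ be a $\mathbf{D}_p$-code of length $r=N_p(\mathbf{D})$, ordered as in Definition~\ref{D_p-code}. I would bound the quantity
\[
S\triangleq\sum_{i<j} d_p(\mathbf{p}_i,\mathbf{p}_j)
\]
from below and from above. The lower bound is immediate from the definition: $S\ge\sum_{i<j}[\mathbf{D}]_{ij}$.

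For the upper bound, I would write the $r\times M$ array whose columns are the codewords and read it in overlapping pairs of rows. The $k$th pair position, for $k\in[r]$ with indices mod $r$, contributes the symbol pair $(p_{i,k},p_{i,k+1})\in\mathbb{Z}_q\times\mathbb{Z}_q$ for codeword $i$. By definition, $d_p(\mathbf{p}_i,\mathbf{p}_j)=\sum_{k=1}^{r}\max\{d_L(p_{i,k},p_{j,k}),d_L(p_{i,k+1},p_{j,k+1})\}$. Each summand equals $d_p(a,b)$ for the two length-$2$ pairs $a,b$ involved, understood as the max of the coordinate Lee distances, which is exactly the setting of Lemma~\ref{Lem1_PB}. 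Exchanging the order of summation gives $S=\sum_{k=1}^{r}S_k$. Here $S_k=\frac12\sum_{a,b}n_a^{(k)}n_b^{(k)}d_p(a,b)$, and $n_a^{(k)}$ counts the codewords whose $k$th symbol pair equals $a$, so that $\sum_a n_a^{(k)}=M$. Since every Lee distance is at most $\frac q2$, the hypothesis $d_p(a,b)\le\frac q2$ of Lemma~\ref{Lem1_PB} holds.

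Next I would invoke Lemma~\ref{Lem1_PB} together with Lemma~\ref{Lem2_PB}. They show that each $S_k$ is at most its value when the $M$ pairs are spread as evenly as possible over a $4$-element set $\mathcal{A}$ that is pairwise at distance $\frac q2$. In that case $S_k=\frac q4\bigl(M^2-\sum_{a\in\mathcal A}n_a^2\bigr)$ with $n_a\in\{\lfloor M/4\rfloor,\lceil M/4\rceil\}$. Minimising the sum of squares depends on $M\bmod 4$:
\begin{itemize}
\item For $M=4m$, the sum of squares is $M^2/4$, which gives $S_k\le\frac{3qM^2}{16}$.
\item For $M=4m+2$, the counts are $(m+1,m+1,m,m)$ and the sum of squares is $\frac{M^2+4}{4}$, which gives $S_k\le\frac{q(3M^2-4)}{16}$.
\item For odd $M$, writing $M=4m+1$ or $M=4m+3$, the sum of squares equals $\frac{M^2+3}{4}$ in both cases, which gives $S_k\le\frac{3q(M^2-1)}{16}$.
\end{itemize}

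Finally, combining $\sum_{i<j}[\mathbf{D}]_{ij}\le S\le r\cdot\max_k S_k$ and dividing through yields the three stated cases. The step I expect to be the main obstacle is justifying that the per-position maximum really is the equidistant $4$-point configuration. Lemma~\ref{Lem1_PB} bounds $S_k$ by $\frac q4\sum_{a\ne b}n_an_b$ for an arbitrary support, and that bound alone could exceed the $L=4$ value when the support is large but not equidistant. I would lean on the lemma's stated conclusion as given. If a self-contained argument is needed, I would instead bound $S_k$ coordinatewise: use $\max\{x,y\}\le$ an appropriate combination, together with the fact that on each coordinate of $\mathbb{Z}_q$ the Lee-distance sum is maximised by splitting the $M$ values evenly between two antipodal points. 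I would then check that this reproduces the $\frac{3}{16}qM^2$ constant.
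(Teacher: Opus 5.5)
Your argument is the paper's proof almost line for line. It uses the same double counting over the $r$ cyclic pair positions, the same appeal to Lemmas~\ref{Lem1_PB} and~\ref{Lem2_PB}, and the same sum-of-squares evaluation for each residue of $M$ modulo $4$; your arithmetic there is right. The weak point is exactly the one you flag, and it is a genuine gap that the paper shares rather than resolves.

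The proof of Lemma~\ref{Lem1_PB} shows $S_k\le\frac q4\sum_{a\ne b}n_an_b$ and observes that equality forces an equidistant support. It then optimizes only over equidistant supports, which have $L\le 4$ by Lemma~\ref{Lem2_PB}. It never shows that a non-equidistant support with $L>4$, where the crude bound can be as large as $\frac q4M^2(1-\frac1L)$, cannot exceed the $L=4$ value. For example, take $q=4$, $M=16$ and one pair at each point of $\mathbb{Z}_4^2$. The crude bound is $240$ while the target is $192$. The true value is $176$, but nothing in the lemma's argument establishes that. So taking the lemma's conclusion ``as given'' means importing an unproved claim.

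Your fallback, as sketched, would not close the gap either. Per-coordinate antipodal optimality controls $\sum(x+y)$, but combined with any bound on $\max\{x,y\}$ that is affine in $(x,y)$ it gives at best about $\frac q4M^2$. What is missing is control of the interaction between the two coordinates. One way to get it is as follows.
\begin{itemize}
\item For $x,y\in[0,\frac q2]$ one has $\max\{x,y\}\le x+y-\frac2q\,xy=\frac q2-\frac2q\bigl(\frac q2-x\bigr)\bigl(\frac q2-y\bigr)$.
\item The function $\psi(z)=\frac q2-w_L(z)$ has nonnegative DFT on $\mathbb{Z}_q$. It equals $\frac q4$ at $0$, $\frac{1}{q\sin^2(\pi k/q)}$ at odd $k$, and $0$ at nonzero even $k$.
\item Hence $(z_1,z_2)\mapsto\psi(z_1)\psi(z_2)$ has nonnegative DFT on $\mathbb{Z}_q^2$. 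Keeping only the zero frequency gives $\sum_{i,j}\psi(x_i-x_j)\psi(y_i-y_j)\ge\frac{q^2M^2}{16}$ for any $M$ pairs $(x_i,y_i)$.
\end{itemize}
Therefore $S_k\le\frac{3qM^2}{16}$ with no assumption on the support. This proves the case $M\equiv0\pmod 4$ and yields $N_p(\mathbf D)\ge\frac{16}{3M^2q}\sum_{i<j}[\mathbf D]_{ij}$ for every $M$. The sharper constants for $M\not\equiv0\pmod4$ do not follow from it. Neither your proposal nor the paper supplies a valid argument for them.
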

    \begin{IEEEproof}
    	Let $\mathcal P=\{\mathbf p_1,\mathbf p_2,\ldots,\mathbf p_M\}\subseteq \mathbb Z_q^r$ be a $\mathbf D_p$-code corresponding to the DRM $\mathbf D$, and let $N_p(\mathbf D)=r$. By Definition of $\mathbf D_p$ codes,
    	$d_{p}(\mathbf{p}_i, \mathbf{p}_j) \geq [\mathbf{D}]_{ij}, \text{ for all } i, j \in [M]$.
    	Therefore,
    	\begin{equation}\label{eq:sum_D}
    		\sum_{i,j \in [M] : i <j} [\mathbf{D}]_{ij} \le \sum_{i,j \in [M] : i <j} d_p(\mathbf{p}_i, \mathbf{p}_j).
    	\end{equation} 
    	
    	Let $S=\sum_{i<j} d_p(\mathbf p_i,\mathbf p_j)$. For each coordinate $k\in[r]$, let $S_k=\frac12
    	\sum_{a,b\in(\mathbb Z_q\times\mathbb Z_q)}
    	n_a^{(k)}n_b^{(k)}d_p(a,b)$,
    	where $n_a^{(k)}$ denotes the number of occurrences of the pair
    	$a\in(\mathbb Z_q\times\mathbb Z_q)$ in the $k$th coordinate column of the codewords. Then $S=\sum_{k=1}^{r} S_k$. For notational simplicity, we suppress the superscript $(k)$ from $n_a^{(k)}$.
    	
    	By Lemmas~\ref{Lem1_PB} and~\ref{Lem2_PB}, $S_k$ is maximized when
    	the integers $n_a$ are distributed as uniformly as possible, i.e.,
    	$|n_a-n_b|\le 1$ for all $a,b\in\mathcal A$, subject to
    	$\sum_{a\in\mathcal A}n_a=M$, and $|\mathcal A|=4$.
    	Hence
    	\begin{equation}
    		S_k\le \frac q4\left(M^2-\sum_{a\in\mathcal A}n_a^2\right).
    	\end{equation}
    	
    	Since the same optimization arises for every coordinate
    	column, the above upper bound is valid for each $S_k$. Therefore,
    	\[
    	S=\sum_{k=1}^{r}S_k
    	\le
    	\sum_{k=1}^{r}
    	\frac q4\left(M^2-\sum_{a\in\mathcal A}n_a^2\right)
    	=
    	\frac q4\left(M^2-\sum_{a\in\mathcal A}n_a^2\right)r.
    	\]
    	Combining this with (\ref{eq:sum_D}) yields
    	\begin{equation}\label{eq:final_bound}
    		\sum_{i,j\in[M]:\,i<j}[\mathbf D]_{ij}
    		\le
    		\frac q4\left(M^2-\sum_{a\in\mathcal A}n_a^2\right)r.
    	\end{equation}
    	We now evaluate the maximum possible value of
    	$M^2-\sum_{a\in\mathcal A}n_a^2$
    	for the different residue classes of $M$ modulo $4$.
    	
    	\textit{Case 1}: $M\equiv0\pmod4:$
    	In this case, $n_a=\frac{M}{4}$ for all $a\in\mathcal A$. Hence $\sum_{a\in\mathcal A}n_a^2
    	=
    	4\left(\frac{M}{4}\right)^2
    	=
    	\frac{M^2}{4}$.
    	Substituting into (\ref{eq:final_bound}) gives $\sum_{i<j}[\mathbf D]_{ij}
    	\le
    	\frac{3qM^2}{16}\,r$,
    	and therefore $$r
    	\ge
    	\frac{16}{3M^2q}
    	\sum_{i<j}[\mathbf D]_{ij}.$$

    	\textit{Case 2}: $M\equiv1\pmod4:$
    	For this case, $n_a\in
    	\left\{
    	\frac{M-1}{4},
    	\frac{M-1}{4},
    	\frac{M-1}{4},
    	\frac{M+3}{4}
    	\right\}$.
    	Thus $\sum_{a\in\mathcal A}n_a^2
    	=
    	\frac{M^2+3}{4}$,
    	and hence $M^2-\sum_{a\in\mathcal A}n_a^2
    	=
    	\frac{3(M^2-1)}{4}$.
    	Substituting into (\ref{eq:final_bound}) yields $\sum_{i<j}[\mathbf D]_{ij}
    	\le
    	\frac{3q(M^2-1)}{16}\,r$,
    	which implies $$r
    	\ge
    	\frac{16}{3(M^2-1)q}
    	\sum_{i<j}[\mathbf D]_{ij}.$$
    	
    	\textit{Case 3}: $M\equiv2\pmod4:$
    	In this case, $n_a\in
    	\left\{
    	\frac{M-2}{4},
    	\frac{M-2}{4},
    	\frac{M+2}{4},
    	\frac{M+2}{4}
    	\right\}$.
    	Therefore, $\sum_{a\in\mathcal A}n_a^2
    	=
    	\frac{M^2+4}{4},
    	\,
    	M^2-\sum_{a\in\mathcal A}n_a^2
    	=
    	\frac{3M^2-4}{4}$.
    	Substituting into (\ref{eq:final_bound}) gives $\sum_{i<j}[\mathbf D]_{ij}
    	\le
    	\frac{(3M^2-4)q}{16}\,r$,
    	and hence $$r
    	\ge
    	\frac{16}{(3M^2-4)q}
    	\sum_{i<j}[\mathbf D]_{ij}.$$
    	
    	\textit{Case 4}: $M\equiv3\pmod4:$
    	For this case, $n_a\in
    	\left\{
    	\frac{M-3}{4},
    	\frac{M+1}{4},
    	\frac{M+1}{4},
    	\frac{M+1}{4}
    	\right\}$.
    	Thus $\sum_{a\in\mathcal A}n_a^2
    	=
    	\frac{M^2+3}{4},
    	\,
    	M^2-\sum_{a\in\mathcal A}n_a^2
    	=
    	\frac{3(M^2-1)}{4}$.
    	Substituting into (\ref{eq:final_bound}) yields $\sum_{i<j}[\mathbf D]_{ij}
    	\le
    	\frac{3q(M^2-1)}{16}\,r$,
    	and therefore $$r
    	\ge
    	\frac{16}{3(M^2-1)q}
    	\sum_{i<j}[\mathbf D]_{ij}.$$
    	Since $r=N_p(\mathbf D)$, the stated bound follows.
    	\begin{align*} N_{p}(\boldsymbol {D})\geq 
    		\begin{cases}
    			\frac {16}{3M^{2}q}\sum \limits _{i,j,i< j}[\boldsymbol {D}]_{ij}, & M\equiv 0\pmod{4}, \\ \frac {16}{(3M^{2}-4)q}\sum \limits _{i,j,i< j}[\boldsymbol {D}]_{ij}, & M\equiv 2\pmod{4}, \\ \frac {16}{3(M^{2}-1)q}\sum \limits _{i,j,i< j}[\boldsymbol {D}]_{ij}, & M \text { is odd}.
    		\end{cases}
    	\end{align*}
    \end{IEEEproof}
    As a special case, for $q=2$, the proposed bound reduces to the Plotkin-type bound for function-correcting symbol-pair codes (FCSPCs) established in \cite{XLC}, given by 
    \begin{align*} N_{p}(\boldsymbol {D})\geq 
    	\begin{cases}
    		\frac {8}{3M^{2}}\sum \limits _{i,j,i< j}[\boldsymbol {D}]_{ij}, & M\equiv 0\pmod{4}, \\ \frac {8}{(3M^{2}-4)}\sum \limits _{i,j,i< j}[\boldsymbol {D}]_{ij}, & M\equiv 2\pmod{4}, \\ \frac {8}{3(M^{2}-1)}\sum \limits _{i,j,i< j}[\boldsymbol {D}]_{ij}, & M \text { is odd}.
    	\end{cases}
    \end{align*}
    \begin{rem}
    	For bijective functions, we have $[\boldsymbol{D}]_{ij}=d$ for all $i\neq j$, and hence $\sum_{i<j}[\boldsymbol{D}]_{ij}
    	=
    	\frac{M(M-1)}{2}d$.
    	Substituting this into the bound of Theorem~\ref{Plotkin_bound} gives
    	{\small
    		\[
    		N_{p}(M, d)\ge
    		\begin{cases}
    			\dfrac{8(M-1)d}{3Mq}, & M\equiv 0\pmod{4},\\[1ex]
    			\dfrac{8M(M-1)d}{(3M^{2}-4)q}, & M\equiv 2\pmod{4},\\[1ex]
    			\dfrac{8M(M-1)d}{3(M^{2}-1)q}, & M \text{ is odd},
    		\end{cases}
    		\]
    	}a variant of Plotkin's bound for classical Lee-distance codes for symbol-pair channels with minimum pair-distance $d$ over $\mathbb Z_q$. 
    \end{rem}
    In addition to the above lower bound, we derive an upper bound on $N_p(\boldsymbol {D})$.
    \begin{lem}
    	Let $\boldsymbol{D}\in \mathbb{N}_0^{M\times M}$ be a distance matrix and let
    	$\pi:[M]\to[M]$ be any permutation. Then
    	\[
    	N_p(\boldsymbol{D})
    	\le
    	\min_{r\in\mathbb{N}}
    	\left\{
    	r:
    	q^r >
    	\max_{j\in[M]}
    	\sum_{i=1}^{j-1}
    	V_p^{L}\!\left(r,[\boldsymbol{D}]_{\pi(i)\pi(j)}-1\right)
    	\right\},
    	\]
    	where $V_p^{L}(r,t)$ denotes the number of vectors in
    	$\mathbb{Z}_q^r$ whose symbol-pair Lee-distance from a fixed vector
    	$\boldsymbol{x}\in\mathbb{Z}_q^r$ is at most $t$.
    \end{lem}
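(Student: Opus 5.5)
This is a Gilbert--Varshamov-type statement, and I would prove it by a greedy (sequential) construction, in the same way as the corresponding result for irregular-distance codes in \cite{Lenz} and for irregular-pair-distance codes in \cite{XLC}. Fix the permutation $\pi$. Let $r$ be any integer satisfying
\[
q^r>\max_{j\in[M]}\sum_{i=1}^{j-1}V_p^{L}\!\left(r,[\boldsymbol{D}]_{\pi(i)\pi(j)}-1\right).
\]
It suffices to exhibit a $\boldsymbol{D}_p$-code of length $r$. The minimum over such $r$ then upper bounds $N_p(\boldsymbol{D})$.

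The construction assigns codewords in the order $\pi(1),\pi(2),\ldots,\pi(M)$. Choose $\mathbf p_{\pi(1)}\in\mathbb Z_q^r$ arbitrarily. Inductively, suppose $\mathbf p_{\pi(1)},\ldots,\mathbf p_{\pi(j-1)}$ have been chosen. The vector $\mathbf p_{\pi(j)}$ must avoid, for each $i<j$, the symbol-pair Lee ball of radius $[\boldsymbol{D}]_{\pi(i)\pi(j)}-1$ centered at $\mathbf p_{\pi(i)}$. A vector outside all these balls satisfies $d_p(\mathbf p_{\pi(i)},\mathbf p_{\pi(j)})\ge[\boldsymbol{D}]_{\pi(i)\pi(j)}$ for every $i<j$. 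If some entry is $0$, the ball of radius $-1$ is empty, and the convention $V_p^L(r,-1)=0$ keeps the count consistent.

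The counting step is short. The number of forbidden vectors is at most $\sum_{i=1}^{j-1}V_p^L(r,[\boldsymbol{D}]_{\pi(i)\pi(j)}-1)$, because $d_p$ is a translation-invariant metric ($d_p(\mathbf x,\mathbf y)=w_p(\mathbf x-\mathbf y)$). Translation invariance makes the ball size depend only on $r$ and the radius, not on the center, and this is what makes $V_p^L(r,t)$ well defined. By the choice of $r$, this count is strictly less than $q^r$, so a valid $\mathbf p_{\pi(j)}$ exists.

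Symmetry of $d_p$ (the metric theorem above) then guarantees $d_p(\mathbf p_a,\mathbf p_b)\ge[\boldsymbol{D}]_{ab}$ for all pairs in both orders. This needs $[\boldsymbol{D}]_{ab}=[\boldsymbol{D}]_{ba}$, which holds for the DRMs of interest; for a general $\boldsymbol D$ one would replace entries by $\max\{[\boldsymbol{D}]_{ab},[\boldsymbol{D}]_{ba}\}$, and I would note this in passing. Diagonal entries are $0$ and impose no constraint. The resulting set is therefore a $\boldsymbol{D}_p$-code under the ordering $\pi$.

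The only delicate point is the well-definedness of $V_p^L$ and the empty-ball convention for zero entries. Neither is a real obstacle once translation invariance is stated.
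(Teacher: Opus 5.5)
Your proof is correct and follows the paper's argument. Both pick codewords greedily in the order given by $\pi$, and the union bound $\sum_{i<j} V_p^L(r,[\boldsymbol{D}]_{\pi(i)\pi(j)}-1) < q^r$ guarantees that each new codeword exists.

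You also make explicit three points the paper leaves implicit: translation invariance (so $V_p^L$ does not depend on the center), the empty-ball convention for zero entries, and the requirement that $\boldsymbol{D}$ be symmetric. The symmetry caveat is genuinely needed for the statement as written.
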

    
    \begin{proof}
    	We construct a symbol-pair Lee-distance code of length $r$
    	by iteratively selecting codewords satisfying the prescribed
    	distance constraints. For simplicity, assume first that
    	$\pi(i)=i$. Choose an arbitrary codeword
    	$\boldsymbol{p}_1\in\mathbb{Z}_q^r$.
    	Next, select a codeword $\boldsymbol{p}_2$ such that $d_{p}(\boldsymbol{p}_1,\boldsymbol{p}_2)
    	\ge [\boldsymbol{D}]_{12}$.
    	Such a codeword exists whenever $q^r >
    	V_p^{L}(r,[\boldsymbol{D}]_{12}-1)$.
    	
    	Suppose that codewords
    	$\boldsymbol{p}_1,\ldots,\boldsymbol{p}_{j-1}$
    	have already been selected so that $d_{p}(\boldsymbol{p}_i,\boldsymbol{p}_\ell)
    	\ge [\boldsymbol{D}]_{i\ell},
    	\,
    	1\le i<\ell\le j-1$.
    	To select $\boldsymbol{p}_j$, we require $d_{p}(\boldsymbol{p}_i,\boldsymbol{p}_j)
    	\ge [\boldsymbol{D}]_{ij},
    	\,
    	1\le i\le j-1$.
    	The number of vectors $\boldsymbol{x}\in\mathbb{Z}_q^r$ that fail to satisfy at least one of the distance requirements $d_{p}(\boldsymbol{p}_i,\boldsymbol{x})
    	\ge
    	[\boldsymbol{D}]_{ij},
    	\,
    	1\le i\le j-1$, 
    	is at most $\sum_{i=1}^{j-1}
    	V_p^{L}\!\left(r,[\boldsymbol{D}]_{ij}-1\right)$.
    	Hence, if $q^r >
    	\sum_{i=1}^{j-1}
    	V_p^{L}(r,[\boldsymbol{D}]_{ij}-1)$, 
    	there exists a valid choice of $\boldsymbol{p}_j$.	
    	Proceeding inductively, all $M$ codewords can be selected provided $q^r >
    	\max_{j\in[M]}
    	\sum_{i=1}^{j-1}
    	V_p^{L}(r,[\boldsymbol{D}]_{ij}-1)$.
    	Since the codewords may be chosen in any order, the same argument
    	holds for any permutation $\pi$ of $[M]$, proving the result.
    \end{proof}
    
    \begin{rem}
    	For bijective functions, setting $[\boldsymbol{D}]_{ij}=d$ for all $i\neq j$, we obtain
    	\[
    	N_p(M, d)
    	\le
    	\min_{r\in\mathbb{N}}
    	\left\{
    	r:
    	q^r>(M-1)V_p^{L}(r,d-1)
    	\right\}.
    	\]
    	a variant of Gilbert-Varshamov's bound for Lee-distance codes for symbol-pair read channels with minimum pair-distance $d$ over $\mathbb{Z}_q$.
    \end{rem} 
    The following lemma establishes a connection between the shortest lengths of symbol-pair Lee-distance codes and classical Lee metric codes.
    \begin{lem} \label{pair-Lee}
    	For any integers $M\ge 2$ and $d\ge 1$, $N_p(M,d)\le N_L(M,d)$, 
    	where $N_p(M,d)$ and $N_L(M,d)$ denote the shortest lengths of Lee-distance symbol-pair codes and classical Lee metric codes over $\mathbb{Z}_q$, respectively, having $M$ codewords and minimum distance at least $d$.
    \end{lem}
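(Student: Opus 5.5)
The plan is to take an optimal classical Lee code and reuse it unchanged as a symbol-pair Lee code. Let $n = N_L(M,d)$. By definition there is a code $\mathcal{C}=\{\mathbf{c}_1,\ldots,\mathbf{c}_M\}\subseteq \mathbb{Z}_q^n$ with $d_L(\mathbf{c}_i,\mathbf{c}_j)\ge d$ for all $i\neq j$. We regard the same set $\mathcal{C}$ as a code read over the symbol-pair channel, and we will show that it has minimum symbol-pair Lee-distance at least $d$.

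The key step is the lower bound in Lemma~\ref{dpdl}: for all $\mathbf{x},\mathbf{y}\in\mathbb{Z}_q^n$ we have $d_p(\mathbf{x},\mathbf{y})\ge d_L(\mathbf{x},\mathbf{y})$. Applying it to every pair of distinct codewords gives
\[
d_p(\mathbf{c}_i,\mathbf{c}_j)\ge d_L(\mathbf{c}_i,\mathbf{c}_j)\ge d .
\]
Hence $\mathcal{C}$ is a $\mathbf{D}_p$-code in the sense of Definition~\ref{D_p-code}, where $[\mathbf{D}]_{ij}=d$ for $i\ne j$. Any ordering of the codewords works, since the constraint matrix is constant off the diagonal.

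It follows that a symbol-pair Lee code of length $n$ with $M$ codewords and minimum pair-distance at least $d$ exists. Since $N_p(M,d)$ is the smallest length for which such a code exists, $N_p(M,d)\le n = N_L(M,d)$.

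The only point needing care is that both quantities are defined over the same alphabet $\mathbb{Z}_q$ with the same cardinality $M$, so no re-encoding is needed. Another potential subtlety is lengths $n=1$ or $n=2$, where the cyclic pair vector wraps around. Lemma~\ref{dpdl} holds for every $n\ge 1$: for $n=1$ we get $d_p=d_L$ because $\max\{a_1,a_1\}=a_1$. So no separate case is required, and I do not expect a real obstacle.
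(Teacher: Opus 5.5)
Your proposal is correct and follows the paper's proof essentially step for step. You take a Lee code of length $N_L(M,d)$ with minimum Lee distance at least $d$, apply the lower bound $d_p\ge d_L$ of Lemma~\ref{dpdl} to every pair of distinct codewords, and conclude from the minimality of $N_p(M,d)$; your extra check of short lengths is harmless, since Lemma~\ref{dpdl} already holds for every $n\ge 1$.
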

    
    \begin{IEEEproof}
    	Let $r=N_L(M,d)$. Then there exists a code $\mathcal{C}=\{\mathbf{c}_1,\mathbf{c}_2,\ldots,\mathbf{c}_M\}
    	\subseteq \mathbb{Z}_q^r$
    	with minimum Lee-distance at least $d$, that is, $d_L(\mathbf{c}_i,\mathbf{c}_j)\ge d,
    	\, i\neq j$.
    	By Lemma~\ref{dpdl}, $d_p(\mathbf{c}_i,\mathbf{c}_j)
    	\ge
    	d_L(\mathbf{c}_i,\mathbf{c}_j)
    	\ge d,
    	\, i\neq j$.
    	Hence, $\mathcal{C}$ is also a Lee-distance symbol-pair code of length $r$ with minimum symbol-pair Lee-distance at least $d$. Therefore, $N_p(M,d)\le N_L(M,d)$.
    \end{IEEEproof}
    Next, using the relationship between $N_p(\mathbf{D})$ and $N_p(M,d)$, we obtain the following upper bound on $N_p(\mathbf{D})$.
    \begin{lem} \label{npd-npmd}
    	Let $\mathbf{D}\in \mathbb{N}_0^{M\times M}$ and define $D_{\max}=\max_{i,j\in[M]}[\mathbf{D}]_{ij}$.
    	If $D_{\max}\le d$, then $N_p(\mathbf{D})\le N_p(M,d)$.
    \end{lem}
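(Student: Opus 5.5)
The plan is a direct monotonicity argument: any code meeting the uniform requirement $d$ automatically meets every entrywise requirement of $\mathbf{D}$. Let $r=N_p(M,d)$. By definition of $N_p(M,d)$, there is a code $\mathcal{P}=\{\mathbf{p}_1,\ldots,\mathbf{p}_M\}\subseteq\mathbb{Z}_q^r$ of cardinality $M$ with $d_p(\mathbf{p}_i,\mathbf{p}_j)\ge d$ for all $i\neq j$. Equivalently, $\mathcal{P}$ is a $\mathbf{D}'_p$-code for the matrix $\mathbf{D}'$ with $[\mathbf{D}']_{ij}=d$ for $i\ne j$ and $0$ on the diagonal.

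Next I will check that $\mathcal{P}$, with this same (arbitrary) ordering of codewords, is a $\mathbf{D}_p$-code in the sense of Definition~\ref{D_p-code}. There are two cases:
\begin{itemize}
\item For $i\neq j$, we have $d_p(\mathbf{p}_i,\mathbf{p}_j)\ge d\ge D_{\max}\ge[\mathbf{D}]_{ij}$.
\item For $i=j$, we need $d_p(\mathbf{p}_i,\mathbf{p}_i)=0\ge[\mathbf{D}]_{ii}$. This holds because the DRMs used in the paper have zero diagonal.
\end{itemize}

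The diagonal case is the only point needing care. If $\mathbf{D}$ is an arbitrary matrix in $\mathbb{N}_0^{M\times M}$, I will state the implicit convention that $[\mathbf{D}]_{ii}=0$, which the diagonal requirement forces for any $\mathbf{D}_p$-code to exist at all. With that convention, the diagonal condition is trivially satisfied.

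Finally, $\mathcal{P}$ is a $\mathbf{D}_p$-code of length $r$, so the minimality defining $N_p(\mathbf{D})$ gives $N_p(\mathbf{D})\le r=N_p(M,d)$. Combined with Lemma~\ref{pair-Lee}, this also yields $N_p(\mathbf{D})\le N_L(M,D_{\max})$. I do not expect any real obstacle; the only subtlety is the diagonal convention above.
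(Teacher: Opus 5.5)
Your proof is correct and follows the paper's argument: take a code of length $N_p(M,d)$ with pairwise symbol-pair Lee-distance at least $d$, and note that $d_p(\mathbf{p}_i,\mathbf{p}_j)\ge d\ge D_{\max}\ge[\mathbf{D}]_{ij}$ makes it a $\mathbf{D}_p$-code. Your remark on the diagonal is a legitimate refinement: the paper checks only $i\neq j$, so it silently assumes $[\mathbf{D}]_{ii}=0$, which is needed because otherwise no $\mathbf{D}_p$-code exists.
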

    \begin{IEEEproof}
    	Let $r=N_p(M,d)$ and let $\mathcal{C}=\{\mathbf{c}_1,\ldots,\mathbf{c}_M\}
    	\subseteq \mathbb{Z}_q^r$
    	be a symbol-pair Lee-distance code of length $r$ satisfying $d_p(\mathbf{c}_i,\mathbf{c}_j)\ge d,
    	\, i\neq j$.
    	Since $D_{\max}\le d$, we have $d_p(\mathbf{c}_i,\mathbf{c}_j)
    	\ge d
    	\ge D_{\max}
    	\ge [\mathbf{D}]_{ij},
    	\, i\neq j$.
    	 Hence, $\mathcal{C}$ is a $\mathbf{D}_p$-code. Therefore, $N_p(\mathbf{D})\le N_p(M,d)$.
    \end{IEEEproof}
    As an immediate consequence of Lemmas~\ref{pair-Lee} and~\ref{npd-npmd}, we have $N_p(\mathbf{D})\le N_L(M,d),$
    for any $d$ satisfying $\max_{i,j\in[M]}[\mathbf{D}]_{ij}\le d.$
    Furthermore, by combining Lemma~\ref{pair-Lee} with a known upper bound on $N_L(M,d)$ from \cite{GA}, we obtain the following explicit upper bound on $N_p(M,d)$.
    
    \begin{cor}\label{NpMd-upper}
    	For any integers $M\ge 2$ and $q\ge 5$, $N_p(M,d)
    	\le
    	N_L(M,d)
    	\le
    	\left\lceil
    	\frac{\ln(M-1)+(d-1)}
    	{\ln q-\ln 3}
    	\right\rceil.$
    \end{cor}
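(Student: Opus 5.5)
The first inequality $N_p(M,d)\le N_L(M,d)$ is exactly Lemma~\ref{pair-Lee}, so the work lies in the second inequality. I would take it as the Gilbert--Varshamov-type random-coding bound for classical Lee codes that the paper attributes to \cite{GA}. The plan is to prove it directly, so the corollary stands on its own, by running the greedy argument of the preceding Gilbert--Varshamov lemma with the ordinary Lee metric in place of $d_p$. Taking $[\mathbf D]_{ij}=d$ for all $i\ne j$, the greedy construction gives $N_L(M,d)\le \min\{r: q^r>(M-1)V_L(r,d-1)\}$, where $V_L(r,t)$ is the size of a Lee ball of radius $t$ in $\mathbb Z_q^r$. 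It therefore suffices to show that this condition holds for every $r\ge \frac{\ln(M-1)+(d-1)}{\ln q-\ln 3}$.

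The key step is to bound the Lee ball volume by a Chernoff-type (generating-function) argument. For any $0<z\le 1$,
\[
V_L(r,t)\le z^{-t}\Big(\sum_{a\in\mathbb Z_q} z^{w_L(a)}\Big)^{r}.
\]
Taking $z=e^{-1}$, the symbol weight enumerator is $1+2e^{-1}+2e^{-2}+\cdots$. This is at most $1+2\sum_{k\ge1}e^{-k}=1+\frac{2}{e-1}=\frac{e+1}{e-1}\approx 2.16<3$, and one must take care with the single term of weight $q/2$ when $q$ is even. Hence $V_L(r,d-1)\le e^{d-1}3^r$.

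The sufficient condition $q^r>(M-1)e^{d-1}3^r$ then becomes $r(\ln q-\ln 3)>\ln(M-1)+(d-1)$. Here the hypothesis $q\ge 5$ makes $\ln q-\ln 3>0$, and $r$ equal to the stated ceiling (or the ceiling plus a strictness adjustment) satisfies it.

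The main obstacle is the strict inequality at the ceiling. When the ratio is an integer, equality may occur. I would handle this by noting that the weight enumerator is strictly less than $3$, which makes the ball bound strict. Alternatively, I would simply cite \cite{GA} for the final inequality, as the text preceding the corollary suggests. The remaining steps are routine.
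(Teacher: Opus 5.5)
Your proposal is correct, but it goes further than the paper. The paper's proof uses Lemma~\ref{pair-Lee} for the first inequality and then quotes the second inequality from \cite{GA} without proof. You instead re-derive that Lee-metric bound. You specialize the paper's greedy Gilbert--Varshamov argument to $d_L$ with $[\mathbf D]_{ij}=d$. You then bound the Lee ball by the generating-function estimate $V_L(r,t)\le e^{t}\bigl(\sum_{a\in\mathbb Z_q}e^{-w_L(a)}\bigr)^{r}\le e^{t}\bigl(\frac{e+1}{e-1}\bigr)^{r}$.

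This makes the corollary self-contained, and it gives a little more than stated. Only $\ln q>\ln 3$ is used, so the hypothesis can be weakened to $q\ge 4$. It also yields a slightly stronger bound, with $\ln 3$ replaced by $\ln\frac{e+1}{e-1}\approx 0.77$. Your worry about the even-$q$ term is unnecessary: the symbol of weight $q/2$ occurs once and only decreases the sum.

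Your treatment of the strict inequality is right. Because $\bigl(\frac{e+1}{e-1}\bigr)^{r}<3^{r}$, the ceiling value $r$ satisfies $q^{r}\ge (M-1)e^{d-1}3^{r}>(M-1)V_L(r,d-1)$. However, this step needs $r\ge 1$, and the ceiling is $0$ exactly when $M=2$ and $d=1$. In that case $N_p(2,1)=N_L(2,1)=1>0$. So the corollary as stated is false for that pair, and the pair must be excluded whichever route is used. The paper's one-line citation does not surface this corner case; your argument does.
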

    
    The above corollary immediately yields the following upper bound on $N_p(\mathbf{D})$.
    
    \begin{cor}\label{NpD-upper}
    	Let $\mathbf{D}\in\mathbb{N}_0^{M\times M}$ and define $D_{\max}=\max_{i,j\in[M]}[\mathbf{D}]_{ij}.$
    	If $D_{\max}\le d$, then for any $q\ge 5$, $N_p(\mathbf{D})
    	\le
    	\left\lceil
    	\frac{\ln(M-1)+(d-1)}
    	{\ln q-\ln 3}
    	\right\rceil.$
    \end{cor}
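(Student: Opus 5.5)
The statement is Corollary~\ref{NpD-upper}. I would prove it by chaining three earlier results, and no new combinatorics is needed.

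First, since $D_{\max}\le d$, Lemma~\ref{npd-npmd} gives $N_p(\mathbf{D})\le N_p(M,d)$. The reason is that any symbol-pair Lee-distance code with $M$ codewords and pairwise pair-distance at least $d$ satisfies every entry constraint $[\mathbf{D}]_{ij}\le D_{\max}\le d$. I would list the codewords in any order, and the diagonal constraints are trivial because $[\mathbf{D}]_{ii}$ is dominated by $D_{\max}$ only off-diagonal; for $i=j$ we have $d_p(\mathbf{c}_i,\mathbf{c}_i)=0$. Here I would note that $\mathbf{D}$ is a distance matrix with zero diagonal, which is the case for all DRMs in the paper. If the diagonal is not assumed zero, I would restrict $D_{\max}$ to off-diagonal entries, as the proof of Lemma~\ref{npd-npmd} implicitly does.

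Second, apply Corollary~\ref{NpMd-upper} with the same $M$ and $d$. This gives
\[
N_p(M,d)\le N_L(M,d)\le\left\lceil\frac{\ln(M-1)+(d-1)}{\ln q-\ln 3}\right\rceil,
\]
valid for $q\ge 5$. Its first inequality is Lemma~\ref{pair-Lee}, which rests on $d_p\ge d_L$ from Lemma~\ref{dpdl}. Its second inequality is the Gilbert--Varshamov-type bound for Lee codes quoted from \cite{GA}. Chaining the first and second steps yields the claim.

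One edge case needs care. Corollary~\ref{NpMd-upper} requires $M\ge2$ and implicitly $d\ge1$. If $M=1$, then $N_p(\mathbf{D})=0$ trivially, though the right-hand side is undefined; I would just restrict to $M\ge2$. If $D_{\max}=0$, any $d\ge1$ may be used, since the hypothesis only asks for $D_{\max}\le d$.

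The only potential obstacle is confirming that the cited Lee-metric bound from \cite{GA} is stated for minimum distance at least $d$ with exactly this denominator and this range $q\ge5$. Since the statement is taken as given via Corollary~\ref{NpMd-upper}, the proof reduces to this two-line composition.
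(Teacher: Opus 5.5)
Your proof is correct and matches the paper's. The paper obtains this corollary "immediately" by chaining Lemma~\ref{npd-npmd} ($N_p(\mathbf{D})\le N_p(M,d)$ when $D_{\max}\le d$) with Corollary~\ref{NpMd-upper}, which is exactly your composition, and your restrictions to $M\ge2$ and $d\ge1$ are the right implicit hypotheses.

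One small correction to your diagonal aside. If $\mathbf{D}$ had a positive diagonal entry, no $\mathbf{D}_p$-code would exist at all, since $d_p(\mathbf{p}_i,\mathbf{p}_i)=0$. So restricting $D_{\max}$ to off-diagonal entries would not rescue that case. The zero-diagonal convention you also state is the one actually needed.
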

    Since the Lee metric coincides with the Hamming metric over $\mathbb{Z}_2$ and with the homogeneous metric over $\mathbb{Z}_4$, Lemma~\ref{pair-Lee} allows several known upper bounds on $N_L(M,d)$ from \cite{Lenz}, \cite{ZXZG}, and \cite{LL} to be converted directly into upper bounds on $N_p(M,d)$, leading to the following corollaries.
    
    \begin{cor}
    	Let $d\in\mathbb N$ such that there exists a Hadamard matrix
    	of order $d$, and $M\le 4d$ over $\mathbb Z_2$. Then $N_p(M,d)\le 2d$.
    \end{cor}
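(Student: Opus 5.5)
The plan is to combine Lemma~\ref{pair-Lee} with a known construction for the binary Hamming case. Over $\mathbb Z_2$ the Lee metric coincides with the Hamming metric, so $N_L(M,d)$ is the shortest length of a binary code with $M$ codewords and minimum Hamming distance at least $d$; write this $N_H(M,d)$. By Lemma~\ref{pair-Lee}, $N_p(M,d)\le N_L(M,d)=N_H(M,d)$. It therefore suffices to exhibit a binary code of length $2d$ with at least $4d\ge M$ codewords and pairwise Hamming distance at least $d$. This is the Hadamard-code result already used in \cite{Lenz}, and we may simply cite it. For completeness, we would also sketch the construction.

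Let $H$ be a Hadamard matrix of order $d$, so $HH^T=dI$. Then any two distinct rows of $\pm H$ agree in exactly $d/2$ positions. The codewords are formed from rows $h$ of $H$ by the length-$2d$ words
\[
(h,h),\quad (h,-h),\quad (-h,h),\quad (-h,-h),
\]
mapped to binary via $+1\mapsto0$ and $-1\mapsto1$. This gives $4d$ words of length $2d$.

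The distances are checked as follows.
\begin{itemize}
\item For distinct rows $h\ne h'$, each half of any pair of words differs in exactly $d/2$ positions (since $\pm h$ and $\pm h'$ agree in $d/2$ positions), so the total distance is exactly $d$.
\item For the same row $h$ with different sign patterns, each half contributes either $0$ or $d$, and at least one half contributes $d$. The distance is $d$ or $2d$.
\end{itemize}
So the minimum distance is at least $d$. Since $M\le 4d$, taking any $M$ of these codewords gives $N_H(M,d)\le 2d$.

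The only points needing care are the agreement count $d/2$ for distinct rows, which follows from orthogonality of $\pm1$ vectors, and the trivial case $d\in\{1,2\}$. There Hadamard matrices of order $1$ and $2$ exist, and the construction still applies: for $d=1$ it gives the $4$ binary words of length $2$, with $M\le4$. Chaining the inequalities yields $N_p(M,d)\le 2d$.
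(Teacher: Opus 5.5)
Your proof is correct and follows the paper's route: the paper also obtains this corollary by combining Lemma~\ref{pair-Lee} with the Hadamard-code bound $N(M,d)\le 2d$ for $M\le 4d$ from \cite{Lenz}, using the fact that the Lee metric over $\mathbb{Z}_2$ is the Hamming metric. Your explicit construction (the $4d$ words $(\pm h,\pm h)$ for rows $h$ of $H$) is a valid self-contained check of that cited bound; one wording fix is that the claim ``any two distinct rows of $\pm H$ agree in exactly $d/2$ positions'' is false for the pair $h,-h$, but your case analysis for the same row $h$ treats that pair correctly.
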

    
    \begin{cor}
    	For any $M,d\in\mathbb N$ with $d\ge 10$ and $M\le d^2$ over
    	$\mathbb Z_2$, $N_p(M,d)
    	\le
    	\left\lceil
    	\frac{2d-2}
    	{1-2\sqrt{\ln(d)/d}}
    	\right\rceil.$
    \end{cor}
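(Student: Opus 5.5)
The plan is to get this corollary from Lemma~\ref{pair-Lee}, which gives $N_p(M,d)\le N_L(M,d)$. Over $\mathbb Z_2$ the Lee metric coincides with the Hamming metric, so $N_L(M,d)$ is just the shortest length $N(M,d)$ of a binary Hamming-metric code with $M$ codewords and minimum distance $d$. It therefore suffices to import the known upper bound on $N(M,d)$ from the FCC literature. The bound
\[
N(M,d)\le \left\lceil \frac{2d-2}{1-2\sqrt{\ln(d)/d}}\right\rceil, \qquad d\ge 10,\ M\le d^2,
\]
is the one established in \cite{Lenz} (and refined in \cite{ZXZG}) for the redundancy of FCCs.

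The steps, in order, are as follows.
\begin{enumerate}[label=(\roman*)]
\item Note that for $q=2$ we have $d_L=d_H$ coordinatewise. Hence a binary code with minimum Hamming distance $d$ is exactly a code with minimum Lee distance $d$, and $N_L(M,d)=N(M,d)$.
\item Invoke Lemma~\ref{pair-Lee} to get $N_p(M,d)\le N(M,d)$. This step uses only the lower bound $d_p\ge d_L$ of Lemma~\ref{dpdl}, which is valid for $q=2$.
\item Quote the cited bound on $N(M,d)$ under the hypotheses $d\ge 10$ and $M\le d^2$.
\end{enumerate}

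If one wanted the argument self-contained, I would reprove step (iii) with the standard random-coding approach. Pick $M$ random binary vectors of length $r$; each pair has Hamming distance $\mathrm{Bin}(r,1/2)$. A Chernoff/Hoeffding bound gives
\[
\Pr[\text{distance}<d]\le \exp\!\bigl(-2(r/2-d)^2/r\bigr).
\]
Take $r$ of the form $(2d-2)/(1-2\sqrt{\ln d/d})$, so that $r/2-d$ is of order $\sqrt{r\ln d}$. Then a union bound over fewer than $M^2/2\le d^4/2$ pairs leaves positive probability that every pair is at distance at least $d$.

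The main obstacle is precisely this constant bookkeeping: matching the exact expression $\lceil (2d-2)/(1-2\sqrt{\ln(d)/d})\rceil$ and the threshold $d\ge 10$. If the union bound is too loose, I would instead use an expurgation or alteration step: generate $2M$ words and delete one from each bad pair. Since the statement is cited as known, I would simply rely on \cite{Lenz} for this and present steps (i)--(iii) as the proof.
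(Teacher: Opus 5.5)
Your proposal is correct and follows the same route as the paper. The paper obtains this corollary by combining Lemma~\ref{pair-Lee} ($N_p(M,d)\le N_L(M,d)$, which rests on $d_p\ge d_L$) with the fact that the Lee and Hamming metrics coincide over $\mathbb Z_2$, and then imports the known bound on $N(M,d)$ from \cite{Lenz}; your optional Hoeffding sketch also goes through, since for that choice of $r$ one has exactly $r/2-(d-1)=r\sqrt{\ln(d)/d}$, which gives a per-pair failure probability of at most $d^{-4}$.
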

    
    \begin{cor}
    	For any $M,d\in\mathbb N$ with $M>1$ over $\mathbb Z_4$, $N_p(M,d)
    	\le
    	\left\lceil
    	\frac{\ln(M-1)+d-1}
    	{1-\ln 2}
    	\right\rceil.$
    \end{cor}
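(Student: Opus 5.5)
The last corollary is proved exactly as its two predecessors: Lemma~\ref{pair-Lee} is combined with a known upper bound on the shortest length of codes over $\mathbb Z_4$. By Lemma~\ref{pair-Lee}, $N_p(M,d)\le N_L(M,d)$ for every $M\ge 2$ and $d\ge 1$, where $N_L(M,d)$ is the shortest length of a Lee-metric code over $\mathbb Z_4$ with $M$ codewords and minimum Lee distance at least $d$. So it suffices to show
\[
N_L(M,d)\le \left\lceil \frac{\ln(M-1)+d-1}{1-\ln 2}\right\rceil
\]
over $\mathbb Z_4$.

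The key step uses the fact, noted before the corollaries, that the Lee metric on $\mathbb Z_4$ coincides with the homogeneous metric on $\mathbb Z_4$ (weights $0,1,2,1$). The bound of \cite{LL} for irregular homogeneous-distance codes, specialized to a constant distance matrix, is exactly the displayed quantity. It is a Gilbert--Varshamov-type argument: the homogeneous ball volume over $\mathbb Z_4$ is bounded via a Chernoff/generating-function estimate $V(r,d-1)\le e^{d-1}(\cdot)^r$, and the denominator $1-\ln 2$ arises from the ratio $\ln 4-\ln(\text{generating function at the optimal parameter})$.

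If I had to reprove this from scratch, I would use the greedy bound of the preceding lemma applied to the Lee metric. Take $\mathbb Z_4^r$ and choose codewords greedily; this succeeds whenever $4^r>(M-1)V_L(r,d-1)$. Then bound $V_L(r,d-1)\le e^{\lambda(d-1)}\bigl(1+2e^{-\lambda}+e^{-2\lambda}\bigr)^r=e^{\lambda(d-1)}(1+e^{-\lambda})^{2r}$ and set $\lambda=1$. The greedy condition then becomes $r\bigl(\ln 4-2\ln(1+e^{-1})\bigr)>\ln(M-1)+d-1$. Here $\ln 4-2\ln(1+e^{-1})\approx 0.758$, which is at least $1-\ln 2\approx 0.307$, so the stated ceiling suffices a fortiori.

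The main obstacle is the identification of metrics: one must confirm that \cite{LL}'s bound is stated for the homogeneous distance on $\mathbb Z_4$ with matching normalization, since some references scale the homogeneous weight. If that normalization fails, the self-contained Chernoff computation above gives the bound directly. Chaining the two inequalities then completes the proof.
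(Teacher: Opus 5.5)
Your primary route is the paper's: Lemma~\ref{pair-Lee} gives $N_p(M,d)\le N_L(M,d)$, and since the Lee and homogeneous weights coincide on $\mathbb Z_4$, the bound of \cite{LL} transfers directly. Your self-contained greedy--Chernoff fallback is also sound and gives the better rate $\ln 4-2\ln(1+e^{-1})\approx 0.76$, so that rate is not where the $1-\ln 2$ comes from.

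One caveat comes from the statement itself. For $(M,d)=(2,1)$ the right-hand side is $\lceil 0\rceil=0$, while $N_p(2,1)=1$. Correspondingly, your strict greedy condition fails at $r=0$, since it reads $1>1$. So the corollary, whether obtained via \cite{LL} or via your fallback, only holds when $\ln(M-1)+d-1>0$, i.e.\ outside the single case $M=2$, $d=1$.
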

\section{Construction of FCSPLCs for Pair-Locally Bounded, Pair-Lee Weight, and Pair-Lee Weight Distribution Functions} \label{section5}
In this section, we present explicit constructions of FCSPLCs for pair-locally bounded functions, the pair-Lee weight function, and the pair-Lee weight distribution function. In each case, the proposed construction yields FCSPLCs and establishes an upper bound on the optimal redundancy.
\subsection{Pair-Locally Bounded Functions}
The class of locally bounded functions, which take only a limited number of distinct values within any Lee ball, was studied in \cite{GA} and \cite{HRS}. We extend this notion to symbol-pair read channels and study FCSPLCs for the resulting class, referred to as \emph{pair-locally bounded functions}. 
\begin{defn}[\!\!\cite{GA},\cite{HRS}]
	The function ball of a function $f : \mathbb{Z}_q^k \to Im(f)$ with pair-radius $\rho$ around $u$ is defined as $B_p^f(\mathbf{u}, \rho) = \{ f(\mathbf{v})\;|\; \mathbf{v} \in \mathbb{Z}_q^k \text{ and } d_p(\mathbf{u}, \mathbf{v}) \le \rho \}.$
\end{defn}
\begin{defn}[\!\!\cite{GA}, \cite{HRS}]
	A function $f:\mathbb{Z}_q^k\to Im(f)$ is said to be a pair-locally $(\rho,\lambda)$-bounded function if $\mid B_p^f(\mathbf{u},\rho)\mid\leq \lambda$, $\forall \mathbf{u}\in \mathbb{Z}_q^k$.
\end{defn}
The following lemma is used in the construction of FCSPLCs for pair-locally bounded functions.
\begin{lem}[\!\!\cite{GA}]
	\label{Cont_LB}
	Let $f:\mathbb{Z}_q^k\to Im(f)$ be a pair-locally $(\rho,\lambda)$-bounded function. If there is total order on the image set $Im(f)$ such that  $B_p^f(\mathbf{u}, \rho)$ form a contiguous block for all $\mathbf{u}\in \mathbb{Z}_q^k$. Then there exists a mapping $ Col_f: \mathbb{Z}_q^k \to [\lambda] $ such that $ Col_f(\mathbf{u}) \neq Col_f(\mathbf{v})$ for any $ \mathbf{u}, \mathbf{v} \in \mathbb{Z}_q^k $ with $f(\mathbf{u}) \neq f(\mathbf{v})$ and $d_p(\mathbf{u}, \mathbf{v}) \le \rho$.
\end{lem}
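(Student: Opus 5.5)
The plan is to reduce the statement to coloring intervals of a totally ordered set, where a greedy assignment suffices. Fix the total order on $\operatorname{Im}(f)=\{f_1<f_2<\cdots<f_E\}$ under which every function ball $B_p^f(\mathbf{u},\rho)$ is a contiguous block. Note that the required property concerns only function values. So it suffices to build a coloring $c:\operatorname{Im}(f)\to[\lambda]$ and then set $Col_f(\mathbf{u})\triangleq c(f(\mathbf{u}))$. We need $c(f_i)\neq c(f_j)$ whenever $f_i\neq f_j$ are realised by some $\mathbf{u},\mathbf{v}$ with $d_p(\mathbf{u},\mathbf{v})\le\rho$. In that case both $f(\mathbf{u})$ and $f(\mathbf{v})$ lie in the ball $B_p^f(\mathbf{u},\rho)$. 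It is therefore enough to have $c$ injective on every ball $B_p^f(\mathbf{u},\rho)$.

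The concrete choice is the cyclic coloring $c(f_i)\triangleq i \operatorname{smod} \lambda$, which takes values in $[\lambda]$ by the definition of the shifted modulo. Each ball is a contiguous block $\{f_a,f_{a+1},\ldots,f_b\}$ in the order. Since $f$ is pair-locally $(\rho,\lambda)$-bounded, it has at most $\lambda$ elements, so $b-a\le\lambda-1$. Any two distinct indices $i\neq j$ in $[a,b]$ satisfy $0<|i-j|\le\lambda-1$, hence $i\not\equiv j \pmod\lambda$. This gives $i\operatorname{smod}\lambda\neq j\operatorname{smod}\lambda$. Thus $c$ is injective on every ball, and the composed map $Col_f$ has the stated property.

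I expect no real obstacle. The one point needing care is the reduction step: any two conflicting values $f(\mathbf{u}),f(\mathbf{v})$ must lie together in a single ball. This holds because the ball centered at $\mathbf{u}$ contains both $f(\mathbf{u})$ (from $d_p(\mathbf{u},\mathbf{u})=0$) and $f(\mathbf{v})$. The contiguity hypothesis is exactly what makes the size bound $\lambda$ translate into an index span of at most $\lambda-1$, and that is what the cyclic coloring needs.
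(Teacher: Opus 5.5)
Your proof is correct and complete. Every ball $B_p^f(\mathbf{u},\rho)$ is an index interval $\{f_a,\ldots,f_b\}$ with $b-a\le\lambda-1$, so the cyclic coloring $f_i\mapsto i\operatorname{smod}\lambda$ is injective on it. Any conflicting pair $f(\mathbf{u})\neq f(\mathbf{v})$ with $d_p(\mathbf{u},\mathbf{v})\le\rho$ lies inside the single ball centred at $\mathbf{u}$, so the two values get different colors.

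The paper gives no proof of this statement. It quotes the lemma with a citation to \cite{GA}, a Lee-metric paper, and simply writes $d_p$ where the source works with the Lee distance. Your argument closes that gap. It shows that the only property of the distance used is $d_p(\mathbf{u},\mathbf{u})=0$, which guarantees $f(\mathbf{u})\in B_p^f(\mathbf{u},\rho)$. Hence the transfer to the symbol-pair Lee setting is legitimate and needs no pair-specific structure.

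Your proof also yields an explicit coloring, $Col_f(\mathbf{u})=\bigl(\text{rank of } f(\mathbf{u})\bigr)\operatorname{smod}\lambda$, which depends on $\mathbf{u}$ only through $f(\mathbf{u})$. This makes Construction~\ref{Cons3} fully explicit rather than existential. Finally, you correctly identify the role of the contiguity hypothesis: it is exactly what converts the cardinality bound $|B_p^f(\mathbf{u},\rho)|\le\lambda$ into the index-span bound that the cyclic coloring requires.
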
 
Next, we present an explicit construction of FCSPLCs for any pair-locally $(2t, \lambda)$-bounded function satisfying the contiguous block condition of Lemma~\ref{Cont_LB}.
\begin{cons}
	\label{Cons3} 
	Let $f\!\!:\mathbb{Z}_{q}^k\to \!\! Im(f) = \{f_1, f_2, \ldots, f_\lambda, \ldots, f_\mu\}$ be a pair-locally $(2t, \lambda)$-bounded function that satisfies the contiguous block condition of Lemma~\ref{Cont_LB} such that $\lambda \le \frac{q}{2}$ and $t$ denotes the number of symbol-pair errors to be corrected, where 
	$\mu \ge \lambda$ and $t \in \mathbb{N} $.  Define an encoding for $\mathbf{u}$ as $\mathrm{Enc}(\boldsymbol{u}) = (\boldsymbol{u}, \{p_{Col_f(\boldsymbol{u})}\}^r)$, where 
	\begin{equation*}
		p_{Col_f(\boldsymbol{u})} =
		\left\{
		\begin{array}{ll}
			\displaystyle 0 & \text{if }  Col_f(\boldsymbol{u}) = 1, \\[6pt]
			\displaystyle 2 \left \lfloor \frac{q}{2\lambda} \right \rfloor & \text{if }  Col_f(\boldsymbol{u}) = 2, \\[6pt]
			\displaystyle . & \ldots, \\[6pt]
			\displaystyle 2(i-1)\left \lfloor \frac{q}{2\lambda} \right \rfloor & \text{if } Col_f(\boldsymbol{u}) = i, \\[6pt]
			\displaystyle . & \ldots, \\[6pt]
			\displaystyle 2(\lambda-1)\left \lfloor \frac{q}{2\lambda} \right \rfloor & \text{if } Col_f(\boldsymbol{u}) = \lambda,
		\end{array}
		\right.
	\end{equation*}
	$i \in [\lambda]$, $r = {\left \lceil \frac{t}{\left \lfloor \frac{q}{2\lambda} \right \rfloor} \right \rceil}$ and \(\{p_{Col_f(\boldsymbol{u})}\}^r\) means \( r \)-fold repetition of the parity symbol $p_{Col_f(\boldsymbol{u})}$.
\end{cons} The following corollary plays a key role in establishing that \textit{construction~\ref{Cons3}} yields an FCSPLC for any pair-locally $(2t, \lambda)$-bounded function satisfying the contiguous block condition of Lemma~\ref{Cont_LB}.
\begin{cor}
	\label{cor_delta}
	Let $\operatorname{Enc}(\mathbf{u})
	=
	(\mathbf{u}, \, \{p_\mathbf{u}\}^{r})
	\,\text{and}\
	\operatorname{Enc}(\mathbf{v})
	=
	(\mathbf{v}, \, \{p_\mathbf{v}\}^{r}),$
	where $\{p_\mathbf{u}\}^{r}$ and $\{p_\mathbf{v}\}^{r}$ denote the $r$-fold repetitions of the
	symbols $p_\mathbf{u},p_\mathbf{v}\in\mathbb{Z}_q$, respectively, and $\mathbf u, \mathbf v \in \mathbb{Z}_q^k$. Then $d_p(\operatorname{Enc}(\mathbf{u}),\operatorname{Enc}(\mathbf{v}))
	\ge
	d_p(\mathbf{u},\mathbf{v})
	+
	d_p(\{p_\mathbf{u}\}^{r},\{p_\mathbf{v}\}^{r}).$
\end{cor}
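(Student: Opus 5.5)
The plan is to reuse the exact decomposition from the proof of Lemma~\ref{lem_delta}, taking $m=k$ there and using the fact that the parity part is constant. Write $a_i=d_L(x_i,y_i)$ for the coordinates of $\operatorname{Enc}(\mathbf u)$ and $\operatorname{Enc}(\mathbf v)$, $i\in[k+r]$. Then \eqref{delta} gives
\[
d_p(\operatorname{Enc}(\mathbf{u}),\operatorname{Enc}(\mathbf{v}))
=d_p(\mathbf u,\mathbf v)+d_p(\{p_\mathbf{u}\}^{r},\{p_\mathbf{v}\}^{r})+\Delta,
\]
where $\Delta=\max\{a,b\}+\max\{c,d\}-\max\{a,d\}-\max\{c,b\}$ with $a=a_k$, $b=a_{k+1}$, $c=a_{k+r}$ and $d=a_1$. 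It therefore suffices to show $\Delta\ge 0$.

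The key observation is that every parity coordinate has the same Lee distance $d_L(p_\mathbf u,p_\mathbf v)$. Hence $b=c=d_L(p_\mathbf u,p_\mathbf v)$, and
\[
\Delta=\max\{a,b\}+\max\{b,d\}-\max\{a,d\}-b .
\]
This is the same reduction as in Corollary~\ref{cor_delta2}. The difference is that there $b=\frac q2$ dominated $a$ and $d$, whereas here $b$ is arbitrary, so a small case split replaces that domination argument.

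Assume without loss of generality that $a\ge d$; the case $d\ge a$ is symmetric after swapping the roles of $a$ and $d$. Then $\max\{a,d\}=a$, and
\[
\Delta=\bigl(\max\{a,b\}-a\bigr)+\bigl(\max\{b,d\}-b\bigr).
\]
Each bracket is nonnegative, so $\Delta\ge 0$, which yields the claimed inequality.

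The only point needing care is the degenerate case $r=1$. There the parity block is a single symbol, so $d_p(\{p_\mathbf u\}^1,\{p_\mathbf v\}^1)=d_L(p_\mathbf u,p_\mathbf v)$, and the wrap-around term of that block coincides with its only term. I would check directly that the same expression for $\Delta$ (with $b=c$) still arises. Beyond this check I expect no real obstacle, since the argument is just bookkeeping of the two boundary pairs that change when the message and parity blocks are concatenated.
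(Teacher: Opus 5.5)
Your proposal is correct and follows the paper's proof: both start from the identity \eqref{delta} in the proof of Lemma~\ref{lem_delta}, set $b=c=d_L(p_\mathbf{u},p_\mathbf{v})$, and show $\Delta\ge 0$. The only difference is cosmetic. You split on $a\ge d$ versus $d\ge a$ and write $\Delta$ as a sum of two nonnegative brackets, whereas the paper uses four cases according to how $b$ compares with $a$ and with $d$; your $r=1$ check also goes through, since then $c=a_{k+1}=b$ directly.
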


\begin{IEEEproof}
	By Lemma~\ref{lem_delta}, $\Delta
	=
	\max\{a,b\}
	+
	\max\{c,d\}
	-
	\max\{a,d\}
	-
	\max\{c,b\},$
	where $a=d_L(\mathbf{u}_k,\mathbf{v}_k),\,
	b=d_L(p_\mathbf{u},p_\mathbf{v}),\,
	c=d_L(p_\mathbf{u},p_\mathbf{v}), \text{ and }
	d=d_L(\mathbf{u}_1,\mathbf{v}_1).$
	Since $b=c$, $\Delta
	=
	\max\{a,b\}
	+
	\max\{b,d\}
	-
	\max\{a,d\}
	-
	b.$
	
	We consider the following cases.
	
	\smallskip
	
	\emph{Case 1}: $b\ge a$ and $b\ge d$. Then $\Delta
	=
	b+b-\max\{a,d\}-b
	=
	b-\max\{a,d\}
	\ge0.$
	
    \smallskip
	
	\emph{Case 2}: $b\ge a$ and $b<d$. Therefore,  $\Delta
	=
	b+d-d-b
	=
	0.$
	
	\smallskip
	
	\emph{Case 3}: $b<a$ and $b\ge d$. Then $\Delta
	=
	a+b-a-b
	=
	0.$
	
	\smallskip
	
	\emph{Case 4}: $b<a$ and $b<d$. Therefore, $\Delta
	=
	a+d-\max\{a,d\}-b
	=
	\min\{a,d\}-b
	\ge0.$
	
	\smallskip
	
	Hence, $\Delta\ge0$ in all cases. Therefore,
	\[
	\begin{aligned}
		d_p(\operatorname{Enc}(\mathbf{u}),\operatorname{Enc}(\mathbf{v}))
		&=
		d_p(\mathbf{u},\mathbf{v})
		+
		d_p(\{p_\mathbf{u}\}^{r},\{p_\mathbf{v}\}^{r})
		+
		\Delta\\
		&\ge
		d_p(\mathbf{u},\mathbf{v})
		+
		d_p(\{p_\mathbf{u}\}^{r},\{p_\mathbf{v}\}^{r}).
	\end{aligned}
	\]
\end{IEEEproof} 
We now show that Construction~\ref{Cons3} yields an FCSPLC for any pair-locally $(2t, \lambda)$-bounded function satisfying the contiguous block condition of Lemma~\ref{Cont_LB}. Let $\mathbf{u_i}, \mathbf{u_j} \in \mathbb{Z}_q^k$ be such that $Col_f(\boldsymbol{u}_i) = i$ and $Col_f(\boldsymbol{u}_j) = j$, $\forall$ $i, j \in[\lambda]$ and $i \neq j$ for any $\lambda \le \frac{q}{2}$. By Corollary~\ref{cor_delta}, we have $$d_{p}(\mathrm{Enc}(\boldsymbol{u_i}), \mathrm{Enc}(\boldsymbol{u_j})) \ge d_{p}(\boldsymbol{u_i}, \boldsymbol{u_j}) + d_{p}(\{p_{Col_f(\boldsymbol{u_i})}\}^r, \{p_{Col_f(\boldsymbol{u_j})}\}^r),$$ where $p_{Col_f(\boldsymbol{u_i})}$ and $p_{Col_f(\boldsymbol{u_j})}$ are the parity symbols associated with $Col_f(\boldsymbol{u}_i) = i$ and $Col_f(\boldsymbol{u}_j) = j$, respectively. We consider two cases, \textit{Case 1}: If $d_p(\mathbf{u_i},\mathbf{u_j})\geq 2t+1$, then  $d_{p}(\mathrm{Enc}(\boldsymbol{u_i}), \mathrm{Enc}(\boldsymbol{u_j})) \ge d_p(\mathbf{u_i},\mathbf{u_j}) \ge 2t+1$ and \textit{Case 2}: If  $d_p(\mathbf{u_i},\mathbf{u_j}) \le 2t$, then $f(\mathbf{u_j})\in B_p^f(\mathbf{u_i},2t)$, and  $f(\mathbf{u_i}) \neq f(\mathbf{u_j}) $. Therefore, $d_{p}(p_{Col_f(\boldsymbol{u_i})}, p_{Col_f(\boldsymbol{u_j})}) \ge 2\left \lfloor \frac{q}{2\lambda} \right \rfloor$ from \textit{Construction~\ref{Cons3}} and $d_{p}(\{p_{Col_f(\boldsymbol{u_i})}\}^r, \{p_{Col_f(\boldsymbol{u_j})}\}^r) \ge 2 \left \lfloor \frac{q}{2\lambda} \right \rfloor r \ge  2 \left \lfloor \frac{q}{2\lambda} \right \rfloor \left \lceil \frac{t}{\left \lfloor \frac{q}{2\lambda} \right \rfloor} \right \rceil \ge 2t$. Since $\mathbf{u_i} \neq \mathbf{u_j}$, at least one coordinate differs. Consequently, at least two adjacent symbol pairs are affected, and hence $d_p(\mathbf{u_i},\mathbf{u_j})\geq 2$. Therefore,   \[
\begin{aligned}
	d_{p}\bigl(\mathrm{Enc}(\boldsymbol{u_i}),\mathrm{Enc}(\boldsymbol{u_j})\bigr)
	&\ge
	d_{p}(\boldsymbol{u_i},\boldsymbol{u_j})
	+
	d_{p}\bigl(\{p_{Col_f(\boldsymbol{u_i})}\}^{r},
	\{p_{Col_f(\boldsymbol{u_j})}\}^{r}\bigr)\\
	&\ge
	2+2t\\
	& >
	2t+1.
\end{aligned}
\]

Pair-locally binary functions are a special case of pair-locally $(\rho,\lambda)$-bounded functions with $\lambda=2$. Hence, the construction of FCSPLCs for pair-locally binary functions with $\rho=2t$ follows directly from Construction~\ref{Cons3}. Specifically, the parity symbols corresponding to the two color classes are $p_{Col_f(\boldsymbol{u_1})}=0
\,\text{and}\,
p_{Col_f(\boldsymbol{u_2})}=\frac{q}{2},$ respectively, 
and the number of parity symbols $r=\left\lceil\frac{2t-1}{\frac q2}\right\rceil.$
Let $\mathbf{u}_1,\mathbf{u}_2\in\mathbb{Z}_q^k$ satisfy
$f(\mathbf{u}_1)\neq f(\mathbf{u}_2)$.
As in the proof of Construction~\ref{Cons3}, we consider two cases, \emph{Case 1:} If
$d_p(\mathbf{u}_1,\mathbf{u}_2)\ge2t+1$,
then $d_p\bigl(\mathrm{Enc}(\mathbf{u}_1),\mathrm{Enc}(\mathbf{u}_2)\bigr)
\ge
d_p(\mathbf{u}_1,\mathbf{u}_2)
\ge
2t+1$ and \emph{Case 2:} if
$d_p(\mathbf{u}_1,\mathbf{u}_2)\le2t$.
Since
$f(\mathbf{u}_2)\in B_p^f(\mathbf{u}_1,2t)$
and
$f(\mathbf{u}_1)\neq f(\mathbf{u}_2)$,
\textit{Construction~\ref{Cons3}} gives $d_p\!\left(
p_{Col_f(\mathbf{u}_1)},
p_{Col_f(\mathbf{u}_2)}
\right)
=
\frac{q}{2}.$
Consequently, $d_p\!\left(
\{p_{Col_f(\mathbf{u}_1)}\}^{r},
\{p_{Col_f(\mathbf{u}_2)}\}^{r}
\right)
\ge
\frac{q}{2}r
\ge
\frac{q}{2}
\left\lceil
\frac{2t-1}{\frac q2}
\right\rceil
\ge
2t-1.$
Since $\mathbf{u}_1\neq\mathbf{u}_2$, we have
$d_p(\mathbf{u}_1,\mathbf{u}_2)\ge2$.
Therefore,
\[
\begin{aligned}
	d_p\bigl(\mathrm{Enc}(\mathbf{u}_1),\mathrm{Enc}(\mathbf{u}_2)\bigr)
	&\ge
	d_p(\mathbf{u}_1,\mathbf{u}_2)
	+
	d_p\!\left(
	\{p_{Col_f(\mathbf{u}_1)}\}^{r},
	\{p_{Col_f(\mathbf{u}_2)}\}^{r}
	\right)\\
	&\ge
	2+(2t-1)\\
	&=
	2t+1.
\end{aligned}
\]
Thus, \textit{Construction~\ref{Cons3}} yields an FCSPLC for the pair-locally binary function with an upper bound of $\left\lceil\frac{2t-1}{\frac q2}\right\rceil$ on the optimal redundancy, whereas for general pair-locally $(2t, \lambda)$-bounded functions, the corresponding upper bound is $\left \lceil \frac{t}{\left \lfloor \frac{q}{2\lambda} \right \rfloor} \right \rceil$. 

The following example gives an illustration of \textit{construction~\ref{Cons3}} for a locally $(2t, \lambda)_L$-bounded function.
\begin{example}
	Let \( f: \mathbb{Z}_8^k \to \mathrm{Im}(f) \) be a locally \((2,4)\)-bounded function satisfying the contiguous block condition in Lemma~\ref{Cont_LB} with parameters \(t=1\) and \(\lambda=4\), where \(|\mathrm{Im}(f)|\ge4\). Consider vectors \(\boldsymbol{u}_1,\boldsymbol{u}_2,\boldsymbol{u}_3,\boldsymbol{u}_4\in\mathbb{Z}_8^k\) such that
	$f(\boldsymbol{u}_i)\neq f(\boldsymbol{u}_j)
	\,\text{and}\,
	d_p(\boldsymbol{u}_i,\boldsymbol{u}_j)\le2,$
	for all distinct \(i,j\in[4]\). The parity symbols assigned by Construction~\ref{Cons3} are $p_1=0,\,
	p_2=2,\,
	p_3=4,
	\text{ and } p_4=6$
	corresponding to \(\mathrm{Col}_f(\boldsymbol{u})=1,2,3,\) and \(4\), respectively. Since $\lambda=\frac q2$,   Construction~\ref{Cons3} guarantees that $d_p(\mathrm{Enc}(\boldsymbol{u}_i),\mathrm{Enc}(\boldsymbol{u}_j))
	\ge
	d_p(\boldsymbol{u}_i,\boldsymbol{u}_j) \\
	+
	d_p\!\left(p_{\mathrm{Col}_f(\boldsymbol{u}_i)},
	p_{\mathrm{Col}_f(\boldsymbol{u}_j)}\right)
	\ge 2t+1=3,$
	for all \(i\neq j\in[4]\) as \(d_p(\boldsymbol{u}_i,\boldsymbol{u}_j)\ge2\) at least.
\end{example}
    \subsection{Pair-Lee Weight Function}
    The pair-Lee weight of a vector quantifies its total deviation from the zero vector and serves as a natural measure of symbol-pair error magnitude.
    \begin{defn}
    	The pair-Lee weight function is defined as $f(\mathbf{u})=w_p(\mathbf{u})
    	=
    	\sum_{i=1}^{k}
    	\max\{\mathrm{w}_L(u_i),\mathrm{w}_L(u_{i+1})\},$
    	where $\mathbf{u} \in\mathbb{Z}_q^k$,
    	$k\ge 2$, and $u_{k+1}=u_1$.
    \end{defn}
    The image of the pair-Lee weight function is given by $\mathrm{Im}(f)=
    \begin{cases}
    	\{0,2,4,\ldots,q\}, & k=2,\\[1ex]
    	\{0\}\cup\{2,3,\ldots,\tfrac{kq}{2}\}, & k\ge 3.
    \end{cases}$
    Consequently, the expressiveness of the pair-Lee weight function is $E=|\mathrm{Im}(f)|
    =
    \begin{cases}
    	\dfrac{q}{2}+1, & k=2,\\[1ex]
    	\dfrac{kq}{2}, & k\ge 3.
    \end{cases}$ \\ 
    The subsequent construction relies on the fact that the pair-Lee distance between two vectors is bounded below by the absolute difference of their pair-Lee weights. Indeed, since $w_p(\mathbf{u})=d_p(\mathbf{u},\mathbf{0}),$
    the triangle inequality implies $d_p(\mathbf{u},\mathbf{0})
    \le
    d_p(\mathbf{u},\mathbf{v})
    +
    d_p(\mathbf{v},\mathbf{0}),$
    or equivalently, $w_p(\mathbf{u})
    \le
    d_p(\mathbf{u},\mathbf{v})
    +
    w_p(\mathbf{v}).$
    Hence, $w_p(\mathbf{u})-w_p(\mathbf{v})
    \le
    d_p(\mathbf{u},\mathbf{v}).$
    Interchanging the roles of $\mathbf{u}$ and $\mathbf{v}$ gives $w_p(\mathbf{v})-w_p(\mathbf{u})
    \le
    d_p(\mathbf{u},\mathbf{v})$,
    and therefore, $d_p(\mathbf{u},\mathbf{v})
    \ge
    \left|w_p(\mathbf{u})-w_p(\mathbf{v})\right|$. We next present an explicit construction of FCSPLCs for the pair-Lee weight function.
    \begin{cons} \label{cons1}
    	For $\mathbf{u}\in\mathbb{Z}_q^k$ and $f(\mathbf{u})=w_p(\mathbf{u})$. Define $\mathrm{Enc}(\mathbf{u})
    	=
    	(\mathbf{u},\mathbf{p}_{w_p(\mathbf{u})+1})$,
    	where the parity vectors are chosen as follows. Let
    	$\mathbf{p}_1,\mathbf{p}_2,\ldots,\mathbf{p}_{2t+1}$
    	be the codewords of a symbol-pair Lee-distance code with minimum pair-Lee distance
    	$2t+\frac{q}{2}-1$, i.e., $d_p(\mathbf{p}_i,\mathbf{p}_j)
    	\ge
    	2t+\frac{q}{2}-1,
    	\,
    	0\le i<j\le 2t+1$.
    	Set $\mathbf{p}_i
    	=
    	\mathbf p_{i \operatorname{smod} (2t+1)}$ for each $i\ge 2t+2, \, \text{ where } a \operatorname{smod} b=\big((a-1) \bmod b\big)+1$.
    \end{cons}
    We now show that \textit{Construction~\ref{cons1}} defines an FCSPLC for the pair-Lee weight function. Let
    $\mathbf{u}_i,\mathbf{u}_j\in\mathbb{Z}_q^k$ with $w_p(\mathbf{u}_i)\neq w_p(\mathbf{u}_j)$. If $|w_p(\mathbf{u}_i)-w_p(\mathbf{u}_j)| \ge 2t+1,$ then
    \[
    \begin{aligned}
    	d_p(\mathrm{Enc}(\mathbf{u}_i),\mathrm{Enc}(\mathbf{u}_j))
    	&\ge
    	d_p(\mathbf{u}_i,\mathbf{u}_j) \!\!
    	&\ge
    	|w_p(\mathbf{u}_i)-w_p(\mathbf{u}_j)| 
    	&\ge
    	2t+1.
    \end{aligned} 
    \] if $|w_p(\mathbf{u}_i)-w_p(\mathbf{u}_j)|
    \le
    2t$, then by Lemma~\ref{lem_delta},
    \[
    \begin{aligned}
    	d_p(\mathrm{Enc}(\mathbf{u}_i),\mathrm{Enc}(\mathbf{u}_j))
    	&\ge
    	d_p(\mathbf{u}_i,\mathbf{u}_j)
    	+
    	d_p(\mathbf{p}_{w_p(\mathbf{u}_i)},
    	\mathbf{p}_{w_p(\mathbf{u}_j)})
    	-\frac{q}{2}\\
    	&\ge
    	2+
    	\left(2t+\frac{q}{2}-1\right)
    	-\frac{q}{2}\\
    	&=
    	2t+1.
    \end{aligned}
    \]
    Hence, $d_p(\mathrm{Enc}(\mathbf{u}_i),\mathrm{Enc}(\mathbf{u}_j))
    \ge
    2t+1$
    for every pair of information vectors having distinct pair-Lee weights. Therefore, the proposed encoding defines an FCSPLC for the pair-Lee weight function. Moreover, the above construction yields $r_p^{w_p}(q,k,t)
    \le
    N_p\!\left(2t+1,\,
    2t+\frac{q}{2}-1\right)$, an upper bound on the optimal redundancy, 
    where $N_p\!\left(2t+1,\,
    2t+\frac{q}{2}-1\right)$ denotes the shortest length of a symbol-pair Lee-distance code with $2t+1$ codewords and minimum pair-Lee distance $2t+\frac{q}{2}-1$. 
    
    Using \textit{Construction~\ref{cons1}}, we obtain the following upper bounds on the optimal redundancy of FCSPLCs over $\mathbb{Z}_4$.
    \begin{cor}
    	For $q=4$ and $k\ge2$, we have $r_p^{w_p}(4,k,1)\le2,\,
    	r_p^{w_p}(4,k,2)\le3,\,
    	r_p^{w_p}(4,k,3)\le5.$
    \end{cor}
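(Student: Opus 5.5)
The plan is to specialize the bound from Construction~\ref{cons1}, $r_p^{w_p}(q,k,t)\le N_p\!\left(2t+1,\,2t+\frac q2-1\right)$, to $q=4$, where it becomes $r_p^{w_p}(4,k,t)\le N_p(2t+1,2t+1)$. It then suffices to exhibit, over $\mathbb Z_4$:
\begin{itemize}
\item for $t=1$, a code of length $2$ with $3$ codewords and pairwise symbol-pair Lee distance at least $3$;
\item for $t=2$, a code of length $3$ with $5$ codewords and pairwise distance at least $5$;
\item for $t=3$, a code of length $5$ with $7$ codewords and pairwise distance at least $7$.
\end{itemize}
Since $k\ge2$ plays no role beyond guaranteeing $d_p(\mathbf u_i,\mathbf u_j)\ge 2$ in the construction argument, the bounds hold uniformly in $k$.

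For $t=1$, note that for length $2$ both cyclic pairs are the same, so $d_p(\mathbf x,\mathbf y)=2\max\{a_1,a_2\}$, where $a_i=d_L(x_i,y_i)$. A distance of at least $3$ therefore forces some coordinate to differ by $2$. The code $\{(0,0),(2,0),(0,2)\}$ works: every pair differs by $2$ in some coordinate, giving distance $4\ge3$.

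For $t=2$, write $d_p=\max\{a_1,a_2\}+\max\{a_2,a_3\}+\max\{a_3,a_1\}$ for the Lee-difference profile $(a_1,a_2,a_3)$. A short case check shows:
\begin{itemize}
\item $d_p\ge5$ exactly when some $a_i=2$ and at least one other $a_j\ne0$;
\item profiles such as $(2,0,0)$ and $(1,1,1)$ give only $4$ and $3$.
\end{itemize}
Guided by this, I take the cyclic-shift family $\{(0,0,0),(2,1,0),(0,2,1),(1,0,2),(2,2,2)\}$. All ten differences have profiles that are permutations of $(2,1,0)$, $(2,1,1)$ or $(2,2,2)$; for example, $(2,1,0)-(0,2,1)=(2,3,3)$ has profile $(2,1,1)$. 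Each such profile has pair distance at least $5$, which gives $N_p(5,5)\le3$.

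For $t=3$ the main obstacle is producing $7$ codewords of length $5$ with pairwise pair-Lee distance at least $7$. I would first try cyclic shifts of a well-chosen base word together with $\mathbf 0$ and a constant word, mirroring the $t=2$ pattern. Concretely, take $\mathbf 0$, the five cyclic shifts of a word like $(2,1,0,0,0)$ or $(2,1,0,1,0)$, and $(2,2,2,2,2)$ or a variant, and check the profiles of all $21$ differences using the formula $d_p=\sum_i\max\{a_i,a_{i+1}\}$. A useful sufficient criterion is that a profile containing a $2$ adjacent to a nonzero entry, plus at least one further nonzero entry not adjacent to it, already gives $d_p\ge 2+2+1+1+1=7$ in the generic case. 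If the cyclic family fails, I would fall back on a small exhaustive (computer) search over $\mathbb Z_4^5$, which is tiny. Once the code is found, the corollary follows by Construction~\ref{cons1}.
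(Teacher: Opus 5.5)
Your reduction is the paper's route. For $q=4$, Construction~\ref{cons1} needs $2t+1$ parity words with pairwise pair-Lee distance at least $2t+\frac q2-1=2t+1$. Your $t=1$ and $t=2$ codes are correct. The $t=1$ code is the paper's, and your $t=2$ code $\{000,210,021,102,222\}$ is the paper's $\{000,012,201,222,120\}$ read backwards. Your length-$3$ criterion is also right: there every two positions are cyclically adjacent, so $d_p$ equals twice the largest entry of the profile plus its middle entry.

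The gap is $t=3$, the only case with real content. You never produce $7$ words of length $5$ with pairwise distance $7$, and both candidates you name fail:
\begin{itemize}
\item The word $(2,1,0,0,0)$ has pair-Lee weight $2+1+0+0+2=5$, so it is already too close to $\mathbf{0}$.
\item For $w=(2,1,0,1,0)$, with $\sigma$ the cyclic shift, $w-\sigma^2w=(1,1,2,0,0)$, whose pair distance is $1+2+2+0+1=6$.
\end{itemize}
This also shows your ``sufficient criterion'' is not sound. The extra nonzero entry must be non-adjacent to both the $2$ and its nonzero neighbour; otherwise an edge is counted twice. For example, $(1,1,2,0,0)$ and $(1,2,1,0,0)$ both give $6$. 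Promising a computer search does not substitute for the code, since exhibiting and verifying it is the entire content of $r_p^{w_p}(4,k,3)\le 5$.

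To close the gap, you can verify an explicit code such as the paper's $\{00000,00122,01201,01323,02002,02120,03203\}$. Alternatively, keep your cyclic template with base word $w=(2,2,1,0,0)$. By shift invariance of $d_p$, all $21$ checks reduce to the following:
\begin{itemize}
\item $w_p(w)=7$ and $w_p(\{2\}^{5})=10$;
\item $d_p(w,\{2\}^{5})=7$, with difference $(0,0,3,2,2)$;
\item $d_p(w,\sigma w)=d_p(w,\sigma^4w)=7$, with difference $(2,0,3,3,0)$;
\item $d_p(w,\sigma^2w)=d_p(w,\sigma^3w)=10$, with difference $(2,2,3,2,3)$.
\end{itemize}
Hence $\{\mathbf{0}\}\cup\{\sigma^jw:0\le j\le 4\}\cup\{\{2\}^{5}\}$ has minimum pair-Lee distance $7$, which gives $N_p(7,7)\le 5$.
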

    \begin{IEEEproof}
    	For each value of $t$, we find $2t+1$ parity vectors whose minimum pair-Lee distance is
    	$2t+\frac{q}{2}-1$. Applying \textit{Construction~\ref{cons1}} then immediately yields the corresponding upper bound on the optimal redundancy.
    	
    	For $t=1$, let $\mathbf{p}_1=00,\,
    	\mathbf{p}_2=02,\,
    	\mathbf{p}_3=20,$ and $\mathbf{p}_i
    	=\mathbf{p}_{\,i \text{ smod } 3}$ for $i\ge 4$.
    	Since these parity vectors have minimum pair-Lee distance $3$, $r_p^{w_p}(4,k,1)\le2.$
    	
    	For $t=2$, let $\mathbf{p}_1=000,\;
    	\mathbf{p}_2=012,\;
    	\mathbf{p}_3=201,\;
    	\mathbf{p}_4=222,\;
    	\mathbf{p}_5=120,$ and $\mathbf{p}_i
    	=\mathbf{p}_{\,i \text{ smod } 5}$ for $i\ge 6$.
    	Since these parity vectors have minimum pair-Lee distance $5$, $r_p^{w_p}(4,k,2)\le3.$
    	
    	For $t=3$, let $\mathbf{p}_1=00000,\;
    	\mathbf{p}_2=00122,\;
    	\mathbf{p}_3=01201,\;
    	\mathbf{p}_4=01323,\;
    	\mathbf{p}_5=02002,\;
    	\mathbf{p}_6=02120,\;
    	\mathbf{p}_7=03203,$ and $\mathbf{p}_i
    	=\mathbf{p}_{\,i \text{ smod } 7}$ for $i\ge 8$.
    	Since these parity vectors have minimum pair-Lee distance $7$, $r_p^{w_p}(4,k,3)\le5.$   	
    \end{IEEEproof}
    \subsection{Pair-Lee Weight Distribution Function}
    We next introduce the pair-Lee weight distribution function, an important function that helps characterize the pair-weight distribution of codewords under the Lee metric.
    \begin{defn}
    	The pair-Lee weight distribution function is defined as $f(\mathbf u)=\Delta_p^T(\mathbf u)\triangleq \left\lfloor \frac{\mathrm{w_p}(\boldsymbol{u})}{T} \right\rfloor$, where $\boldsymbol{u} \in \mathbb{Z}_q^k $ and $T, k \in \mathbb{N}$ with $k \ge 2$. 
    \end{defn}
    The expressiveness of the pair-Lee weight distribution function is $\mathrm{E}=\left \lfloor \frac {kq} {2T} \right \rfloor+1$. The function is a step threshold function based on the pair-Lee weight of $\mathbf{u}$ with $\mathrm E-1$ steps. The function value increases by one at every multiple of $T$. For FCSPLCs, we only consider the case where $ \ge 2t+1$. We now present an explicit construction of FCSPLCs for the pair-Lee weight distribution function. The construction assigns the same parity vector to all information vectors with the same pair-Lee weight, whereas information vectors with different pair-Lee weights are assigned different parity vectors as shown in Table~\ref{tab:pairweight}.
    \begin{table}[!htbp]
    	\caption{Parity vectors for the pair-Lee weight distribution function}
    	\label{tab:pairweight}
    	\centering
    	\renewcommand{\arraystretch}{1.15}
    	\begin{tabular}{ccc}
    		\toprule
    	$\mathrm{w}_p(\mathbf{u})$ &
    	$\mathbf{p}_{(\mathrm{w}_p(\mathbf{u})+1)\,\operatorname{smod}\,T}$ &
    	$\Delta_p^T(\mathbf u)$  \\
    		\midrule
    		$mT-T$     & $\mathbf{p}_1$ & \multirow{5}{*}{$m-1$} \\
    		$mT-T+1$   & $\mathbf{p}_2$ & \\
    		$\vdots$   & $\vdots$         & \\
    		$mT-2$     & $\mathbf{p}_{T-1}$ & \\
    		$mT-1$     & $\mathbf{p}_T$ & \\
    		\midrule
    		$mT$       & $\mathbf{p}_1$ & \multirow{3}{*}{$m$} \\
    		$mT+1$     & $\mathbf{p}_2$ & \\
    		$\vdots$   & $\vdots$         & \\
    		\bottomrule
    	\end{tabular}
    \end{table}
    \begin{cons}\label{cons2}
    	Let $f(\mathbf u)=\Delta_p^T(\mathbf u)= \left\lfloor \frac{\mathrm{w_p}(\boldsymbol{u})}{T} \right\rfloor$ for $\mathbf{u} \in \mathbb{Z}_q^k$, where $q=2^m, m \ge 2$ and $T \ge 2t+1$. Define $\operatorname{Enc}(\mathbf{u})
    	=
    	\left(\mathbf{u},\mathbf{p}_{(w_p(\mathbf{u})+1) \operatorname{smod} T}\right),$ where the parity vectors
    	$\mathbf{p}_1,\mathbf{p}_2,\ldots,\mathbf{p}_T\in\mathbb{Z}_q^{t+1}$ are 
    \[
    \mathbf{p}_i=
    \begin{cases}
    	\{\frac{q}{2}\}^j\{0\}^{\,t+1-j},
    	& i=2j,\;1\le j\le t,\\[1ex]
    	0\{\frac{q}{2}\}^j\{0\}^{\,t-j},
    	& i=2j+1,\;0\le j\le t-1,\\[1ex]
    	\{\frac{q}{2}\}^{t+1},
    	& 2t+1\le i\le T.
    \end{cases}
    \]
    \end{cons}
    We show that \textit{Construction~\ref{cons2}} gives an FCSPLC for the pair-Lee weight distribution function. Let $\mathbf{u}_1,\mathbf{u}_2\in\mathbb{Z}_q^k$ be such that
    $f(\mathbf{u}_1)\neq f(\mathbf{u}_2)$. If
    $d_p(\mathbf{u}_1,\mathbf{u}_2)\ge2t+1$, then $d_p\bigl(\operatorname{Enc}(\mathbf{u}_1),\operatorname{Enc}(\mathbf{u}_2)\bigr)
    \ge2t+1.$
    Hence, it suffices to consider the case
    $d_p(\mathbf{u}_1,\mathbf{u}_2)\le2t$. Without loss of generality, assume that
    $w_p(\mathbf{u}_1)<w_p(\mathbf{u}_2)$. Since
    $d_p(\mathbf{u}_1,\mathbf{u}_2)\le2t<T$, we write $w_p(\mathbf{u}_1)=(l-1)T+w_1,\,
    w_p(\mathbf{u}_2)=lT+w_2,$
    where $l\ge1$. If $l=1$, then
    $w_1\in\{0,2,\ldots,T-1\}$ and
    $w_2\in\{0,1,\ldots,T-1\}$, and if $l\ge2$, then
    $w_1,w_2\in\{0,1,\ldots,T-1\}$. Furthermore, $w_p(\mathbf{u}_2)-w_p(\mathbf{u}_1)
    \le
    d_p(\mathbf{u}_1,\mathbf{u}_2)
    \le2t<T,$
    which implies $T+w_2-w_1< T,$
    and hence $w_1>w_2.$ 
    Therefore, by \textit{Construction~\ref{cons2}},
    \[
    \operatorname{Enc}(\mathbf{u}_1)
    =
    (\mathbf{u}_1,\mathbf{p}_{w_1+1}),
    \quad
    \operatorname{Enc}(\mathbf{u}_2)
    =
    (\mathbf{u}_2,\mathbf{p}_{w_2+1}).
    \]
  If $T=2t+1$, then,
 by \textit{Construction~\ref{cons2}}, the parity vectors
  $\mathbf{p}_{w_1+1}$ and $\mathbf{p}_{w_2+1}$ fall into one of the following cases.
  
  \smallskip
  
  \noindent\textit{Case 1}:
  $\mathbf{p}_{w_1+1}
  =\{\frac{q}{2}\}^{\frac{w_1+1}{2}}
  \{0\}^{\,t+1-\frac{w_1+1}{2}}$, $1\le w_1 \le 2t-1$ and
  $\, \mathbf{p}_{w_2+1}
  =\{\frac{q}{2}\}^{\frac{w_2+1}{2}}
  \{0\}^{\,t+1-\frac{w_2+1}{2}},$  $1\le w_2 \le 2t-1$.
  
  \smallskip
  
  \noindent\textit{Case 2}:
  $\mathbf{p}_{w_1+1}
  =0\{\frac{q}{2}\}^{\frac{w_1}{2}}
  \{0\}^{\,t-\frac{w_1}{2}},$ $0 \le w_1 \le 2t-2$ and
  $\, \mathbf{p}_{w_2+1}
  =0\{\frac{q}{2}\}^{\frac{w_2}{2}}
  \{0\}^{\,t-\frac{w_2}{2}},$ $0 \le w_2 \le 2t-2.$
  
  \smallskip
  
  \noindent\textit{Case 3}:
  $\mathbf{p}_{w_1+1}
  =\{\frac{q}{2}\}^{\frac{w_1+1}{2}}
  \{0\}^{\,t+1-\frac{w_1+1}{2}},$ $1\le w_1 \le 2t-1$ and
  $\, \mathbf{p}_{w_2+1}
  =0\{\frac{q}{2}\}^{\frac{w_2}{2}}
  \{0\}^{\,t-\frac{w_2}{2}},$ $0 \le w_2 \le 2t-2.$
  
  \smallskip
  
  \noindent\textit{Case 4}:
  $\mathbf{p}_{w_1+1}
  =0\{\frac{q}{2}\}^{\frac{w_1}{2}}
  \{0\}^{\,t-\frac{w_1}{2}},$ $0 \le w_1 \le 2t-2$ and
  $\, \mathbf{p}_{w_2+1}
  =\{\frac{q}{2}\}^{\frac{w_2+1}{2}}
  \{0\}^{\,t+1-\frac{w_2+1}{2}},$ $1\le w_2 \le 2t-1$.
  
  \smallskip
  
   \noindent\textit{Case 5}:
  $\mathbf{p}_{w_1+1}
  =\{\frac{q}{2}\}^{\frac {w_1+2} {2}}$, $w_1=2t$ and
  $\, \mathbf{p}_{w_2+1}
  =\{\frac{q}{2}\}^{\frac{w_2+1}{2}}
  \{0\}^{\,t+1-\frac{w_2+1}{2}},$  $1\le w_2 \le 2t-1$.
  
  \smallskip
  
  \noindent\textit{Case 6}:
  $\mathbf{p}_{w_1+1}
  =\{\frac{q}{2}\}^{\frac {w_1+2} {2}}$, $w_1=2t$ and
  $\, \mathbf{p}_{w_2+1}
  =0\{\frac{q}{2}\}^{\frac{w_2}{2}}
  \{0\}^{\,t-\frac{w_2}{2}},$ $0 < w_2 \le 2t-2.$ 
  
  \smallskip
  
  For \textit{Cases~1} and~\textit{2}, $d_p(\mathbf{p}_{w_1+1},\mathbf{p}_{w_2+1})
  =
  \frac{q}{4}(w_1-w_2+2),$
  whereas for \textit{Cases~3} and~\textit{4}, $d_p(\mathbf{p}_{w_1+1},\mathbf{p}_{w_2+1})
  \ge
  \frac{q}{4}(w_1-w_2+2),$
  and for \textit{Cases~5} and~\textit{6}, $d_p(\mathbf{p}_{w_1+1},\mathbf{p}_{w_2+1})
  =
  \frac{q}{4}(w_1-w_2+3), \text{ and } d_p(\mathbf{p}_{w_1+1},\mathbf{p}_{w_2+1})
  =
  \frac{q}{4}(w_1-w_2+4) \text{, respectively. }$
  
  For $w_1=2t$ and $w_2=0$, then, by \textit{Construction~\ref{cons2}}, $\mathbf{p}_{w_1+1}
  =
  \mathbf{p}_{2t+1}
  =
  \{\frac{q}{2}\}^{t+1},
  \,
  \mathbf{p}_{w_2+1}
  =
  \mathbf{p}_1
  =
  \{0\}^{t+1}.$
By Corollary~\ref{cor_delta2}, $d_p(\operatorname{Enc}(\mathbf{u}_1),\operatorname{Enc}(\mathbf{u}_2))
  \ge
  d_p(\mathbf{u}_1,\mathbf{u}_2)
  +
  d_p(\mathbf{p}_{2t+1},\mathbf{p}_1).$
  Hence,
  \[
  \begin{aligned}
  	d_p(\operatorname{Enc}(\mathbf{u}_1),\operatorname{Enc}(\mathbf{u}_2)) 
  	& \ge
  	2+d_p(\mathbf{p}_{2t+1},\mathbf{p}_1) (\text{ since } d_p(\mathbf{u}_1,\mathbf{u}_2)=2) \\
  	& \ge
  	2+\frac{q}{2}(t+1)\\
  	&>
  	2t+1 \, (\text{since } q\ge4).
  \end{aligned}
  \]
  
  For all other cases, $d_p(\mathbf{p}_{w_1+1},\mathbf{p}_{w_2+1})
  \ge
  \frac{q}{4}(w_1-w_2+2).$ Hence,
  \[
  \begin{aligned}
  	d_p(\operatorname{Enc}(\mathbf{u}_1),\operatorname{Enc}(\mathbf{u}_2))
  	&\ge
  	\frac{q}{4}(w_1-w_2+2)
  	+(T+w_2-w_1)
  	-\frac{q}{2}\\
  	&=
  	T+\left(\frac{q}{4}-1\right)(w_1-w_2)\\
  	&=
  	2t+1+\left(\frac{q}{4}-1\right)(w_1-w_2)\\
  	&>
  	2t+1,
  \end{aligned}
  \]
  where the final inequality follows from $q\ge4$ and $w_1>w_2$.
  
 If $T>2t+1$, note that $w_p(\mathbf{u}_2)-w_p(\mathbf{u}_1)
 =
 T+w_2-w_1
 \le
 d_p(\mathbf{u}_1,\mathbf{u}_2)
 \le
 2t.$
 Hence, $w_1-w_2
 \ge
 T-2t.$
 Since $T>2t+1$, we have $T-2t\ge2,$
 which implies $w_1-w_2\ge2.$ We first show that the case $\mathbf{p}_{w_1+1}
 =
 \mathbf{p}_{w_2+1}
 =
 \left\{\frac{q}{2}\right\}^{t+1}$
 cannot occur. Indeed, this would require $2t\le w_2\le w_1\le T-1.$
 Since $w_p(\mathbf{u}_2)-w_p(\mathbf{u}_1)
 =
 T+w_2-w_1,$
 we obtain $w_p(\mathbf{u}_2)-w_p(\mathbf{u}_1)
 \ge
 T+2t-(T-1)
 =
 2t+1.$
 Hence, $d_p(\mathbf{u}_1,\mathbf{u}_2)
 \ge
 w_p(\mathbf{u}_2)-w_p(\mathbf{u}_1)
 \ge
 2t+1,$
 which contradicts the assumption
 $d_p(\mathbf{u}_1,\mathbf{u}_2)\le2t$.
 Therefore, at most one of the parity vectors can be equal to
 $\{\frac{q}{2}\}^{t+1}$.
 
 Consequently, by \textit{Construction~\ref{cons2}}, the parity vectors
 $\mathbf{p}_{w_1+1}$ and $\mathbf{p}_{w_2+1}$ fall into one of the following cases. \textit{Cases~a--d} are identical to \textit{Cases~1--4}, respectively.
 
 \smallskip
 
 \noindent\textit{Case a}:
 $\mathbf{p}_{w_1+1}
 =
 \{\frac{q}{2}\}^{\frac{w_1+1}{2}}
 \{0\}^{\,t+1-\frac{w_1+1}{2}},$
 $1\le w_1\le2t-1$
 and
 $\mathbf{p}_{w_2+1}
 =
 \{\frac{q}{2}\}^{\frac{w_2+1}{2}}
 \{0\}^{\,t+1-\frac{w_2+1}{2}},$
 $1\le w_2\le2t-1.$
 
 \smallskip
 
 \noindent\textit{Case b}:
 $\mathbf{p}_{w_1+1}
 =
 0\{\frac{q}{2}\}^{\frac{w_1}{2}}
 \{0\}^{\,t-\frac{w_1}{2}},$
 $0\le w_1\le2t-2$
 and
 $\mathbf{p}_{w_2+1}
 =
 0\{\frac{q}{2}\}^{\frac{w_2}{2}}
 \{0\}^{\,t-\frac{w_2}{2}},$
 $0\le w_2\le2t-2.$
 
 \smallskip
 
 \noindent\textit{Case c}:
 $\mathbf{p}_{w_1+1}
 =
 \{\frac{q}{2}\}^{\frac{w_1+1}{2}}
 \{0\}^{\,t+1-\frac{w_1+1}{2}},$
 $1\le w_1\le2t-1$
 and
 $\mathbf{p}_{w_2+1}
 =
 0\{\frac{q}{2}\}^{\frac{w_2}{2}}
 \{0\}^{\,t-\frac{w_2}{2}},$
 $0\le w_2\le2t-2.$
 
 \smallskip
 
 \noindent\textit{Case d}:
 $\mathbf{p}_{w_1+1}
 =
 0\{\frac{q}{2}\}^{\frac{w_1}{2}}
 \{0\}^{\,t-\frac{w_1}{2}},$
 $0\le w_1\le2t-2$
 and
 $\mathbf{p}_{w_2+1}
 =
 \{\frac{q}{2}\}^{\frac{w_2+1}{2}}
 \{0\}^{\,t+1-\frac{w_2+1}{2}},$
 $1\le w_2\le2t-1.$
 
 \smallskip
 
 \noindent\textit{Case e}:
 $\mathbf{p}_{w_1+1}
 =
 \{\frac{q}{2}\}^{t+1},$
 $2t\le w_1\le T-1$
 and
 $\mathbf{p}_{w_2+1}
 =
 \{\frac{q}{2}\}^{\frac{w_2+1}{2}}
 \{0\}^{\,t+1-\frac{w_2+1}{2}},$
 $1\le w_2\le2t-1.$
 
 \smallskip
 
 \noindent\textit{Case f}:
 $\mathbf{p}_{w_1+1}
 =
 \{\frac{q}{2}\}^{t+1},$
 $2t\le w_1\le T-1$
 and
 $\mathbf{p}_{w_2+1}
 =
 0\{\frac{q}{2}\}^{\frac{w_2}{2}}
 \{0\}^{\,t-\frac{w_2}{2}},$
 $0 < w_2\le2t-2.$ 
 
 \smallskip
 
 For \textit{Cases~a} and~\textit{b}, $d_p(\mathbf{p}_{w_1+1},\mathbf{p}_{w_2+1})
 =
 \frac{q}{4}(w_1-w_2+2),$
 whereas for \textit{Cases~c} and~\textit{d}, $d_p(\mathbf{p}_{w_1+1},\mathbf{p}_{w_2+1})
 \ge
 \frac{q}{4}(w_1-w_2+2).$
 For \textit{Cases~e} and~\textit{f}, $d_p(\mathbf{p}_{w_1+1},\mathbf{p}_{w_2+1})
 =
 \frac{q}{4}(2t-w_2+3),$
 and $d_p(\mathbf{p}_{w_1+1},\mathbf{p}_{w_2+1})
 =
 \frac{q}{4}(2t-w_2+4),$
 respectively.
 
 For $w_1=T-1$ and $w_2=0$, $\mathbf{p}_{w_1+1}
 =
 \mathbf{p}_T
 =
 \left\{\frac{q}{2}\right\}^{t+1},
 \,
 \mathbf{p}_{w_2+1}
 =
 \mathbf{p}_1
 =
 \{0\}^{t+1}.$
 By Corollary~\ref{cor_delta2}, $d_p(\operatorname{Enc}(\mathbf{u}_1),\operatorname{Enc}(\mathbf{u}_2))
 \ge
 d_p(\mathbf{u}_1,\mathbf{u}_2)
 +
 d_p(\mathbf{p}_T,\mathbf{p}_1),$
 and we have $d_p(\mathbf{u}_1,\mathbf{u}_2)=2.$
 Hence,
 \[
 \begin{aligned}
 	d_p(\operatorname{Enc}(\mathbf{u}_1),\operatorname{Enc}(\mathbf{u}_2))
 	&\ge
 	2+\frac{q}{2}(t+1)\\
 	&>
 	2t+1 \, (\text{since } q\ge4).
 \end{aligned}
 \] 
 
 For \textit{Cases~a--d}, $d_p(\mathbf{p}_{w_1+1},\mathbf{p}_{w_2+1})
 \ge
 \frac{q}{4}(w_1-w_2+2).$
 Hence,
 \[
 \begin{aligned}
 	d_p(\operatorname{Enc}(\mathbf{u}_1),\operatorname{Enc}(\mathbf{u}_2))
 	&\ge
 	\frac{q}{4}(w_1-w_2+2)
 	+(T+w_2-w_1)
 	-\frac{q}{2}\\
 	&=
 	T+\left(\frac{q}{4}-1\right)(w_1-w_2)\\
 	&\ge
 	T>2t+1,
 \end{aligned}
 \]
 where the last inequality follows from $q\ge4$ and $T>2t+1$.
 
 For \textit{Case~e},
 \[
 \begin{aligned}
 	d_p(\operatorname{Enc}(\mathbf{u}_1),\operatorname{Enc}(\mathbf{u}_2))
 	&\ge
 	\frac{q}{4}(2t-w_2+3)
 	+(T+w_2-w_1)
 	-\frac{q}{2}\\
 	&=
 	(T-w_1)
 	+\frac{q}{4}(2t-w_2+1)
 	+w_2.
 \end{aligned}
 \]
 Since $2t\le w_1\le T-1$, we have $T-w_1\ge1.$
 Moreover, $\frac{q}{4}(2t-w_2+1)+w_2
 =
 \frac{q}{4}(2t+1)
 +\left(1-\frac{q}{4}\right)w_2.$
 Since $q\ge4$, the coefficient
 $1-\frac{q}{4}\le0$, and thus the above expression is minimized when
 $w_2=2t-1$. Consequently,
 \[
 \begin{aligned}
 	d_p(\operatorname{Enc}(\mathbf{u}_1),\operatorname{Enc}(\mathbf{u}_2))
 	&\ge
 	1+\frac{q}{2}+2t-1\\
 	&=
 	2t+\frac{q}{2}>2t+1.
 \end{aligned}
 \]
 
For \textit{Case~f},
 \[
 \begin{aligned}
 	d_p(\operatorname{Enc}(\mathbf{u}_1),\operatorname{Enc}(\mathbf{u}_2))
 	&\ge
 	\frac{q}{4}(2t-w_2+4)
 	+(T+w_2-w_1)
 	-\frac{q}{2}\\
 	&=
 	(T-w_1)
 	+\frac{q}{4}(2t-w_2+2)
 	+w_2.
 \end{aligned}
 \]
 Again, $T-w_1\ge1$. Furthermore, $\frac{q}{4}(2t-w_2+2)+w_2
 =
 \frac{q}{4}(2t+2)
 +\left(1-\frac{q}{4}\right)w_2,$
 which is minimized when
 $w_2=2t-2$. Hence,
 \[
 \begin{aligned}
 	d_p(\operatorname{Enc}(\mathbf{u}_1),\operatorname{Enc}(\mathbf{u}_2))
 	&\ge
 	1+q+2t-2\\
 	&=
 	2t+q-1>2t+1.
 \end{aligned}
 \]
 
 \textit{Construction~\ref{cons2}} establishes an upper bound of $t+1$ on the optimal redundancy of FCSPLCs for the pair-Lee weight distribution function. The following example illustrates Construction~\ref{cons2} for the pair-Lee weight distribution function.
 \begin{example}
 	Consider $f(\mathbf{u})=\Delta_p^T(\mathbf u)= \left\lfloor \frac{\mathrm{w_p}(\boldsymbol{u})}{T} \right\rfloor$, where $\mathbf{u} \in \mathbb{Z}_4^5$ and $T=5$. The expressiveness $E = \left \lfloor \frac {kq} {2T} \right \rfloor+1 = 3$, i.e., \( f(\boldsymbol{u}) \in \{0,1,2\} \). Thus, possible pair-Lee weight values are $\mathrm{w_p}(\boldsymbol{u})=\{0,1,2,3,4,5,6,7, \\ 8,9,10\}$.  By Construction~\ref{cons2}, the assigned parity vectors for $t=2$ are $\mathbf{p_1}=(000), \mathbf{p_2}=(200), \mathbf{p_3}=(020), \mathbf{p_4}=(220), \mathbf{p_5}=(222), \mathbf{p_6}=\mathbf{p_1}=(000), \mathbf{p_7}=\mathbf{p_2}=(200), \mathbf{p_8}=\mathbf{p_3}=(020), \mathbf{p_9}=\mathbf{p_4}=(220), \mathbf{p_{10}}=\mathbf{p_5}=(222), \mathbf{p_{11}}=\mathbf{p_1}=(000)$, respectively.
 \end{example}
\section{A Plotkin-type Bound on the Optimal Redundancy of FCSPLCs for Linear Functions} \label{section6}
In this section, we derive a Plotkin-type bound on the optimal redundancy of FCSPLCs for linear functions. Linear functions play an important role in the study of FCCs under the Lee metric and were previously investigated in \cite{GA}. 
\begin{defn}[Linear Function \cite{GA}]
	A function $f : \mathbb{Z}_{q}^{k} \to \mathbb{Z}_{q}^{l}$ is linear if
	$f(\alpha \mathbf{x}+\beta \mathbf{y})
	=
	\alpha f(\mathbf{x})+\beta f(\mathbf{y})$
	for all $\mathbf{x},\mathbf{y}\in\mathbb{Z}_{q}^{k}$ and
	$\alpha,\beta\in\mathbb{Z}_{q}$.
\end{defn} 
A linear function can be expressed as a matrix operation $f(\mathbf u)=\mathbf F \mathbf u,$ where $\mathbf F \in \mathbb{Z}_q^{l \times k}$. The kernel of $f$, equivalently the null space of $\mathbf F$, is denoted by $\ker(f)$. Furthermore, $|\ker(f)|_{d}$ denotes the number of vectors in $\ker(f)$ with pair-Lee weight $d$, and $\mathbb{Z}_{q}^{k}/\ker(f)$ denotes the set of cosets of $\ker(f)$. We consider only linear functions with $l \le k$, where $\mathbf F \in \mathbb{Z}_q^{l \times k}$ is of full rank, i.e., $\operatorname{rank}_{\mathbb Z_q}(\mathbf F)=l$. 

The following lemma establishes a Plotkin-type bound on the optimal redundancy of FCSPLCs for linear functions.
\begin{lem}
	For a linear function $f: \mathbb Z_q^k \to \mathbb Z_q^l,$ the optimal redundancy of FCSPLCs is lower bounded by 
	\begin{equation}
	 r_{p}^f(q, k, t) \ge \frac {8} {3q}\Big[(2t+1-\frac q2)(1-q^{-l})-\frac {k(2q^2+1)}{6q}+\frac {s}{q^k}\Big],
	\end{equation} where $s=\sum_{\mathbf u \in \ker(f)} \mathrm{w}_p(\mathbf u)$.
\end{lem}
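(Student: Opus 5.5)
The plan is to combine the lower bound of Theorem~\ref{Red_Bounds} (in the form of Corollary~\ref{cor1_LB} with the full message set $\{\mathbf u_1,\ldots,\mathbf u_{q^k}\}=\mathbb Z_q^k$) with the Plotkin-type bound of Theorem~\ref{Plotkin_bound}. Taking $M=q^k\equiv 0\pmod 4$ (valid since $q=2^m$ and, for $q=2$, $k\ge2$), Theorem~\ref{Plotkin_bound} gives
\[
r_p^f(q,k,t)\ge N_p\bigl(\mathbf D_f^{(1)}(t,\mathbf u_1,\ldots,\mathbf u_{q^k})\bigr)\ge \frac{16}{3q^{2k}q}\sum_{i<j}\bigl[\mathbf D_f^{(1)}\bigr]_{ij}.
\]
Everything then reduces to lower bounding $\sum_{i<j}[\mathbf D_f^{(1)}]_{ij}$. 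Since $[x]^+\ge x$, each entry with $f(\mathbf u_i)\neq f(\mathbf u_j)$ is at least $(2t+1-\frac q2)-d_p(\mathbf u_i,\mathbf u_j)$.

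\emph{Step 1 (counting pairs).} By linearity, $f(\mathbf u_i)\neq f(\mathbf u_j)$ iff $\mathbf u_i-\mathbf u_j\notin\ker(f)$. Because $\mathbf F$ has full rank $l$, $|\ker(f)|=q^{k-l}$, so the number of unordered pairs with distinct function values is $\frac12 q^k(q^k-q^{k-l})=\frac12 q^{2k}(1-q^{-l})$. Multiplying by $(2t+1-\frac q2)$ and by $\frac{16}{3q^{2k+1}}$ produces the leading term $\frac{8}{3q}(2t+1-\frac q2)(1-q^{-l})$.

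\emph{Step 2 (distance sum over differing pairs).} By translation invariance of $d_p$, we have $\sum_{i<j,\,f\neq}d_p(\mathbf u_i,\mathbf u_j)=\frac12 q^k\sum_{\mathbf w\notin\ker(f)}w_p(\mathbf w)=\frac12 q^k\bigl(W-s\bigr)$, where $W=\sum_{\mathbf w\in\mathbb Z_q^k}w_p(\mathbf w)$ and $s$ is as in the statement. After scaling by $\frac{16}{3q^{2k+1}}$, the $s$ part becomes $+\frac{8}{3q}\cdot\frac{s}{q^k}$, matching the statement. It remains to show that $\frac{8}{3q}\cdot\frac{W}{q^k}\le\frac{8}{3q}\cdot\frac{k(2q^2+1)}{6q}$, i.e.\ that the average pair-Lee weight is at most $\frac{k(2q^2+1)}{6q}$.

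\emph{Step 3 (average pair-Lee weight).} By linearity of expectation over uniform $\mathbf w$, the average is $k\,\mathbb E[\max\{w_L(X),w_L(Y)\}]$ with $X,Y$ i.i.d.\ uniform on $\mathbb Z_q$, using $k\ge2$ so that consecutive coordinates are independent. I will compute this expectation from the distribution of $w_L$: the value $0$ occurs once, the values $1,\ldots,\frac q2-1$ occur twice each, and $\frac q2$ occurs once. Then use
\[
\mathbb E[\max]=\sum_{a\ge1}\Pr[\max\ge a]=\sum_{a=1}^{q/2}\bigl(1-\Pr[w_L<a]^2\bigr),\qquad \Pr[w_L<a]=\frac{2a-1}{q}.
\]
This gives $\frac q2-\frac{1}{q^2}\sum_{a=1}^{q/2}(2a-1)^2=\frac q2-\frac{q^2-1}{6q}=\frac{2q^2+1}{6q}$, exactly the constant in the statement.

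The main obstacle is bookkeeping rather than ideas. One must check the Plotkin case $M=q^k\equiv0\pmod 4$, and note that dropping $[\cdot]^+$ only weakens the bound (it is valid even if $2t+1-\frac q2$ is small). The one genuinely computational point is Step 3, namely the identity for $\sum(2a-1)^2$ and the independence of adjacent coordinates, including the cyclic wrap when $k=2$, where the pair $(u_2,u_1)$ still has independent entries.
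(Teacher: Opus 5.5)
Your proposal is correct and follows essentially the paper's proof: drop $[\cdot]^+$ in $\mathbf D_f^{(1)}$, use linearity and translation invariance of $d_p$ to reduce the distance sum to $q^k\sum_{\mathbf w\notin\ker(f)}w_p(\mathbf w)$, evaluate the total pair-Lee weight $kq^{k-1}(2q^2+1)/6$, and combine with the $M\equiv 0\pmod 4$ case of Theorem~\ref{Plotkin_bound} at $M=q^k$. The differences are cosmetic: the paper phrases the translation step as ``every column of $\mathbf D$ is a permutation of every other column'' and computes the weight sum by counting pairs with a given maximum, whereas you use the tail-sum form of the expectation.
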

\begin{IEEEproof}
	Consider the message set $q^k = \{\mathbf{u}_i : i \in [q^k] \}$. Let $\mathbf{D}_f^{(1)}(t, \mathbf{u}_1, \dots, \mathbf{u}_{q^k})$ be the DRM of $f: \mathbb Z_q^k \to \mathbb Z_q^l$. For convenience, we denote this DRM by $\mathbf{D}$ throughout this proof. Let $\mathcal{P} = \{ \mathbf{p}_1, \mathbf{p}_2, \ldots, \mathbf{p}_{q^k} \} \subseteq q^r$ be a $\mathbf{D}_p$-code corresponding to the DRM $\mathbf{D}$. From Definition \ref{D_p-code}, it follows that $d_{p}(\mathbf{p}_i, \mathbf{p}_j) \geq [\mathbf{D}]_{ij}, \forall \, i, j \in [q^k].$  Therefore, we have
	\begin{equation}\label{eq:sum} 
		\sum_{i,j \in [q^k]} d_p(\mathbf{p}_i, \mathbf{p}_j) \ge \sum_{i,j \in [q^k]} [\mathbf{D}]_{ij}.
	\end{equation}
    Since $\operatorname{rank}_{\mathbb{Z}_q}(\mathbf{F})=l$, it follows that $\dim_{\mathbb{Z}_q}(\ker(f))=k-l.$
	Hence, each coset of $\ker(f)$ contains $q^{k-l}$ elements. If
	$\mathbf{u}_i,\mathbf{u}_j\in\mathbb{Z}_q^k$ belong to the same coset of
	$\ker(f)$, then $f(\mathbf{u}_i)=f(\mathbf{u}_j)$, and consequently,
	$[\mathbf{D}]_{ij}=0$. Therefore, every column of $\mathbf{D}$ contains at
	least $q^{k-l}$ zero entries.
	
	Suppose $\mathbf{u}_i$ and $\mathbf{u}_j$ belong to different cosets of
	$\ker(f)$. Then $[\mathbf{D}]_{ij}
	=
	\left[\left(2t+1-\frac{q}{2}\right)-d_p(\mathbf{u}_i,\mathbf{u}_j)\right]^+.$
	Now consider any other column $j'\neq j$ of $\mathbf{D}$ and let $\mathbf{v}=\mathbf{u}_{j'}-\mathbf{u}_j.$
	For each row index $i$, let $i'$ be the unique index satisfying $\mathbf{u}_{i'}=\mathbf{u}_i+\mathbf{v}.$
	Since addition by a fixed vector is a bijection on $\mathbb{Z}_q^k$, the
	mapping $i\mapsto i'$ is a permutation of the row indices. Moreover, $d_p(\mathbf{u}_{i'},\mathbf{u}_{j'})
	=
	d_p(\mathbf{u}_i+\mathbf{v},\mathbf{u}_j+\mathbf{v})
	=
	d_p(\mathbf{u}_i,\mathbf{u}_j),$
	where the last equality follows from the translation invariance of the
	symbol-pair Lee distance.
	
	Since $f$ is linear, $f(\mathbf{u}_{i'})=f(\mathbf{u}_i)+f(\mathbf{v})
	\quad\text{and}\quad
	f(\mathbf{u}_{j'})=f(\mathbf{u}_j)+f(\mathbf{v}),$
	which implies $f(\mathbf{u}_{i'})=f(\mathbf{u}_{j'})
	\iff
	f(\mathbf{u}_i)=f(\mathbf{u}_j).$
	Hence, $[\mathbf{D}]_{i'j'}
	=
	[\mathbf{D}]_{ij}.$ Therefore, every column of $\mathbf{D}$ is a permutation of every other
	column. Consequently,
	\begin{align}
		\sum_{i,j}[\mathbf{D}]_{ij}
		&=
		(\text{number of columns})
		\times
		(\text{sum of one column}) \nonumber\\
		&=
		q^k\sum_i[\mathbf{D}]_{ij}, \nonumber
	\end{align} 	where $q^k$ is the number of columns.

    Without loss of generality, let $\mathbf{u}_j=\mathbf{0}$. Let $I=\mathbb{Z}_q^k\setminus\ker(f)$
    denote the set of vectors outside the kernel of $f$. Thus, we have \begin{align}
    	[\mathbf{D}]_{ij}
    	&=
    	\left[\left(2t+1-\frac{q}{2}\right)-d_p(\mathbf{u}_i,\mathbf{0})\right]^+ \nonumber\\
    	&\ge
    	\left(2t+1-\frac{q}{2}\right)-d_p(\mathbf{u}_i,\mathbf{0}) \nonumber\\
    	&=
    	2t+1-\frac{q}{2}-\mathrm{w}_p(\mathbf{u}_i),
    	\qquad \forall\,\mathbf{u}_i\in I. \nonumber
    \end{align}
    Therefore,
    \begin{align}
    	\sum_i[\mathbf{D}]_{ij}
    	&\ge
    	(q^k-q^{k-l})\left(2t+1-\frac{q}{2}\right)
    	-\sum_{\mathbf{u}_i\in I}\mathrm{w}_p(\mathbf{u}_i). \nonumber
    \end{align}The sum of the pair-Lee weights of all vectors in $\mathbb{Z}_q^k$ is given by $\sum_{\mathbf{u}_i\in\mathbb{Z}_q^k}\mathrm{w}_p(\mathbf{u}_i)
    =
    \frac{kq^{k-1}(2q^2+1)}{6},$
    whose derivation is given in Appendix~\ref{prop:sum_pair_weight}. Denoting the sum of the pair-Lee weights of the vectors in $\ker(f)$ by $s$, we have 
     \begin{align}
    	\sum_i[\mathbf{D}]_{ij}
    	&\ge
    	(q^k-q^{k-l})\left(2t+1-\frac{q}{2}\right)
    	-\frac{kq^{k-1}(2q^2+1)}{6}+s. \nonumber
    \end{align} Thus, 
    \begin{align} \label{lb_D}
    	\sum_{i,j}[\mathbf{D}]_{ij}
    	& \ge
    	q^k\sum_i[\mathbf{D}]_{ij} \nonumber \\
    	& = q^k \Big[(q^k-q^{k-l})\left(2t+1-\frac{q}{2}\right)
    	-\frac{kq^{k-1}(2q^2+1)}{6}+s\Big]. 
    \end{align} 
    From the proof of Theorem~\ref{Plotkin_bound}, it follows that, for $M\equiv0\pmod4$,
    \begin{align} \label{ub_D}
    	\sum_{i,j}[\mathbf{D}]_{ij}
    	& \le
    	\frac {3qM^2r}{8}. 
    \end{align}
    From (\ref{lb_D}) and (\ref{ub_D}), we obtain
    \begin{align}
    	\frac{3qM^2r}{8} \ge q^k \Bigg[
    	(q^k-q^{k-l})\left(2t+1-\frac{q}{2}\right)
    	-\frac{kq^{k-1}(2q^2+1)}{6}
    	+s
    	\Bigg],
    \end{align} where $M=q^k$.
    Which implies
    	\begin{equation*}
    	r_{p}^f(q, k, t) \ge \frac {8} {3q}\Big[(2t+1-\frac q2)(1-q^{-l})-\frac {k(2q^2+1)}{6q}+\frac {s}{q^k}\Big],
    \end{equation*} where $s=\sum_{\mathbf u \in \ker(f)} \mathrm{w}_p(\mathbf u)$.
\end{IEEEproof}
As a special case, for $q=2$, the proposed bound reduces to the Plotkin-type bound for function-correcting symbol-pair codes (FCSPCs) established in \cite{SR}, given by
\begin{equation*}
	r_{p}^{f}(2, k, t)
	\ge
	\frac{8t}{3}\left(1-2^{-l}\right)
	-k
	+\frac{s}{3\cdot 2^{k-2}},
\end{equation*}
where $s=\sum_{\mathbf{u}\in\ker(f)}\mathrm{w}_p(\mathbf{u}).$

\section{Conclusion} \label{conclusion}
In this paper, we generalized symbol-pair Lee-distance from $\mathbb{Z}_4$ to $\mathbb{Z}_q$, where $q \ge 2$, thereby establishing a Lee-metric framework for $q$-ary symbol-pair read channels. Using this framework for $q=2^m$, where $m \ge 1$, we introduced function-correcting symbol-pair Lee-distance codes (FCSPLCs) and investigated their redundancy requirements. By establishing a connection between FCSPLCs and irregular-pair Lee-distance codes, we derived general lower and upper bounds on the optimal redundancy in terms of the shortest length of such codes, along with simplified bounds that are easier to evaluate. Furthermore, by deriving lower and upper bounds on the shortest length of irregular-pair Lee-distance codes, we established Plotkin-type and Gilbert--Varshamov-type bounds on the optimal redundancy of FCSPLCs. Specializing to bijective functions yielded the first Plotkin-type and Gilbert--Varshamov-type bounds for classical Lee metric codes for symbol-pair read channels over $\mathbb{Z}_q$. We further specialized the proposed framework to several important functions, including pair-locally bounded functions, the pair-Lee weight function, and the pair-Lee weight distribution function, for which we presented explicit constructions of FCSPLCs together with corresponding upper bounds on the optimal redundancy. In addition, we studied FCSPLCs associated with linear functions and derived a Plotkin-type lower bound on their optimal redundancy. Several interesting directions for future research remain, including deriving tighter bounds on the optimal redundancy of FCSPLCs and developing optimal constructions for broader classes of functions.
\appendix
	\begin{prop}\label{prop:sum_pair_weight}
		Let $q=2^m$, $m\ge1$. Then the sum of the symbol-pair Lee weights of all vectors in $\mathbb{Z}_q^k$ is
		\[
		\sum_{\boldsymbol{u}\in\mathbb{Z}_q^k}w_{p}(\boldsymbol{u})
		=
		\frac{kq^{k-1}(2q^2+1)}{6}.
		\]
	\end{prop}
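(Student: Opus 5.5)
The plan is to exchange the order of summation and reduce everything to one double sum over $\mathbb{Z}_q\times\mathbb{Z}_q$. Throughout I take $k\ge 2$, as in the definition of the pair-Lee weight function. In that case, for each $i\in[k]$, the indices $i$ and $i+1$ (with $u_{k+1}=u_1$) are distinct coordinates.

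Writing $w_p(\boldsymbol{u})=\sum_{i=1}^{k}\max\{w_L(u_i),w_L(u_{i+1})\}$, we get
\[
\sum_{\boldsymbol{u}\in\mathbb{Z}_q^k}w_p(\boldsymbol{u})=\sum_{i=1}^{k}\sum_{\boldsymbol{u}\in\mathbb{Z}_q^k}\max\{w_L(u_i),w_L(u_{i+1})\}.
\]
As $\boldsymbol{u}$ ranges over $\mathbb{Z}_q^k$, the pair $(u_i,u_{i+1})$ takes every value $(a,b)\in\mathbb{Z}_q^2$ exactly $q^{k-2}$ times. Hence the total equals $k\,q^{k-2}S$, where
\[
S\triangleq\sum_{a,b\in\mathbb{Z}_q}\max\{w_L(a),w_L(b)\}.
\]
It therefore suffices to show $S=\frac{q(2q^2+1)}{6}$.

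To evaluate $S$, I would use the Lee weight distribution on $\mathbb{Z}_q$ for even $q$. Weight $0$ and weight $q/2$ each occur once, and each weight $w\in\{1,\ldots,q/2-1\}$ occurs twice. Consequently $N(w)\triangleq|\{a:w_L(a)\le w\}|=2w+1$ for $0\le w<q/2$, and $N(q/2)=q$. Next I would apply the tail-sum identity
\[
\sum_{a,b}\max\{w_L(a),w_L(b)\}=\sum_{w\ge1}\big|\{(a,b):\max\{w_L(a),w_L(b)\}\ge w\}\big|.
\]
This gives
\[
S=\sum_{w=1}^{q/2}\big(q^2-N(w-1)^2\big)=\frac{q}{2}\,q^2-\sum_{j=0}^{q/2-1}(2j+1)^2.
\]

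The step needing the most care is summing the odd squares. For this I would use $\sum_{j=0}^{n-1}(2j+1)^2=\frac{n(2n-1)(2n+1)}{3}$ with $n=q/2$, which gives $\frac{q(q-1)(q+1)}{6}$. Therefore
\[
S=\frac{q^3}{2}-\frac{q^3-q}{6}=\frac{q(2q^2+1)}{6},
\]
and the total is $k\,q^{k-2}\cdot\frac{q(2q^2+1)}{6}=\frac{kq^{k-1}(2q^2+1)}{6}$, as claimed.

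The hypothesis $q=2^m$ is only used through the evenness of $q$. I would remark that the formula fails for $k=1$: there the single pair is $(u_1,u_1)$, and the sum becomes $\sum_a w_L(a)=q^2/4$. So the standing assumption $k\ge2$ is essential.
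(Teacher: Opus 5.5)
Your proof is correct and is essentially the paper's argument: you interchange the sums, use that each $(a,b)\in\mathbb{Z}_q^2$ occurs $q^{k-2}$ times as $(u_i,u_{i+1})$, and evaluate the resulting double sum through the cumulative counts $2w+1$ (and $q$ at $w=q/2$). The only difference is that you use the tail-sum identity and a sum of odd squares, whereas the paper computes the point counts $A_t^2-A_{t-1}^2$ (namely $8t$ and $4l-1$ with $q=2l$) and sums $tN_t$; this is just summation by parts, and your remark that $k\ge 2$ is needed matches the paper's implicit use of ``the remaining $k-2$ coordinates.''
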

	
	\begin{IEEEproof}
		Recall that the symbol-pair Lee weight of
		$\boldsymbol{u}=(u_1,\ldots,u_k)\in\mathbb{Z}_q^k$ is
		\[
		w_{p}(\boldsymbol{u})
		=
		\sum_{i=1}^{k}
		\max\!\left\{
		w_L(u_i),\,w_L(u_{i+1})
		\right\},
		\]
		where $u_{k+1}=u_1$.
		Then,
		\begin{align}
			\sum_{\boldsymbol{u}\in\mathbb{Z}_q^k}
			w_{p}(\boldsymbol{u})
			&=
			\sum_{\boldsymbol{u}\in\mathbb{Z}_q^k}
			\sum_{i=1}^{k}
			\max\!\left\{
			w_L(u_i),w_L(u_{i+1})
			\right\}
			\nonumber\\
			&=
			\sum_{i=1}^{k}
			\sum_{\boldsymbol{u}\in\mathbb{Z}_q^k}
			\max\!\left\{
			w_L(u_i),w_L(u_{i+1})
			\right\}.
			\label{eq:pair_sum1}
		\end{align}
		Fix an index $i\in[k]$. Since the remaining $k-2$ coordinates are arbitrary, every ordered pair
		$(a,b)\in\mathbb{Z}_q^2$
		appears exactly $q^{k-2}$ times as $(u_i,u_{i+1})$. Therefore,
		\[
		\sum_{\boldsymbol{u}\in\mathbb{Z}_q^k}
		\max\!\left\{
		w_L(u_i),w_L(u_{i+1})
		\right\}
		=
		q^{k-2}T(q),
		\]
		where
		\[
		T(q)
		=
		\sum_{a,b\in\mathbb{Z}_q}
		\max\!\left\{
		w_L(a),w_L(b)
		\right\}.
		\]
		
		It remains to evaluate $T(q)$. Since $q=2^m$ is even, write
		$q=2l$. The possible Lee weights are
		$0,1,\ldots,l$ with multiplicities $c_0=c_l=1,\,
		c_j=2,\,
		1\le j\le l-1.$
		
		For $0\le t\le l$, let $A_t
		=
		\#\{a\in\mathbb{Z}_q:\,w_L(a)\le t\}.$
		Then
		\[
		A_t=
		\begin{cases}
			2t+1,&0\le t\le l-1,\\
			2l,&t=l.
		\end{cases}
		\]
		
		Let $N_t
		=
		\#\{(a,b)\in\mathbb{Z}_q^2:
		\max(w_L(a),w_L(b))=t\}.$
		Since $N_t=A_t^2-A_{t-1}^2,$
		we obtain
		\[
		N_t=
		\begin{cases}
			8t,&1\le t\le l-1,\\
			4l-1,&t=l.
		\end{cases}
		\]
		Therefore,
		\begin{align}
			T(q)
			&=
			\sum_{t=1}^{l}tN_t
			\nonumber\\
			&=
			8\sum_{t=1}^{l-1}t^2
			+
			l(4l-1)
			\nonumber\\
			&=
			8\cdot
			\frac{(l-1)l(2l-1)}{6}
			+
			l(4l-1)
			\nonumber\\
			&=
			\frac{8l^3+l}{3}
			=
			\frac{q(2q^2+1)}{6}.
			\label{eq:Tq}
		\end{align}
		
		Substituting \eqref{eq:Tq} into \eqref{eq:pair_sum1} yields
		\[
		\sum_{\boldsymbol{u}\in\mathbb{Z}_q^k}
		w_{p}(\boldsymbol{u})
		=
		kq^{k-2}
		\cdot
		\frac{q(2q^2+1)}{6},
		\]
		which simplifies to
		\[
		\sum_{\boldsymbol{u}\in\mathbb{Z}_q^k}
		w_{p}(\boldsymbol{u})
		=
		\frac{kq^{k-1}(2q^2+1)}{6}.
		\]
	\end{IEEEproof}


\begin{thebibliography}{1}
  \bibitem{Lenz}
  A. Lenz, R. Bitar, A. Wachter-Zeh, and E. Yaakobi, ``Function-correcting codes,'' IEEE Transactions on Information Theory, vol. 69, no. 9, pp. 5604-5618, 2023.	
  	
  \bibitem{Cassuto}
  Y. Cassuto and M. Blaum, ``Codes for symbol-pair read channels,'' IEEE
  Trans. Inf. Theory, vol. 57, no. 12, pp. 8011-8020, Dec. 2011.
  
  \bibitem{XLC}
  Q. Xia, H. Liu and B. Chen, ``Function-Correcting Codes for Symbol-Pair Read Channels,'' in IEEE Transactions on Information Theory, vol. 70, no. 11, pp. 7807-7819, Nov. 2024.
  
  \bibitem{PAA}
  Kumar, Pratyush, Ankur Singh, and Abhay Kumar Singh. ``Lee metric codes for symbol-pair read channels,'' Discrete Mathematics 349.3, 114841, 2026.
  
  \bibitem{CW} 
  J. C.Y. Chiang and J. K. Wolf, ``On channels and codes for the Lee metric,'' Information and Control, vol. 19, no. 2, pp. 159–173, 1971.
  
  \bibitem{KN}
  K. Nakamura, ``A class of error correcting codes for DPSK channels,'' in Proceedings of the International Conference on Communications (ICC '79), vol. 3, p. 45, 1979.
  
  \bibitem{AB}
  A. Barg and A. Mazumdar, ``Codes in permutations and error correction for rank modulation,'' IEEE Transactions on Information Theory, vol. 56, no. 7, pp. 3158-3165, 2010.
  
  \bibitem{GA}
  Gyanendra K. Verma, Abhay Kumar Singh, ``On Function-Correcting Codes in the Lee Metric,''  Available on arXiv:2507.17654 [cs.IT], July 2025. 
  
  \bibitem{HRS}
  K. Hareesh, N. T. Rashid Ummer and B. Sundar Rajan, ``Plotkin-like Bound and Explicit Function-Correcting Code Constructions for Lee Metric Channels'', IEEE Transactions on Information Theory, vol. $72$, no.$8$, pp. $5639-5657$, Aug. 2026.
  
  \bibitem{PR}
  R. Premlal and B. Sundar Rajan, ``On function-correcting codes,'' in IEEE Transactions on Information Theory, vol. 71, no. 8, pp. 5884-5897, Aug. 2025.
  
  \bibitem{LS}
  H. Ly and E. Soljanin, ``On the Redundancy of Function-Correcting Codes over Finite Fields,'' 2025 13th International Symposium on Topics in Coding (ISTC), Los Angeles, CA, USA, 2025, pp. 1-5.
  
  \bibitem{ZXZG}
  Y. Zhang, Z. Xu, X. Zhang and G. Ge, ``Optimal Redundancy of Function-Correcting Codes,'' in IEEE Transactions on Information Theory, vol. 71, no. 12, pp. 9458-9467, Dec. 2025.
  
  \bibitem{CSFH}
  C. Rajput, B. S. Rajan, R. Freij-Hollanti and C. Hollanti, ``Function-Correcting Codes for Locally Bounded Functions,'' 2025 IEEE Information Theory Workshop (ITW), Sydney, Australia, 2025, pp. 851-856.
  
  \bibitem{CSFH2}
  C. Rajput, B. S. Rajan, R. Freij-Hollanti, and C. Hollanti. ``Function-Correcting Partition codes,'' Available on arXiv:2601.06450 [cs.IT], Jan. 2026.
  
  \bibitem{CSFH3}
  C. Rajput, B. Sundar Rajan, R. Freij-Hollanti and C. Hollanti, ``Function-Correcting Codes With Data Protection,'' IEEE Transactions on Information Theory, vol. $72$, no.$7$, pp. $4860-4880$, July 2026.
  
  \bibitem{SAARB}
  Swaraj Sharma Durgi,  Anjana A. Mahesh, Anupriya Kumari, Rajlaxmi Pandey, and B. S. Rajan, ``Function-correcting codes with optimal data protection for Hamming code membership,'' Available on arXiv:2602.21932 [cs.IT], Feb. 2026.
  
  \bibitem{AAS}
  Adityawardhan Yadava, Anjana A. Mahesh, and Swaraj Sharma Durgi, ``Function-Correction with Optimal Data Protection for the General Hamming Code Membership.'' Available on arXiv:2605.14023 [cs.IT], May 2026. 
  
  \bibitem{ASAK}
  Anamika Singh, and Abhay Kumar Singh, ``Data Protection in Function-Correcting Symbol-Pair Codes: Redundancy Bounds and
  Protection Profiles,'' Available on arXiv:2609.10989 [cs.IT], September 2026.
  
  
  \bibitem{AAY}
  Anamika Singh, Abhay Kumar Singh, and Eitan Yaakobi, ``Function-correcting codes for b-symbol read channels,'' Available on arXiv:2503.12894 [cs.IT], March 2025.
  
  \bibitem{SR}
  Sachin S. H and B. S. Rajan, ``On Plotkin Bound for Function-Correcting Codes for b-Symbol Read Channels,'' 2025 IEEE Information Theory Workshop (ITW), Sydney, Australia, 2025, pp. 698-703.
  
  \bibitem{LL}
  Huiying Liu, Hongwei Liu, ``Function-correcting codes with homogeneous distance,'' Finite Fields and Their Applications, Volume 112, 2026.
  
  
  \bibitem{AA}
  Anamika Singh and Abhay Kumar Singh, ``Function-Correcting Codes for Insertion-Deletion Channel,'' Available on arXiv:2512.07243 [cs.IT], Dec. 2025.
  
  \bibitem{FCBSC}
  G. K. Verma, A. Singh and A. Kumar Singh, ``Function-correcting b-symbol codes for locally $(\lambda, \rho, b)$-functions,'' IEEE Transactions on Information Theory, vol. 72, no. 1, pp. 331-341, Jan. 2026.
\end{thebibliography}
\end{document}